\newcommand{\be}{\begin{eqnarray}}
\newcommand{\ee}{\end{eqnarray}}
\newcommand{\bez}{\begin{eqnarray*}}
\newcommand{\eez}{\end{eqnarray*}}
\renewcommand{\d}{\mathrm{d}}
\newcommand{\bd}{\bar{\mathrm{d}}}
\newcommand{\imag}{\mathrm{i}}
\theoremstyle{plain}
\newtheorem{theorem}{Theorem}[section]
\newtheorem{lemma}[theorem]{Lemma}
\newtheorem{proposition}[theorem]{Proposition}
\newtheorem{corollary}[theorem]{Corollary}
\theoremstyle{definition}
\newtheorem{remark}[theorem]{Remark}
\newtheorem{example}[theorem]{Example}
\numberwithin{equation}{section}
\numberwithin{theorem}{section}
\begin{document}

\title{\textbf{A vectorial Darboux transformation for \\ integrable matrix versions of the \\ Fokas-Lenells equation}} 
\author{
 {\sc Folkert M\"uller-Hoissen}$\,^a$ and {\sc Rusuo Ye}$\,^b$ \\ 
 \small 
 $^a$ Institut für Theoretische Physik, Friedrich-Hund-Platz 1,
 37077 G\"ottingen, Germany \\
 \small E-mail: folkert.mueller-hoissen@theorie.physik.uni-goettingen.de \\
 \small $^b$ College of Mathematics, Wenzhou University, Wenzhou	325035, PR China \\
 \small E-mail: rusuoye@163.com
}

\date{} 

\maketitle

\begin{abstract}
Using bidifferential calculus, we derive a vectorial binary Darboux transformation for an integrable matrix version of the first negative 
flow of the Kaup-Newell hierarchy. A reduction from the latter system 
to an integrable matrix version of the Fokas-Lenells equation is then shown to inherit a corresponding vectorial Darboux transformation. Matrix soliton solutions are derived from the trivial seed solution. 
Furthermore, the Darboux transformation is exploited to determine in a systematic way exact solutions of the two-component vector Fokas-Lenells equation on a plane wave background. This comprises breathers, dark solitons, rogue waves and ``beating solitons". 
\end{abstract}    

%\noindent
%\textbf{Keywords:} Integrable systems, Solitons, Fokas-Lenells equation, Darboux transformation, Bi\-differential calculus

\section{Introduction}
By a transformation of variables, the original \emph{Fokas-Lenells (FL) equation} \cite{Fokas95,Lene+Foka09} attains the simpler form \cite{Lene09}
\be
     u_{xt} - u - 2 \imag \, |u|^2 u_x = 0  \, , \label{FLeq}
\ee
where $u$ is a complex function of independent real variables $x$ and $t$, and a subscript $x$ or $t$ indicates a partial derivative with respect to $x$,  respectively $t$. 
In particular, it models propagation of ultrashort light pulses in birefringend optical fibers, taking certain nonlinear effects into account. 
For some integrability features of (\ref{FLeq}), see \cite{MH+Ye25} and the references cited there. The FL equation (\ref{FLeq}) possesses a generalization to integrable matrix versions 
\be
  u_{xt} - u - \imag \, (u \, \sigma_1 u ^\dagger \sigma_2 u_x + u_x \sigma_1 u^\dagger \sigma_2 u ) = 0  \, . \label{mFLeq}
\ee
Here $u$ is now a matrix of any size, where the components are functions of $x$ and $t$, $u^\dagger$ denotes the conjugate transpose of $u$, and $\sigma_1, \sigma_2$ are any constant involutary Hermitian matrices of suitable size. 
In the case, where $u$ is a vector, (\ref{mFLeq}) already appeared in \cite{Guo+Ling12} (see (82) therein), also see \cite{ZYCW17,LFZ18,Yang+Zhang18,KXM18,CYSGB18,YZCBG19,Xu+Chen19,Yue+Chen21,Gerd+Ivan21,ZHL21,Ling+Su24} for the two-component vector case, and \cite{Wang+Chen19} for the three-component case. Some results on the $2 \times 2$ matrix case have been 
obtained in \cite{ZSL22}. 

In this work, we present a \emph{vectorial} Darboux transformation for (\ref{mFLeq}), which allows to generate exact solutions from a given ``seed" solution $u$. It involves a linear system that depends on $u$ in such a way that its compatibility (or integrability) condition is satisfied if $u$ solves (\ref{mFLeq}), and the dependent variables of the linear system are $n$-component vectors, 
where $n$ may be thought of as corresponding to the number of solitons in the generated class of solutions of (\ref{mFLeq}) (though already some $n=1$ solutions can be regarded as being composed of more elementary solutions). For $n=1$, the linear system reduces to a Lax pair, depending on a ``spectral parameter". In our linear system, this parameter is promoted to a ``spectral matrix". 

The original Darboux transformation \cite{Matv+Sall91} has to be iterated in order to generate multiple soliton solutions of an integrable equation. Moreover, in each step a different spectral parameter value has to be used. Accordingly, important classes of solutions are only obtained indirectly by taking special limits, where   
a priori different spectral parameters tend to the same value. This concerns 
``multiple pole solutions" (in the language of the inverse scattering method), which include ``positons" of the Korteweg-deVries and 
related equations, for which V.B. Matveev set up a \emph{generalized Darboux transformation} \cite{Matv92,Matv02}. Later a corresponding generalization of the classical Darboux transformation has also been formulated for the Nonlinear Schr\"odinger (NLS) and related integrable equations in order to generate higher order rogue waves and (multi-) dark soliton solutions (see, e.g., \cite{GLL12,LZG15}, and  \cite{LFZ18,Yang+Zhang18} for applications to the case of the two-component vector FL equation).   

A vectorial Darboux transformation does not require iteration in order to generate multi-soliton solutions, but does the job in a single step. Furthermore, it offers a direct way to the abovementioned important classes of solutions.\footnote{The ``Cauchy matrix approach", applied in \cite{LLZ25} to the scalar FL equation, may be regarded as a vectorial Darboux transformation restricted to the case of trivial seed.}		
Moreover, it allows us, at least to some extent and in a tractable way, to exhaust the various possibilities of exact solutions within this framework. A beautiful aspect of 
a vectorial Darboux transformation is a simple superposition rule on the level of data (consisting of a ``spectral matrix" and the corresponding solution of the linear system)\footnote{In case of a vectorial \emph{binary} Darboux transformation, we are dealing with \emph{two} spectral matrices and \emph{two} linear systems.} determining a solution of the integrable nonlinear equation for a given seed solution, see Remark~\ref{rem:FL_superposition} below.

The usefulness of a \emph{vectorial} version of the classical Darboux transformation has also been demonstrated by the ``non-recursive Darboux transformation" in \cite{Chen+Miha15}, which has been applied in \cite{YZCBG19} to construct higher order rogue wave solutions of the two-component vector FL equation. It should be noted, however, that it lacks the, in our opinion important, step of promoting the spectral parameter more consequently to a matrix and thus misses the superposition rule for solutions with the same seed. In any case, the  ``non-recursive Darboux transformation" is a step towards a more complete vectorial Darboux transformation, as proposed earlier by Ma\~{n}as \cite{Manas96} for the NLS equation.
   
A key feature of a vectorial Darboux transformation is the appearance of a Sylvester equation for the so-called Darboux potential $\Omega$. The input 
for this algebraic matrix equation is the spectral matrix and a matrix built from the solution of the linear system. When the spectral matrix is diagonal and the Sylvester equation has a 
unique $n \times n$ matrix solution, 
it is expected that the generated solution of the respective nonlinear integrable equation can also be reached by an $n$-fold iteration of a classical Darboux transformation (using different spectral parameters). Multiple pole solutions and higher order rogue waves are obtained by choosing the spectral matrix as a Jordan block, and the nature of the generated class of solutions depends on (properties of) its eigenvalue. 

Important classes of solitons of NLS-type equations are obtained, however, via the complementary case, where the Sylvester equation does not (completely) determine the Darboux potential and imposes constraints on the solution of the respective linear system.\footnote{That the solution of the linear system can indeed be reduced in such a way that these constraints are satisfied, appears to be a bit of a miracle.}  
Vectorial Darboux transformations, like those obtained via bidifferential calculus \cite{DMH00a} (also see, in particular, \cite{DMH13SIGMA,DMH20dc}), come along with additional differential equations that completely determine the Darboux potential in such a ``degenerate case".  Corresponding solution classes will be worked out explicitly for the two-component vector FL equation. 

In the present work, we exploit the vectorial Darboux transformation (only) with seed solutions, for which the linear system can be turned into a system of linear PDEs with constant coefficients. Then, generically, the general solution is a linear combination of exponentials of a linear expression in $x$ and $t$. But when the corresponding characteristic equation has a multiple root, the linear system ``degenerates" and the general solution also involves a polynomial (or polynomials) of the independent variables. If the seed solution is an exponential describing a plane wave,  and if we have a root of maximal order, then the resulting solution of the vector FL equation (or, more generally, some NLS-type equation) is quasi-rational, i.e., a rational expression multiplied by the plane wave exponential. If the norm of such a solution is localized on the uniform background, it represents a rogue wave (or freak wave), a case that attracted enormous interest, since it describes a wave that suddenly  appears (``from nowhere") and then disappears again. 

There is another, independent mechanism, how rational expressions can enter the solution stage. This is at work in those cases, where the Sylvester equation does not possess a unique solution, which is true if and only if the spectral matrix does \emph{not} satisfy the so-called \emph{spectrum condition}. 

Though essentially self-contained, this work is based on \cite{MH+Ye25} (which, in turn, partly draws from \cite{LLZ25}), where a vectorial Darboux transformation for the \emph{scalar} FL equation has been derived, using bidifferential calculus.
In Section~\ref{sec:bidiff->KN-1} we show that, by choosing a suitable (realization of the) bidifferential calculus, a simple ``Miura transformation" equation (see (\ref{Miura})) is essentially turned into a matrix generalization of the first ``negative flow" of the Kaup-Newell (KN) hierarchy,
\be
u_{xt} - u - \imag \, (u \, v \,  u_x + u_x \, v \, u ) = 0  \, , \qquad
v_{xt} - v + \imag \, (v \, u \,  v_x + v_x \, u \, v ) = 0 \, ,
\label{KN-1} 
\ee
of which (\ref{mFLeq}) is a reduction, via 
\be 
v = \sigma_1 u^\dagger \sigma_2 \, .   \label{red}
\ee 
The crucial point is that the Miura transformation equation possesses a vectorial binary Darboux transformation (bDT). A bDT allows to generate from a given (``seed") solution a new family of solutions, using solutions of a linear system and also of another (``adjoint") linear system. The bDT for (\ref{KN-1}) is derived in Section~\ref{sec:KN-1_bDT}. 

A difficult task is to find an extension of the reduction (\ref{red}) such that the bDT generates solutions of the matrix FL equation (\ref{mFLeq}), i.e., that it preserves the reduction condition (\ref{red}). This problem is solved in Section~\ref{sec:mFL_DT}, essentially following the steps in the scalar case \cite{MH+Ye25}.  In Section~\ref{sec:mFL_DT} we also derive (multi-) matrix soliton solutions from the trivial solution $u=0$. 

Section~\ref{sec:vFL} deals with the vector version of the matrix FL equation, obtained when the matrix $u$ in (\ref{mFLeq}) reduces to a vector (also see \cite{Guo+Ling12,ZYCW17,LFZ18,Yang+Zhang18,KXM18,Wang+Chen19,YZCBG19,Gerd+Ivan21,Ling+Su24}). In Section~\ref{sec:2vFL}, we finally concentrate on the two-component vector case. The vectorial Darboux transformation is used to generate all types of soliton solutions on a plane wave background solution in a systematic way. 

Finally, Section~\ref{sec:concl} contains some concluding remarks.

\section{From the Miura transformation in bidifferential calculus to matrix versions of the first negative flow of the KN hierarchy}
\label{sec:bidiff->KN-1}
A \emph{graded associative algebra} is an associative algebra 
$\boldsymbol{\Omega} = \bigoplus_{r \geq 0} \boldsymbol{\Omega}^r$
over a field $\mathbb{K}$ of characteristic zero, where $\mathcal{A} = \boldsymbol{\Omega}^0$ is an associative 
algebra over $\mathbb{K}$ and $\boldsymbol{\Omega}^r$, $r \geq 1$, are $\mathcal{A}$-bimodules such that 
$\boldsymbol{\Omega}^r \, \boldsymbol{\Omega}^s \subseteq \boldsymbol{\Omega}^{r+s}$. 
Elements of $\boldsymbol{\Omega}^r$ are called $r$-forms.
A \emph{bidifferential calculus} is a unital graded associative algebra $\boldsymbol{\Omega}$, supplied
with two $\mathbb{K}$-linear graded derivations 
$\d, \bd : \boldsymbol{\Omega} \rightarrow \boldsymbol{\Omega}$
of degree one (so that $\d \boldsymbol{\Omega}^r \subseteq \boldsymbol{\Omega}^{r+1}$,
$\bd \boldsymbol{\Omega}^r \subseteq \boldsymbol{\Omega}^{r+1}$), and such that
\bez
 \d^2 = 0 \, , \qquad  \bd^2 = 0 \, , \qquad  \d \bd + \bd \d = 0  \, .   
\eez
In the following, we show that the matrix versions (\ref{KN-1}) of the first negative flow of the KN hierarchy arise from the simple equation 
\be
    \d \phi = (\bd g) \, g^{-1}  \,  ,    \label{Miura}
\ee
for a special realization of the bidifferential calculus.
The interest in this equation stems from the fact that it possesses a bDT \cite{MH+Ye25}, which will be recalled in Section~\ref{sec:KN-1_bDT}. It allows to generate from a given solution $(\phi,g)$ a family of solutions $(\phi',g')$, depending on a 
number of parameters.   

Let $\mathcal{A}$ now be the associative algebra of smooth functions of independent real variables $x$ and $t$. 
We choose $\boldsymbol{\Omega} = \mathrm{Mat}(\mathcal{A}) \otimes \bigwedge(\mathbb{C}^2)$ and a basis $\xi_1,\xi_2$ of the Grassmann algebra $\bigwedge(\mathbb{C}^2)$. $\mathrm{Mat}(\mathcal{A})$ denotes the algebra of all matrices over $\mathcal{A}$, where the product of two matrices is set to zero if the sizes do not match.  
Defining $\d$ and $\bd$ on $\mathrm{Mat}(\mathcal{A})$, they extend in an obvious way to $\Omega$ (treating $\xi_1,\xi_2$ as ``constants" with respect to $\d$ and $\bd$).

For each $n>1$, let $J_n$ be a constant $n \times n$ matrix,  $J_n \neq I_n$, where $I_n$ denotes the $n \times n$ identity matrix. 
For an $m \times n$ matrix $F$ over $\mathcal{A}$, we set
\bez
       \d F = F_x \, \xi_1 + \frac{1}{2} (J_m F - F J_n) \, \xi_2 \, , \qquad
       \bd F = \frac{1}{2} (J_m F - F J_n) \, \xi_1 + F_t \, \xi_2 
\eez
(also see \cite{DKMH11acta,MH+Ye25}). 
In the following, we fix $m >1$. 
Let $\phi$ and $g$ be $m \times m$ matrices over $\mathcal{A}$, 
$g$ invertible. We write $J$ instead of $J_m$. 
Then the Miura (transformation) equation (\ref{Miura}) becomes the 
``Miura system"
\be
  \phi_x = \frac{1}{2} \, [J,g] \, g^{-1} \, , \qquad g_t g^{-1} = \frac{1}{2} \, [J,\phi] 
  \, .   \label{Miura_sys}
\ee

\begin{proposition}
\label{prop:det(g)_t=0}
(\ref{Miura_sys}) implies that $\det(g)$ does not depend on $t$.
\end{proposition}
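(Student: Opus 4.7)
The plan is to apply Jacobi's formula for the derivative of a determinant and then exploit the commutator form of the second equation in (\ref{Miura_sys}).

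First, recall that for any smooth family of invertible matrices $g(t)$, one has
\[
  \partial_t \det(g) \;=\; \det(g) \; \mathrm{tr}\!\bigl(g_t \, g^{-1}\bigr).
\]
So the whole statement reduces to showing that $\mathrm{tr}(g_t g^{-1}) = 0$.

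Next, I would invoke the second equation of the Miura system (\ref{Miura_sys}), namely $g_t g^{-1} = \tfrac{1}{2}[J,\phi]$. Since the trace of any commutator of finite-dimensional matrices vanishes, $\mathrm{tr}([J,\phi]) = 0$, and therefore $\mathrm{tr}(g_t g^{-1}) = 0$. Substituting into Jacobi's formula gives $\partial_t \det(g) = 0$, proving the claim.

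There is essentially no obstacle here; the argument is a one-line computation. The only point worth noting is that one is using the trace cyclicity property, which is valid because $J$ and $\phi$ are both $m \times m$ matrices over $\mathcal{A}$ (the sizes match), so the product $J\phi$ and $\phi J$ are defined in $\mathrm{Mat}(\mathcal{A})$ and their traces coincide. The first equation of (\ref{Miura_sys}) plays no role in this particular proposition.
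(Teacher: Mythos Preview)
Your proof is correct and rests on the same core observation as the paper's: the second equation of (\ref{Miura_sys}) expresses $g_t g^{-1}$ as a commutator, whose trace vanishes. The paper packages this slightly differently by first passing to $\tilde{g} = \det(g)^{-1/m} g$ and then taking the trace, but your direct application of Jacobi's formula is cleaner and arrives at the same conclusion.
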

\begin{proof}
In terms of $\tilde{g} = \det(g)^{-1/m} \, g$, so that $\det(\tilde{g}) = 1$, (\ref{Miura_sys}) becomes
\bez
  \phi_x = \frac{1}{2} \, [J,\tilde{g}] \, \tilde{g}^{-1} \, , \qquad
  \tilde{g}_t \tilde{g}^{-1} + \frac{1}{m} (\log \det(g))_t \, I_m = \frac{1}{2} \, [J,\phi] \, .
\eez
By taking the trace of the second equation, we obtain $\det(g)_t = 0$.
\end{proof}

Choosing
\bez
       J =\mbox{block-diagonal}(I_{m_1}, -I_{m_2})  \, , \qquad  m=m_1+m_2 \, ,
\eez
and decomposing the matrices $\phi$ and $g$ accordingly,
\bez
       \phi =  \left( \begin{array}{cc} \phi_{11} & \phi_{12} \\
  	\phi_{21} & \phi_{22} \end{array} \right) \, , \qquad
      g = \left( \begin{array}{cc} g_{11} & g_{12} \\
    	g_{21} & g_{22} \end{array} \right) \, , 
\eez
assuming invertibility of $g_{22}$ and also of the Schur complement of $g_{22}$ in $g$,
\bez
        S = g_{11} - g_{12} \, g_{22}^{-1} g_{21} \, , 
\eez
we have
\bez
       g^{-1} = \left( \begin{array}{cc} S^{-1} & - S^{-1} g_{12} \, g_{22}^{-1} \\
       - g_{22}^{-1} g_{21} S^{-1} & g_{22}^{-1} \big( I_{m_2} 
       + g_{21} S^{-1} g_{12} \, g_{22}^{-1} \big) 
          \end{array} \right) \, ,       
\eez
and the Miura system (\ref{Miura_sys}) takes the form
\be
&&  \left( \begin{array}{cc} g_{11} & g_{12} \\
    	g_{21} & g_{22} \end{array} \right)_t 
  = \left( \begin{array}{cc} \phi_{12} \, g_{21} & \phi_{12} \,  g_{22} \\
  	- \phi_{21} \, g_{11} & - \phi_{21} \, g_{12} \end{array} \right)\, , \label{Miura_sys1} \\
 && \left( \begin{array}{cc} \phi_{11} & \phi_{12} \\
  	\phi_{21} & \phi_{22} \end{array} \right)_x 
 = \left( \begin{array}{cc} - g_{12} \, g_{22}^{-1} \, g_{21} S^{-1} & 
       g_{12} \, g_{22}^{-1} \big( I_{m_2} + g_{21} S^{-1} g_{12} \, g_{22}^{-1} \big)  \\
 	- g_{21} S^{-1} & g_{21} S^{-1} g_{12} \,  g_{22}^{-1} \end{array} \right)  \, .	 
                                           \label{Miura_sys2} 	
\ee

\begin{remark}
The Miura system, (\ref{Miura_sys1}) and (\ref{Miura_sys2}), is invariant under $g \mapsto f \, g$ with 
any nowhere vanishing function $f(x)$. Any physical equation derived from it, should therefore only 
involve quotients of components of $g$. 
According to Proposition~\ref{prop:det(g)_t=0}, for each solution we can arrange that $\det(g)=1$ 
by a scaling.
\hfill $\Box$ 
\end{remark}

Let us introduce
\be
   u = \phi_{21} \, , \qquad 
   v = \imag \, g_{12} \, g_{22}^{-1}  \, ,   \label{u,v}
\ee
which are matrices of size $m_2 \times m_1$, respectively $m_1 \times m_2$.
The following proposition generalizes a result of \cite{LLZ25,MH+Ye25} from the 
scalar case to the arbitrary size matrix case.  

\begin{proposition}
\label{prop:cmFL_sol}	
As a consequence of (\ref{Miura_sys1}) and  (\ref{Miura_sys2}), $u$ and $v$ satisfy the corresponding matrix version (\ref{KN-1}) of the first negative flow of the KN hierarchy.
\end{proposition}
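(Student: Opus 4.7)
The plan is a direct computation: express the building blocks of $u$, $v$, $u_x$, $v_t$ in terms of the blocks of $\phi$ and $g$, then differentiate once more using (\ref{Miura_sys1}) and (\ref{Miura_sys2}) and identify the result.

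First I would record the basic identities that follow from (\ref{u,v}) and (\ref{Miura_sys2}):
\[
   u_x = -g_{21}\, S^{-1}, \qquad g_{12}\, g_{22}^{-1} = -\imag v, \qquad
   g_{11}\, S^{-1} = I_{m_1} + g_{12}\, g_{22}^{-1}\, g_{21}\, S^{-1} = I_{m_1} + \imag\, v\, u_x ,
\]
where the last identity is just the definition of the Schur complement $S$ rewritten. Next I would compute $S_t$ using (\ref{Miura_sys1}):
\[
   S_t = g_{11,t} - g_{12,t}\, g_{22}^{-1} g_{21} + g_{12}\, g_{22}^{-1} g_{22,t}\, g_{22}^{-1} g_{21} - g_{12}\, g_{22}^{-1} g_{21,t}.
\]
Substituting $g_{11,t}=\phi_{12}g_{21}$, $g_{12,t}=\phi_{12}g_{22}$, $g_{21,t}=-u\,g_{11}$, $g_{22,t}=-u\,g_{12}$, the $\phi_{12}$ terms cancel and the remainder collapses to $S_t = g_{12}\,g_{22}^{-1}\, u\, S = -\imag\, v\, u\, S$; hence $(S^{-1})_t = \imag\, S^{-1} v\, u$.

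For the first equation of (\ref{KN-1}) I would then compute
\[
   u_{xt} = -g_{21,t}\, S^{-1} - g_{21}(S^{-1})_t = u\, g_{11}\, S^{-1} - \imag\, g_{21}\, S^{-1}\, v\, u .
\]
Using the identities above, $g_{21}\, S^{-1} = -u_x$ and $u\, g_{11}\, S^{-1} = u + \imag\, u\, v\, u_x$, so $u_{xt} = u + \imag(u v u_x + u_x v u)$, which is the first equation of (\ref{KN-1}).

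For the second equation, I would repeat the same pattern on the other side: from $v = \imag\, g_{12}\, g_{22}^{-1}$ and (\ref{Miura_sys1}) one gets $v_t = \imag\, \phi_{12} - \imag\, v\, u\, v$, and (\ref{Miura_sys2}) yields $\phi_{12,x} = -\imag v + v\, u_x\, v$ after inserting $g_{21}S^{-1}=-u_x$ and $g_{12}g_{22}^{-1}=-\imag v$. Differentiating $v_t$ in $x$ and equating $v_{xt}=v_{tx}$ (strong solutions of class $\mathcal{C}^2$, as noted at the end of the Introduction) gives $v_{xt} = v + \imag\, v\, u_x\, v - \imag(v u v)_x = v - \imag(v u v_x + v_x u v)$, which is the second equation of (\ref{KN-1}).

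The only nontrivial step is the cancellation in $S_t$ that produces the factor $u S$ on the right; the rest is bookkeeping of the block products and powers of $\imag$. The fact that $g_{22}$ and $S$ are only \emph{assumed} invertible is used implicitly throughout; strictly, the proposition is an identity on the open set where these invertibilities hold, on which $u$ and $v$ are well defined by (\ref{u,v}).
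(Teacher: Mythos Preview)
Your proof is correct and follows essentially the same route as the paper's own proof: the same identities $u_x=-g_{21}S^{-1}$, $g_{11}S^{-1}=I_{m_1}+\imag v u_x$, the key computation $S_t=-\imag v u S$, and the reduction of $v_t$ to $\imag\phi_{12}-\imag v u v$ followed by use of $\phi_{12x}=-\imag v(I_{m_2}+\imag u_x v)$. The only cosmetic difference is that you spell out the cancellation of the $\phi_{12}$ terms in $S_t$ and use $(S^{-1})_t$ rather than $S^{-1}S_tS^{-1}$; otherwise the arguments coincide.
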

\begin{proof}
The equation for $\phi_{21x}$ in (\ref{Miura_sys2}) reads 
\be
                  u_x =-  g_{21} S^{-1} \, .    \label{u_x}
\ee
With its help, we can write $S = g_{11} - \imag \, v \, u_x \, S$, so that
\bez
     g_{11} = ( I_{m_1} + \imag \, v \, u_x )  \, S \, .
\eez
Using (\ref{Miura_sys1}), we obtain
\bez
        S_t = - \imag \, v \, u \, S \, .
\eez
Now we have
\bez
        u_{xt} = -  g_{21t} S^{-1} +  g_{21} S^{-1} S_t S^{-1}  
        = u \,  g_{11} S^{-1} + \imag \, u_x \, v \, u 
        = u + \imag \, \big( u \, v \, u_x + u_x \, v \, u \big) \, ,
\eez
and
\bez
       v_{tx} &=& \imag \, \big( g_{12t} \, g_{22}^{-1} - g_{12} g_{22}^{-1} g_{22t} \, g_{22}^{-1} \big)_x 
      = \imag \, \big( \phi_{12} - \imag \, v \, u \, g_{12} \,  g_{22}^{-1} \big)_x 
      = \imag \, \big( \phi_{12} - v \, u \, v \big)_x  \\
      &=& v - \imag \, ( v_x \, u \, v + v \, u \, v_x \big) \, ,
\eez
where we used  
\bez
    \phi_{12x} = g_{12} \, g_{22}^{-1} \big( I_{m_2} 
                        + g_{21} S^{-1} g_{12} \, g_{22}^{-1} \big)
                     = -\imag \, v \, ( I_{m_2} + \imag \, u_x v )  \, .
\eez
\end{proof}

Moreover, any solution of (\ref{KN-1}) determines a solution of the Miura system.

\begin{lemma}
\label{lem:cmFL_sol}	
Let $(u,v)$ be a solution\footnote{A solution may not be defined everywhere on $\mathbb{R}^2$.} 
of (\ref{KN-1}). Then 
\bez
	&&   g_{22t} = \imag \, u \, v \, g_{22}  \, ,  \qquad
	g_{12} = - \imag \, v \, g_{22}  \, , \qquad
	g_{21} = - u_x S \, , \qquad
	g_{11} = (I_{m_1} + \imag \, v \, u_x) \, S  \, ,  \\
	&&  \phi_{21} = u \, , \qquad \phi_{12} = - \imag \, v_t + v \, u \, v  \, , \qquad
    \phi_{11} = - \imag \, \int v \, u_x \, dx \, , \qquad
	\phi_{22} = \imag \, \int u_x v \, dx  \, ,
\eez
where $S$ is an invertible solution of 
\bez
	S_t = - \imag \, v \, u \, S \, , 
\eez
determines a solution of the Miura system, (\ref{Miura_sys1}) and  (\ref{Miura_sys2}).
\end{lemma}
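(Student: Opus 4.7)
The plan is to verify directly that the explicit expressions in the lemma satisfy the eight matrix-block equations contained in (\ref{Miura_sys1}) and (\ref{Miura_sys2}). All of them reduce by elementary manipulations to the KN-1 system (\ref{KN-1}) together with the prescribed ODE $S_t = -\imag v u S$. The computation is, in effect, the one carried out in the proof of Proposition~\ref{prop:cmFL_sol} read backwards.

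First I would dispatch (\ref{Miura_sys2}). Two algebraic identities are immediate from the ansatz: $g_{12} g_{22}^{-1} = -\imag v$ and $g_{21} S^{-1} = -u_x$. Substituting these, the equations for $\phi_{21,x}$, $\phi_{11,x}$, and $\phi_{22,x}$ reduce to the $x$-derivatives of $u$, $-\imag \int v u_x\, dx$, and $\imag \int u_x v\, dx$ respectively, which match by inspection. The only nontrivial check is the one for $\phi_{12,x}$: differentiating $\phi_{12} = -\imag v_t + v u v$ in $x$ and using the second equation of (\ref{KN-1}) to rewrite $-\imag v_{xt}$, one obtains $-\imag v + v u_x v = -\imag v\, (I_{m_2} + \imag u_x v)$, as required.

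Next I would verify (\ref{Miura_sys1}). The prescribed ODE for $g_{22}$ is precisely $g_{22,t} = -\phi_{21} g_{12}$. The equation $g_{12,t} = \phi_{12} g_{22}$ follows immediately by differentiating $g_{12} = -\imag v g_{22}$ and substituting. For $g_{21,t} = -u g_{11}$ I would differentiate $g_{21} = -u_x S$, apply $S_t = -\imag v u S$ and the first KN-1 equation for $u_{xt}$; two $\imag u_x v u S$ terms cancel, leaving $-u\,(I_{m_1} + \imag v u_x) S$. The most involved step is $g_{11,t} = \phi_{12} g_{21}$: differentiating $(I_{m_1} + \imag v u_x) S$ and invoking both the ODE and the first KN-1 equation, the $\pm \imag v u S$ terms cancel to yield $\imag v_t u_x S - v u v u_x S$, matching $(-\imag v_t + v u v)(-u_x S)$.

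Two points deserve a brief comment. The formulas for $\phi_{11}$ and $\phi_{22}$ are indefinite $x$-antiderivatives and hence only defined up to functions of $t$; this causes no inconsistency because the Miura system constrains only $\phi_x$. The main obstacle is bookkeeping, in particular keeping track of signs and the noncommutative ordering of matrix factors in the $g_{11,t}$ and $g_{21,t}$ verifications; no identity beyond (\ref{KN-1}) and $S_t = -\imag v u S$ is needed.
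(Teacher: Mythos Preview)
Your proposal is correct and takes essentially the same approach as the paper: direct block-by-block verification, which the paper dismisses with ``easily verified.'' The only content the paper adds that you do not explicitly address is the remark that the linear first-order matrix ODEs for $g_{22}$ and $S$ admit invertible solutions (citing \cite{Misg+Bale75}); for $S$ this is already hypothesized in the lemma, but the existence of an invertible $g_{22}$ is technically part of what needs to be justified.
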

\begin{proof}
The linear first order matrix ODEs for $g_{22}$ and $S$ possess invertible solutions (see \cite{Misg+Bale75}, for example), 
wherever $u$ and $v$ are smooth. 
The statement of the lemma is then easily verified. 
\end{proof}

\section{A vectorial binary Darboux transformation for matrix versions of the first negative flow of the KN hierarchy}
\label{sec:KN-1_bDT}
We recall the following result of bidifferential calculus \cite{MH+Ye25}.

\begin{theorem}
\label{thm:Miura}
Given a bidifferential calculus with maps $\d, \bd$, 
let 0-forms $\Delta, \Gamma$ and 1-forms  $\kappa,\lambda$ satisfy  
\be
	&&  \bd \Delta + [\lambda , \Delta] = (\d \Delta) \, \Delta \, , \qquad  
	\bd \lambda + \lambda^2 = (\d \lambda) \, \Delta \, , \nonumber \\ 
	&&  \bd \Gamma - [\kappa , \Gamma] = \Gamma \, \d \Gamma  \, , \qquad  
	\bd \kappa - \kappa^2 = \Gamma \, \d \kappa \, .   \label{Delta,lambda,Gamma,kappa_eqs}
\ee
Let 0-forms $\phi$ and $g$, the latter assumed to be invertible, be a solution of the Miura equation (\ref{Miura})
and $\theta, \eta$ solutions of the linear equations 
\be
	\bd \theta = (\d \phi) \, \theta + (\d \theta) \, \Delta + \theta \, \lambda \, , \qquad
	\bd \eta = - \eta \, \d \phi + \Gamma \, \d \eta + \kappa \, \eta \, .   \label{linsys1}
\ee
Furthermore, let $\Omega$ be an invertible solution of the linear equations
\be
	&& \Gamma \, \Omega - \Omega \, \Delta = \eta \, \theta \, ,  \label{preSylv} \\
	&& \bd \Omega = (\d \Omega) \, \Delta - (\d \Gamma) \, \Omega + \kappa \, \Omega + \Omega \, \lambda + (\d \eta) \, \theta \, .
	\label{linsys2}
\ee
Then, if $\Omega$ is invertible, 
\be
	\phi' = \phi - \theta \, \Omega^{-1} \eta + C \, , \qquad
	g' = g - \theta \,  \Omega^{-1} \Gamma^{-1} \eta \, g \, ,
	\label{phi',g'}
\ee
where $\d C =0$, also solve the Miura equation (\ref{Miura}).
\hfill $\Box$
\end{theorem}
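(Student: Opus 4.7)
The plan is to verify directly that $(\phi', g')$ satisfies the Miura equation $\d \phi' = (\bd g')(g')^{-1}$ by computing both sides and matching them. I would first factor $g' = (I_m - X) \, g$ with $X := \theta \, \Omega^{-1} \Gamma^{-1} \eta$, so that $(g')^{-1} = g^{-1}(I_m - X)^{-1}$, and use $\bd g' = -(\bd X) g + (I_m - X) \bd g$ together with $(\bd g) g^{-1} = \d \phi$ to reduce the right-hand side to
\bez
  \bd g' \, (g')^{-1} = \big[(I_m - X) \, \d \phi - \bd X\big] (I_m - X)^{-1}.
\eez
Thus the target identity is equivalent to the inverse-free statement
\bez
  (\d \phi')(I_m - X) + \bd X = (I_m - X) \, \d \phi,
\eez
which, since $\d C = 0$, is a polynomial identity in $\theta, \eta, \Omega, \Gamma, \Delta, \lambda, \kappa$ and their $\d$- and $\bd$-derivatives.

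The second step is to expand both $\bd X$ and $\d(\theta \Omega^{-1} \eta)$ by the graded Leibniz rule, substituting $\bd \theta$ and $\bd \eta$ from (\ref{linsys1}), $\bd \Omega$ from (\ref{linsys2}), and using $\d(\Omega^{-1}) = - \Omega^{-1} (\d \Omega) \Omega^{-1}$ together with the analogous formula for $\Gamma^{-1}$. The remaining occurrences of $\d \Omega$ can be removed by applying $\d$ to the Sylvester-type constraint (\ref{preSylv}),
\bez
  (\d \Gamma) \, \Omega + \Gamma \, \d \Omega - (\d \Omega) \, \Delta - \Omega \, \d \Delta = (\d \eta) \, \theta + \eta \, \d \theta,
\eez
and the resulting $\d$-derivatives of $\Delta, \lambda, \Gamma, \kappa$ are in turn absorbed through the four quadratic identities (\ref{Delta,lambda,Gamma,kappa_eqs}).

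In the third step I would sort the resulting expression by the number of $\Omega^{-1}$ factors and, wherever a product $\eta \, \theta$ appears sandwiched between two $\Omega^{-1}$'s, replace it by $\Gamma \Omega - \Omega \Delta$ via (\ref{preSylv}); this should produce matching terms on both sides. The main obstacle I anticipate is the sheer bookkeeping: many noncommutative terms of similar shape arise, and the cancellations only succeed if (\ref{preSylv}) is inserted at precisely the right places and the four compatibility identities (\ref{Delta,lambda,Gamma,kappa_eqs}) fire exactly as intended. Since the scalar prototype of this statement has been established in \cite{MH+Ye25} by the same strategy, I expect no essentially new phenomenon to emerge in the matrix-valued setting, and the verification should reduce to a careful reorganization of terms respecting the noncommutative orderings of the ingredients.
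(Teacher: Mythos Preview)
The paper does not actually prove this theorem: it is stated (with a closing $\Box$ but no proof) as a result \emph{recalled} from \cite{MH+Ye25}, so there is no in-paper argument to compare against. Your proposed direct verification---writing $g' = (I_m - X)g$, reducing to the inverse-free identity $(\d \phi')(I_m - X) + \bd X = (I_m - X)\,\d \phi$, and then expanding via the graded Leibniz rule while feeding in (\ref{linsys1}), (\ref{linsys2}), (\ref{preSylv}) and the compatibility relations (\ref{Delta,lambda,Gamma,kappa_eqs})---is exactly the standard route for establishing such binary Darboux transformation results in bidifferential calculus, and is presumably how the cited reference proceeds.

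One small remark: your closing comment that the ``scalar prototype'' was handled in \cite{MH+Ye25} slightly undersells that reference. Theorem~\ref{thm:Miura} is already formulated at the abstract bidifferential-calculus level (with matrix-valued $0$-forms $\phi, g, \theta, \eta, \Omega, \Gamma, \Delta$), and the proof there is not restricted to the scalar FL realization; the present paper merely chooses a different concrete realization of $\d,\bd$ to reach the matrix KN$_{-1}$ system. So there is genuinely nothing new to prove here beyond what \cite{MH+Ye25} already contains, and your outline would reproduce that argument.
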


Elaborating this result for the choice of bidifferential calculus in Section~\ref{sec:bidiff->KN-1}, leads to the following theorem.\footnote{Here we restrict our considerations to a framework of \emph{finite}-dimensional matrices. But under suitable technical assumptions, we may let $n$ in Theorem~\ref{thm:bDT_matrix_pKN-1} to be infinite, or better 
generalize the matrices involving the matrix size $n$ to 
become (bounded) operators on a Banach space. Also see \cite{Manas96}, 
for example.}	 
	
\begin{theorem}
\label{thm:bDT_matrix_pKN-1}
Let $\Delta$ and $\Gamma$ be invertible constant $n \times n$ matrices. Let $(u,v)$ (where $u$ and $v$ have matrix size $m_2 \times m_1$, respectively $m_1 \times m_2$) be a solution of the matrix version (\ref{KN-1}) of the first negative flow of the KN hierarchy on an open set $\mathcal{U} \subset \mathbb{R}^2$. 
Furthermore, let $\theta_1,\theta_2$  be $m_1 \times n$, respectively $m_2 \times n$, matrix solutions of the linear system
\be
&&   \theta_{1x} \, \Delta -  ( \frac{1}{2} I_{m_1} + \imag \, v u_x ) \, \theta_1 \,
     - \imag \, v \, (I_{m_2} + \imag \, u_x v) \,  \theta_2 = 0 \, , \qquad
   \theta_{2x} \, \Delta + (\frac{1}{2} I_{m_2} + \imag \, u_x v) \, \theta_2 \, + u_x \, \theta_1 = 0 \, , \nonumber \\
&&  \theta_{1t} - \frac{1}{2} \theta_1 \Delta + (\imag \, v_t - v u v) \, \theta_2 = 0 \, , \qquad
    \theta_{2t} + \frac{1}{2} \theta_2 \Delta +u \, \theta_1 = 0 \, ,  \label{mFL_linsys_theta}
\ee
and $\eta_1, \tilde{\eta}_2$ matrix solutions of size $n \times m_1$, respectively $n \times m_2$ of 
the adjoint linear system
\be
&&  \Gamma \, \eta_{1x} + \frac{1}{2} \eta_1 - \tilde{\eta}_2 \, u_x = 0 \, , \qquad
	\Gamma \, \tilde{\eta}_{2x} - \frac{1}{2} \tilde{\eta}_2 + \imag \, \Gamma \eta_1 v_x = 0 \, ,   \nonumber \\
&& \eta_{1t} + \frac{1}{2} \Gamma \, \eta_1 - \imag \, \eta_1 \, v \, u  - \tilde{\eta}_2 \, u = 0 \, , \qquad
   \tilde{\eta}_{2t} - \frac{1}{2} \Gamma \, \tilde{\eta}_2 + \imag \, \tilde{\eta}_2 \, u \, v - \imag \, \Gamma \eta_1 v = 0 \, .
     \label{mFL_linsys_eta}
\ee
Furthermore, let $\Omega$ be an $n \times n$ matrix solution of the Sylvester equation 
\be
    \Gamma \, \Omega - \Omega \, \Delta = \eta_1 \, (\theta_1 + \imag \, v \, \theta_2) + \tilde{\eta}_2 \, \theta_2 
    \, ,  \label{mFL_Sylv}
\ee
and the linear differential equations
\be
 \Omega_x \Delta = - \eta_{1x} (\theta_1 + \imag \, v \, \theta_2) - (\imag \, \eta_1 v_x + \tilde{\eta}_{2x} ) \,  \theta_2 \, , \qquad   
 \Omega_t = - \frac{1}{2} \eta_1 (\theta_1 - \imag \, v \, \theta_2) + \frac{1}{2} \tilde{\eta}_2 \, \theta_2  \, .   \label{mFL_Om_deriv}
\ee
Then, in any open subset of $\mathcal{U}$ where $\Omega$ is invertible,
\be
 &&  u' = u - \theta_2\,  \Omega^{-1} \eta_1 \, , \label{u'} \\
 &&   v' = \big( v - \imag \, \theta_1 (\Gamma \Omega)^{-1} \tilde{\eta}_2 \big)  
\big( I_{m_2} - \theta_2 (\Gamma \Omega)^{-1} \tilde{\eta}_2 \big)^{-1} \,  ,  
  \label{v'}
\ee
solve (\ref{KN-1}).
\end{theorem}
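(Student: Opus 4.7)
The plan is to apply the abstract binary Darboux transformation for the Miura equation (Theorem~\ref{thm:Miura}) in the specific bidifferential calculus of Section~\ref{sec:bidiff->KN-1}, and to unwind every object into the physical variables $(u,v)$ via Proposition~\ref{prop:cmFL_sol} and Lemma~\ref{lem:cmFL_sol}. The preparatory step is to take any solution $(u,v)$ of (\ref{KN-1}) and use Lemma~\ref{lem:cmFL_sol} to lift it to a solution $(\phi,g)$ of the Miura system. I would then choose $J_n = 0$ for the $n \times n$ factor and $\kappa = \lambda = 0$, so that for constant invertible $\Delta, \Gamma$ the equations (\ref{Delta,lambda,Gamma,kappa_eqs}) hold trivially and the remaining hypotheses of Theorem~\ref{thm:Miura} reduce to $\bd \theta = (\d\phi)\theta + (\d\theta)\Delta$, $\bd \eta = -\eta\,\d\phi + \Gamma\,\d\eta$, the Sylvester relation (\ref{preSylv}), and (\ref{linsys2}).

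Next, I would split $\theta = (\theta_1^\top,\theta_2^\top)^\top$ and $\eta = (\eta_1,\eta_2)$ in blocks compatible with $J_m = \mbox{block-diagonal}(I_{m_1},-I_{m_2})$. Using the formulas (\ref{Miura_sys2}) for the blocks of $\phi_x$ together with $v = \imag g_{12}g_{22}^{-1}$ and $u_x = -g_{21}S^{-1}$ (see (\ref{u_x})), the $\xi_1$- and $\xi_2$-components of the $\theta$-equation reproduce (\ref{mFL_linsys_theta}) line by line. The analogous computation for $\eta$ initially produces terms containing $\eta_2 v$; the crucial manoeuvre, without which the equations stay messy, is to introduce
\[
   \tilde{\eta}_2 := \eta_2 - \imag \eta_1 v.
\]
After this substitution, the four components of the $\eta$-equation become exactly (\ref{mFL_linsys_eta}), the Sylvester equation (\ref{preSylv}) becomes (\ref{mFL_Sylv}) (the cross-term $\imag \eta_1 v\, \theta_2$ is absorbed), and the $\xi_1$- and $\xi_2$-components of (\ref{linsys2}) reproduce the two equations in (\ref{mFL_Om_deriv}).

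Finally, I would translate the update formulas (\ref{phi',g'}) with $C=0$. The $(2,1)$-block of $\phi' = \phi - \theta\Omega^{-1}\eta$ gives (\ref{u'}) directly. For $v'$ I would write $g' = (I - \theta(\Gamma\Omega)^{-1}\eta)\, g$, multiply out in blocks, substitute $g_{12} = -\imag v g_{22}$ and $\eta_2 = \tilde{\eta}_2 + \imag \eta_1 v$, and observe that the terms containing $\eta_1 v$ cancel, leaving
\[
   g'_{12} = -\big(\imag v + \theta_1 (\Gamma\Omega)^{-1}\tilde{\eta}_2\big)\, g_{22}, \qquad
   g'_{22} = \big(I_{m_2} - \theta_2(\Gamma\Omega)^{-1}\tilde{\eta}_2\big)\, g_{22}.
\]
Then $v' = \imag\, g'_{12}\, {g'_{22}}^{-1}$ matches (\ref{v'}), and Proposition~\ref{prop:cmFL_sol} shows that $(u',v')$ solves (\ref{KN-1}). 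The main obstacle is isolating the right redefinition $\tilde{\eta}_2 = \eta_2 - \imag \eta_1 v$: without it the adjoint system and the Sylvester relation do not assume the compact form (\ref{mFL_linsys_eta})--(\ref{mFL_Sylv}), and the cancellations producing (\ref{v'}) fail; with it, the remainder is careful but mechanical block-matrix bookkeeping.
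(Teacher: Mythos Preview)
Your proposal is correct and follows essentially the same route as the paper's proof: specialize Theorem~\ref{thm:Miura} in the bidifferential calculus of Section~\ref{sec:bidiff->KN-1}, split $\theta$ and $\eta$ into blocks, introduce $\tilde{\eta}_2 = \eta_2 - \imag \eta_1 v$, and read off (\ref{mFL_linsys_theta})--(\ref{mFL_Om_deriv}) and the update formulas (\ref{u'}), (\ref{v'}) via $u' = \phi'_{21}$ and $v' = \imag g'_{12}{g'_{22}}^{-1}$. The only difference is cosmetic: you set $J_n = 0$ and $\kappa = \lambda = 0$, whereas the paper keeps $J_n$ arbitrary and chooses $\kappa,\lambda$ precisely so that all $J_n$-terms cancel; both choices lead to the identical reduced linear systems, so your simplification is harmless and the remaining bookkeeping is the same.
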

\begin{proof}
Writing  $\kappa = \kappa_1 \xi_1 + \kappa_2 \xi_2$, $\lambda = \lambda_1 \xi_1 + \lambda_2 \xi_2$,
choosing $\kappa_1 = \frac{1}{2} J_n$, $\kappa_2 = - \frac{1}{2} \Gamma J_n$, 
$\lambda_1 = - \frac{1}{2} J_n$, $\lambda_2 = \frac{1}{2} J_n \Delta$, and using 
\bez
       \d \phi = \phi_x \, \xi_1 + \frac{1}{2} [J, \phi] \, \xi_2 \, , \qquad
     \frac{1}{2} [J , \phi] = \left( \begin{array}{cc} 0 & \phi_{12} \\ - \phi_{21} & 0 \end{array} \right) \, ,
\eez
(\ref{linsys1}) takes the form
\bez
&&   \frac{1}{2} J \, \theta = \phi_x \, \theta + \theta_x \Delta  \, , \qquad 
\theta_t = \frac{1}{2} [J , \phi] \, \theta + \frac{1}{2} J \, \theta \Delta  \, ,  \\
&&   \frac{1}{2} \eta \, J =  \eta \, \phi_x - \Gamma \eta_x \, , \qquad
\eta_t = - \frac{1}{2} \eta \, [J , \phi] - \frac{1}{2} \Gamma \eta \, J \, .
\eez
Writing further
\bez
    \theta = \left( \begin{array}{c} \theta_1 \\ \theta_2 \end{array} \right) \, , 
    \qquad
    \eta = \left( \begin{array}{cc} \eta_1 & \eta_2 \end{array} \right) \, ,
\eez
with an $m_1 \times n$ matrix $\theta_1$, an $m_2 \times n$ matrix $\theta_2$, an $n \times m_1$ matrix $\eta_1$ and an $n \times m_2$ matrix
$\eta_2$, and using the solution of the Miura system corresponding to $(u,v)$ according to Lemma~\ref{lem:cmFL_sol}, the last system becomes (\ref{mFL_linsys_theta}) and 
\be
&&  \Gamma \, \eta_{1x} +\eta_1 \, ( \frac{1}{2} I_{m_1} + \imag \, v u_x ) - \eta_2 u_x = 0 \, , \qquad
	\Gamma \, \eta_{2x} - \eta_2 \,  (\frac{1}{2} I_{m_2} + \imag \, u_x v) 
	+ \imag \, \eta_1 \,  v \, (I_{m_2} + \imag \, u_x v) = 0 \, ,   \nonumber \\
&& \eta_{1t} + \frac{1}{2} \Gamma \, \eta_1 - \eta_2 \, u = 0 \, , \qquad
   \eta_{2t} - \frac{1}{2} \Gamma \, \eta_2 - \eta_1 (\imag \, v_t - v u v) = 0 \, .
\ee
Introducing 
\bez
       \tilde{\eta}_2 = \eta_2 - \imag \, \eta_1 v \, ,
\eez
this is turned into (\ref{mFL_linsys_eta}).  
The conditions in (\ref{Delta,lambda,Gamma,kappa_eqs}) amount to $\Delta$ and $\Gamma$ being constant. (\ref{linsys2}) takes the form 
\bez
   \Omega_x \Delta = - \eta_x \theta \, , \qquad   
   \Omega_t = - \frac{1}{2} \eta \, J \, \theta \, ,
\eez
which is (\ref{mFL_Om_deriv}). To derive the second equation, we used the Sylvester equation. The statement of the theorem now follows from Theorem~\ref{thm:Miura}, which says that
\bez
 \phi' = \phi - \theta \, \Omega^{-1} \eta + C \, , \qquad
 g' = (I_m - \theta \,  \Omega^{-1} \Gamma^{-1} \eta) \, g \, ,
\eez
where $C$ is any $\d$-constant $2 \times 2$ block-matrix, solve the Miura equation, which via Proposition~\ref{prop:cmFL_sol} implies that $u' = \phi'_{21}$ and
\bez
  &&  v' = \imag \, g'_{12} \, {g'}_{22} ^{-1} \\
       &=& \imag \, \big( (I_{m_1} - \theta_1 \Omega^{-1} \Gamma^{-1} \eta_1 ) \, g_{12}  
          - \theta_1 \Omega^{-1} \Gamma^{-1} \eta_2 \, g_{22} \big) 
        \big(  (I_{m_2} - \theta_2 \Omega^{-1} \Gamma^{-1} \eta_2) \, g_{22} 
         - \theta_2 \Omega^{-1} \Gamma^{-1} \eta_1 \, g_{12} \big)^{-1} 
\eez
solve (\ref{KN-1}). Noting that $C$ is $\d$-constant if and only 
if $C = \mbox{block-diag}(c_1,c_2)$ with $c_{ix}=0$, $i=1,2$, and using (\ref{u,v}), we arrive at (\ref{u'}) and (\ref{v'}). 
\end{proof}

\begin{remark}
\label{rem:Delta,Gamma_Jnf}
(\ref{mFL_linsys_theta}) and (\ref{mFL_linsys_eta}) are invariant under
\bez
    &&  \theta_i \mapsto \theta_i S \, , \qquad \eta_i \mapsto T \eta_i \qquad i=1,2 \, , \\
    && \Delta \mapsto S^{-1} \Delta S \, , \qquad \Gamma \mapsto T \Gamma T^{-1} \, ,
\eez
with any constant invertible $n \times n$ matrices $S$ and $T$. As a consequence, without restriction of generality, we may assume that $\Delta$ and $\Gamma$ are in (possibly different, upper or lower triangular) Jordan normal form. \hfill $\Box$
\end{remark}

\section{A vectorial Darboux transformation for the matrix FL equation}
\label{sec:mFL_DT}
As mentioned in the introduction, (\ref{KN-1}) reduces to the matrix FL equation (\ref{mFLeq}) by imposing the reduction 
condition (\ref{red}). In this section, we extend this reduction to the vectorial binary Darboux transformation, expressed in Theorem~\ref{thm:bDT_matrix_pKN-1}, in such a way that a vectorial Darboux transformation for the matrix FL equation (\ref{mFLeq}) is obtained.

\begin{theorem}
	\label{thm:mFL_DT}
Let $\Gamma$ be an invertible constant $n \times n$ matrix, such that $\Gamma$ satifies the ``spectrum condition", i.e., $\Gamma$ and $-\Gamma^\dagger$ have no common eigenvalue. Let $u$ be a solution of the $m_2 \times m_1$ matrix Fokas-Lenells equation (\ref{mFLeq}) on an open set 
$\mathcal{U} \subset \mathbb{R}^2$. 
Furthermore, let $\eta_1$ and $\tilde{\eta}_2$ be $n \times m_1$, respectively $n \times m_2$, matrix solutions of the linear system
\be
	&&  \Gamma \, \eta_{1x} + \frac{1}{2} \eta_1 - \tilde{\eta}_2 \, u_x = 0 \, , \qquad\quad
	\eta_{1t} + \frac{1}{2} \Gamma  \eta_1 - \imag \,  \eta_1 \, \sigma_1 u^\dagger \sigma_2 u  - \tilde{\eta}_2 \, u = 0
	 \, ,   \nonumber \\
   && \Gamma \tilde{\eta}_{2x} - \frac{1}{2} \tilde{\eta}_2 + \imag \, \Gamma \eta_1 \,  \sigma_1 u^\dagger_x \sigma_2 = 0 \, , \qquad
     \tilde{\eta}_{2t} - \frac{1}{2} \Gamma  \tilde{\eta}_2
      + \imag \, \tilde{\eta}_2 \,  u \, \sigma_1 u^\dagger \sigma_2  - \imag \, \Gamma \eta_1 \, \sigma_1 u^\dagger \sigma_2 = 0 \, .
	\label{mFLeq_linsys}
\ee
Furthermore, let $\Omega$ be an $n \times n$ matrix solution of the Lyapunov equation 
\be
   \Gamma \Omega + \Omega \Gamma^\dagger 
 = \imag \, \eta_1 \sigma_1 \eta_1^\dagger \Gamma^\dagger 
   + \tilde{\eta}_2 \, \sigma_2 \,  \tilde{\eta}_2^\dagger \, . \label{mFLeq_Lyap}
\ee
Then, in any open subset of $\mathcal{U}$ where $\Omega$ is invertible,
\be
    u' = u - \sigma_2 \, \tilde{\eta}_2^\dagger \, \Omega^{-1} \eta_1   \label{mFLeq_u'}
\ee
solves the $m_2 \times m_1$ matrix FL equation (\ref{mFLeq}).
\end{theorem}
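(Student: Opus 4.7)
The plan is to obtain Theorem~\ref{thm:mFL_DT} as a \emph{reduction} of Theorem~\ref{thm:bDT_matrix_pKN-1}: start from a solution $(u,v)$ of (\ref{KN-1}) satisfying the reduction $v = \sigma_1 u^\dagger \sigma_2$ (so that $u$ solves (\ref{mFLeq})), and impose a compatible constraint on the Darboux data $(\Delta,\Gamma,\theta_1,\theta_2,\eta_1,\tilde{\eta}_2,\Omega)$ that forces $v' = \sigma_1 (u')^\dagger \sigma_2$ as well. Matching (\ref{u'}) with the desired (\ref{mFLeq_u'}) dictates $\theta_2 = \sigma_2 \tilde{\eta}_2^\dagger$. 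To collapse the Sylvester equation (\ref{mFL_Sylv}) onto the Lyapunov equation (\ref{mFLeq_Lyap}), I would take $\Delta = -\Gamma^\dagger$ and set
\bez
   \theta_1 \;:=\; \imag \sigma_1 \eta_1^\dagger \Gamma^\dagger - \imag \sigma_1 u^\dagger \tilde{\eta}_2^\dagger ,
\eez
so that $\theta_1 + \imag v \, \theta_2 = \imag \sigma_1 \eta_1^\dagger \Gamma^\dagger$. With these choices the right-hand side of (\ref{mFL_Sylv}) collapses, via $\sigma_2^2 = I_{m_2}$, onto the right-hand side of (\ref{mFLeq_Lyap}).

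\textbf{Consistency of the auxiliary systems.} The first verification is that $(\theta_1,\theta_2)$ automatically solves (\ref{mFL_linsys_theta}) with $\Delta = -\Gamma^\dagger$ whenever $(\eta_1,\tilde{\eta}_2)$ solves (\ref{mFLeq_linsys}), and that the derivative equations (\ref{mFL_Om_deriv}) hold. For the first, one Hermitian-conjugates each of the four equations of (\ref{mFLeq_linsys}), inserts $u^\dagger = \sigma_1 v \sigma_2$, and pre/post-multiplies by the appropriate $\sigma_i$ to reproduce (\ref{mFL_linsys_theta}) line by line. For (\ref{mFL_Om_deriv}), differentiating (\ref{mFLeq_Lyap}) in $x$ and $t$ and eliminating the derivatives of $\eta_1, \tilde{\eta}_2$ via (\ref{mFLeq_linsys}) yields expressions for $\Omega_x$ and $\Omega_t$; after substituting the ansatz for $\theta_1, \theta_2$ these match (\ref{mFL_Om_deriv}) identically.

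\textbf{A Hermiticity identity for $\Omega$.} The pivotal intermediate lemma is
\bez
   \Omega^\dagger = \Omega - \imag \, \eta_1 \, \sigma_1 \, \eta_1^\dagger .
\eez
To prove it, take the Hermitian conjugate of (\ref{mFLeq_Lyap}) (using $\sigma_i^\dagger = \sigma_i$) and check by direct substitution that $\tilde{\Omega} := \Omega^\dagger + \imag \eta_1 \sigma_1 \eta_1^\dagger$ satisfies the \emph{same} inhomogeneous Lyapunov equation as $\Omega$. The spectral hypothesis that $\Gamma$ and $-\Gamma^\dagger$ share no eigenvalue guarantees uniqueness of the solution, hence $\tilde{\Omega} = \Omega$.

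\textbf{Preservation of the reduction.} The decisive step is to establish $v' = \sigma_1 (u')^\dagger \sigma_2$. Hermitian-conjugating (\ref{mFLeq_u'}) gives
\bez
   \sigma_1 (u')^\dagger \sigma_2 = v - \sigma_1 \eta_1^\dagger (\Omega^\dagger)^{-1} \tilde{\eta}_2 .
\eez
On the other hand, inserting the ansatz into (\ref{v'}), the combination $v - \imag \theta_1 (\Gamma\Omega)^{-1}\tilde{\eta}_2$ factors as $v\big(I_{m_2} - \theta_2 (\Gamma\Omega)^{-1}\tilde{\eta}_2\big) + \sigma_1 \eta_1^\dagger \Gamma^\dagger (\Gamma\Omega)^{-1}\tilde{\eta}_2$, which yields
\bez
  v' = v + \sigma_1 \eta_1^\dagger \Gamma^\dagger \Omega^{-1} \Gamma^{-1} \tilde{\eta}_2 \,
       \big( I_{m_2} - \sigma_2 \tilde{\eta}_2^\dagger \Omega^{-1} \Gamma^{-1} \tilde{\eta}_2 \big)^{-1}.
\eez
Equating the two displays and clearing the inverse on the right reduces the claim to
\bez
   \Omega^\dagger \Gamma^\dagger \Omega^{-1} \Gamma^{-1} \tilde{\eta}_2 + \tilde{\eta}_2
   = \tilde{\eta}_2 \, \sigma_2 \, \tilde{\eta}_2^\dagger \, \Omega^{-1} \Gamma^{-1} \tilde{\eta}_2 ,
\eez
which is verified by substituting the Hermiticity identity into $\Omega^\dagger \Gamma^\dagger = \Omega \Gamma^\dagger - \imag \eta_1 \sigma_1 \eta_1^\dagger \Gamma^\dagger$ and then using (\ref{mFLeq_Lyap}) to rewrite $\Omega \Gamma^\dagger$; the two $\imag \eta_1 \sigma_1 \eta_1^\dagger \Gamma^\dagger$ terms cancel, leaving exactly the required identity. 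The main obstacle will be keeping the matrix algebra in this last step under control; everything upstream is essentially forced by the ansatz, and invertibility of $\Omega^\dagger$ — needed for the conjugated formula — is automatic since $\det(\Omega^\dagger) = \overline{\det(\Omega)}$.
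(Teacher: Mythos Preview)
Your proposal is correct and follows essentially the same route as the paper: the same reduction ansatz $\Delta=-\Gamma^\dagger$, $\theta_2=\sigma_2\tilde{\eta}_2^\dagger$, $\theta_1=\imag\sigma_1(\eta_1^\dagger\Gamma^\dagger-u^\dagger\tilde{\eta}_2^\dagger)$, and the same pivotal identity --- your $\Omega^\dagger=\Omega-\imag\eta_1\sigma_1\eta_1^\dagger$ is exactly the paper's key relation $\tilde{\eta}_2\sigma_2\tilde{\eta}_2^\dagger=\Gamma\Omega+\Omega^\dagger\Gamma^\dagger$ after right-multiplying by $(\Gamma^\dagger)^{-1}$ and using (\ref{mFLeq_Lyap}). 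The only cosmetic differences are that the paper obtains this identity by splitting $\Omega=\Omega_1+\Omega_2$ along the two summands on the right of (\ref{mFLeq_Lyap}) rather than by your direct uniqueness argument, and it verifies $v'=\sigma_1(u')^\dagger\sigma_2$ by showing $\sigma_1(u')^\dagger\sigma_2\,H=G$ (with $v'=GH^{-1}$) rather than by reducing to your displayed $n\times m_2$ identity; both computations are the same algebra in slightly different packaging.
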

\begin{proof}
In Theorem~\ref{thm:bDT_matrix_pKN-1}, we impose the conditions
\be
\Delta = - \Gamma^\dagger \, , \quad
\theta_1 = \imag \, \sigma_1 ( \eta_1^\dagger \, \Gamma^\dagger - u^\dagger \, \tilde{\eta}_2^\dagger ) \, , \quad
\theta_2 = \sigma_2 \,  \tilde{\eta}_2^\dagger \, , \label{mFL_red}
\ee
where now 
\bez
     \tilde{\eta}_2 = \eta_2 - \imag \, \eta_1 \sigma_1 u^\dagger \sigma_2 \, .
\eez
Then, by using (\ref{red}), the four equations (\ref{mFL_linsys_theta}) turn out to be consequences of (\ref{mFL_linsys_eta}), and the latter system can be equivalently replaced by (\ref{mFLeq_linsys}).
The Sylvester equation (\ref{mFL_Sylv}) becomes the Lyapunov equation (\ref{mFLeq_Lyap}). We have 
$\Omega = \Omega_1 + \Omega_2$, where 
\bez
 \Gamma \Omega_1 + \Omega_1 \Gamma^\dagger = \imag \, \eta_1 \sigma_1 \eta_1^\dagger \Gamma^\dagger \, , \qquad
 \Gamma \Omega_2 + \Omega_2 \Gamma^\dagger = \tilde{\eta}_2 \, \sigma_2 \, \tilde{\eta}_2^\dagger \, .
\eez
Since we assume the spectrum condition for $\Gamma$, the Lyapunov equation has a unique solution.
It follows that $\Omega_2^\dagger = \Omega_2$, and thus
\bez
    \tilde{\eta}_2 \, \sigma_2 \, \tilde{\eta}_2^\dagger 
   = \Gamma \Omega + \Omega^\dagger \Gamma^\dagger 
       - (\Gamma \Omega_1 + \Omega_1^\dagger \Gamma^\dagger) \, .
\eez
The last bracket vanishes since the Lyapunov equation for $\Omega_1$ implies $(\Gamma \Omega_1)^\dagger = - \Gamma \Omega_1$.
Hence
\be
    \tilde{\eta}_2\,  \sigma_2 \, \tilde{\eta}_2^\dagger  = \Gamma \Omega + \Omega^\dagger \Gamma^\dagger \, .    
                       \label{mFLeq_key}
\ee
$u'$ in (\ref{u'}) takes the form (\ref{mFLeq_u'}), and $v'$ can be written as
\bez
     v' = G \, H^{-1} \, , 
\eez
with
\bez
 G = \sigma_1 u^\dagger \sigma_2 + \sigma_1 \big( \Gamma \eta_1 - \tilde{\eta}_2 \, u \big)^\dagger \, (\Gamma \Omega)^{-1} \tilde{\eta}_2  \, , \qquad
 H = I_{m_2} - \sigma_2 \, \tilde{\eta}_2^\dagger \, (\Gamma \Omega)^{-1} \tilde{\eta}_2 \, .
\eez
Now 
\bez
  \sigma_1  u'^\dagger \sigma_2 \, H &=&  \big( \sigma_1 u^\dagger \sigma_2 -\sigma_1 \, \eta_1^\dagger \, \Omega^{\dagger -1} \tilde{\eta}_2 \big)   \big( I_{m_2} - \sigma_2 \, \tilde{\eta}_2^\dagger \, (\Gamma \Omega)^{-1} \tilde{\eta}_2 \big) \\
   &=& \sigma_1 u^\dagger \sigma_2 
     - \sigma_1 u^\dagger \, \tilde{\eta}_2^\dagger \, (\Gamma \Omega)^{-1} \tilde{\eta}_2
     -\sigma_1 \eta_1^\dagger \, \Omega^{\dagger -1} \tilde{\eta}_2 
     +\sigma_1 \eta_1^\dagger \, \Omega^{\dagger -1} \tilde{\eta}_2 \sigma_2 \tilde{\eta}_2^\dagger 
     \, (\Gamma \Omega)^{-1} \tilde{\eta}_2 = G
   \, ,
\eez 
where we used (\ref{mFLeq_key}) in the last step, shows that $v' =\sigma_1 u'^\dagger \sigma_2$. 
Since, according to Theorem~\ref{thm:bDT_matrix_pKN-1}, $(u',v')$ solves (\ref{KN-1}), it follows that $u'$ solves the matrix Fokas-Lenells equation (\ref{mFLeq}). 
\end{proof}

\begin{corollary}
\label{cor:mFL_no_sc}	
The spectrum condition for $\Gamma$ can be dropped in Theorem~\ref{thm:mFL_DT} if the assumptions there are supplemented by	
\be
 && \Omega_x  = \imag \, \eta_{1x} \, \sigma_1 \, \eta_1^\dagger 
    + (\tilde{\eta}_{2x} \, \sigma_2 + \imag \, \eta_1 \sigma_1 \, u^\dagger_x ) \, \tilde{\eta}_2^\dagger \,  \Gamma^{\dagger -1} \, ,  
      \nonumber \\
 &&  \Omega_t = \frac{1}{2} \Big( -\imag \, \eta_1 \sigma_1 \eta_1^\dagger  \Gamma^\dagger + 2 \imag \, \eta_1 \sigma_1 u^\dagger \tilde{\eta}_2^\dagger  + \tilde{\eta}_2 \sigma_2 \tilde{\eta}_2^\dagger \Big) \, , 
    \label{mFLeq_Om_deriv}
\ee	
and if in addition (\ref{mFLeq_key}) is imposed. 
\end{corollary}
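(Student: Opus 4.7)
The plan is to retrace the proof of Theorem~\ref{thm:mFL_DT} line by line, keeping the same reduction ansatz (\ref{mFL_red}) and again applying Theorem~\ref{thm:bDT_matrix_pKN-1}, while isolating the precise role of the spectrum condition. Under (\ref{mFL_red}) together with (\ref{red}), the adjoint linear system (\ref{mFL_linsys_eta}) becomes (\ref{mFLeq_linsys}), the forward linear system (\ref{mFL_linsys_theta}) follows from (\ref{mFLeq_linsys}) just as in Theorem~\ref{thm:mFL_DT}, and the Sylvester equation (\ref{mFL_Sylv}) collapses to the Lyapunov equation (\ref{mFLeq_Lyap}). A short direct substitution, using $\sigma_2^2 = I_{m_2}$ and $v = \sigma_1 u^\dagger \sigma_2$, reduces the derivative equations (\ref{mFL_Om_deriv}) of Theorem~\ref{thm:bDT_matrix_pKN-1} to exactly (\ref{mFLeq_Om_deriv}). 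In Theorem~\ref{thm:mFL_DT} these derivative equations did not need to be stated explicitly, because the spectrum condition determined $\Omega$ uniquely from (\ref{mFLeq_Lyap}); once the spectrum condition is dropped they must be posited separately.

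A careful reading of the proof of Theorem~\ref{thm:mFL_DT} reveals that the spectrum condition is invoked in exactly one place, namely to conclude $\Omega_2^\dagger = \Omega_2$ via uniqueness of the Lyapunov equation, and thereby to arrive at the key identity (\ref{mFLeq_key}). Every subsequent step, in particular the computation $\sigma_1 u'^\dagger \sigma_2 \, H = G$ that establishes $v' = \sigma_1 u'^\dagger \sigma_2$ and hence that $u'$ solves (\ref{mFLeq}), uses (\ref{mFLeq_key}) but nothing more about uniqueness of $\Omega$. Therefore, once (\ref{mFLeq_key}) is imposed as an additional hypothesis, the final block of the Theorem~\ref{thm:mFL_DT} proof carries over verbatim, and (\ref{mFLeq_u'}) yields a solution of (\ref{mFLeq}) as claimed.

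The principal remaining concern, and the step I expect to be the main obstacle, is consistency of the simultaneous hypotheses (\ref{mFLeq_Lyap}), (\ref{mFLeq_key}), and (\ref{mFLeq_Om_deriv}), so that the corollary is not vacuous. This amounts to differentiating (\ref{mFLeq_Lyap}) and (\ref{mFLeq_key}) in $x$ and $t$, substituting the expressions furnished by (\ref{mFLeq_Om_deriv}) and the linear system (\ref{mFLeq_linsys}), and simplifying using $\sigma_i^2 = I$, $\sigma_i^\dagger = \sigma_i$, and the matrix FL equation (\ref{mFLeq}) for $u$. These manipulations are lengthy but essentially mechanical, and constitute the only genuine calculational work needed to complete the argument.
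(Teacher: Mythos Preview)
Your proposal is correct and follows essentially the same route as the paper: reduce (\ref{mFL_Om_deriv}) to (\ref{mFLeq_Om_deriv}) via the ansatz (\ref{mFL_red}), invoke Theorem~\ref{thm:bDT_matrix_pKN-1} to get that $(u',v')$ solves (\ref{KN-1}), and then observe that (\ref{mFLeq_key}) alone is what drives the computation $v' = \sigma_1 u'^\dagger \sigma_2$. The paper's compatibility check is exactly the one you outline, namely verifying that the $x$- and $t$-derivatives of $\Gamma\Omega + \Omega^\dagger\Gamma^\dagger - \tilde{\eta}_2\,\sigma_2\,\tilde{\eta}_2^\dagger$ vanish by virtue of (\ref{mFLeq_Om_deriv}) and (\ref{mFLeq_linsys}); the consistency of (\ref{mFLeq_Lyap}) with (\ref{mFLeq_Om_deriv}) you mention is already built into Theorem~\ref{thm:bDT_matrix_pKN-1} and need not be rechecked.
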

\begin{proof}
(\ref{mFLeq_Om_deriv}) results from (\ref{mFL_Om_deriv}), using (\ref{mFL_red}). According to Theorem~\ref{thm:bDT_matrix_pKN-1},
taking a seed that solves the matrix FL equation, the resulting 
$u'$ and $v'$ then solve (\ref{KN-1}). 
Now (\ref{mFLeq_key}) guarantees that $u'$ actually solves 
the matrix FL equation.  	

As a consequence of (\ref{mFLeq_Om_deriv}) and the linear system (\ref{mFLeq_linsys}), it follows that
\bez
   \big( \Gamma \Omega + \Omega^\dagger \Gamma^\dagger - \tilde{\eta}_2 \, \sigma_2 \tilde{\eta}_2^\dagger \big)_x = 0 \, , \qquad
   \big( \Gamma \Omega + \Omega^\dagger \Gamma^\dagger - \tilde{\eta}_2 \, \sigma_2 \tilde{\eta}_2^\dagger \big)_t = 0 \, ,
\eez
so that (\ref{mFLeq_key}) and (\ref{mFLeq_Om_deriv}) are indeed compatible.
\end{proof}

\begin{remark}
If the spectrum condition for $\Gamma$ holds, 
then (\ref{mFLeq_key}) and (\ref{mFLeq_Om_deriv}) are automatically satisfied as a consequence of the Lyapunov equation (\ref{mFLeq_Lyap}).
\hfill $\Box$
\end{remark}

\begin{remark}
	\label{rem:FL_superposition}
A crucial feature of the vectorial Darboux transformation is the following ``superposition property''. 
Let $(\Gamma^{(i)},\eta^{(i)}_1,\tilde{\eta}^{(i)}_2,\Omega^{(i)})$,
$i=1,2$, be data determining solutions ${u'}^{(i)}$ of the matrix FL equation, for the same seed solution $u$. Let
\bez
   \Gamma = \left( \begin{array}{cc} \Gamma^{(1)} & 0 \\
   0 & \Gamma^{(2)} \end{array} \right) \, , \quad
   \eta_1 = \left(\begin{array}{c} \eta^{(1)}_1 \\ \eta^{(2)}_1
   	\end{array} \right) \, , \quad
   \tilde{\eta}_2 = \left(\begin{array}{c} \tilde{\eta}^{(1)}_2 \\ \tilde{\eta}^{(2)}_2
   \end{array} \right) \, , \quad	
   \Omega = \left( \begin{array}{cc} \Omega^{(1)} & \Omega^{(12)} \\
   	\Omega^{(21)} & \Omega^{(2)} \end{array} \right) \, .
\eez
The corresponding linear system (\ref{mFLeq_linsys}) is then solved. Essentially, it only remains to determine the matrices $\Omega^{(12)}$ and $\Omega^{(21)}$ in order to get a new solution $u'$, which is then a kind of nonlinear superposition of ${u'}^{(1)}$ and ${u'}^{(2)}$. The Lyapunov equation (\ref{mFLeq_Lyap}) 
reduces to the two Sylvester equations
\bez
 &&  \Gamma^{(1)} \Omega^{(12)} + \Omega^{(12)} \Gamma^{(2)\dagger}
   = \imag \, \eta^{(1)}_1 \sigma_1 \, \eta^{(2)\dagger}_1 \Gamma^{(2)\dagger}
   + \tilde{\eta}^{(1)}_2 \sigma_2 \, \tilde{\eta}^{(2)\dagger}_2 \, , 
    \\
 && \Gamma^{(2)} \Omega^{(21)} + \Omega^{(21)} \Gamma^{(1)\dagger}
 = \imag \, \eta^{(2)}_1 \sigma_1 \, \eta^{(1)\dagger}_1 \Gamma^{(1)\dagger}
 + \tilde{\eta}^{(2)}_2 \sigma_2 \, \tilde{\eta}^{(1)\dagger}_2 \, .
\eez
If $\Gamma^{(1)}$ and $-\Gamma^{(2)\dagger}$ have no common eigenvalue, these equations possess unique solutions $\Omega^{(12)}$ and $\Omega^{(21)}$. Otherwise additional equations, obtained from (\ref{mFLeq_key}) and (\ref{mFLeq_Om_deriv}), have to be solved.	
\hfill $\Box$
\end{remark}

\subsection{Zero seed solutions}
For $u=0$ (zero seed), (\ref{mFLeq_linsys}) is completely solved by 
\bez
 \eta_1 = e^{-\frac{1}{2} (\Gamma^{-1} x + \Gamma \, t)} \, a_1 
\, ,  \qquad
 \tilde{\eta}_2 = e^{\frac{1}{2} (\Gamma^{-1} x + \Gamma \, t)} \, a_2
\, ,
\eez
with constant $n \times m_1$, respectively $n \times m_2$, matrices $a_1$ and $a_2$. The Lyapunov equation reads 
\bez
\Gamma \, \Omega + \Omega \, \Gamma^\dagger 
= \imag \, e^{- \frac{1}{2} (\Gamma^{-1} \, x + \Gamma \, t)} \, 
\boldsymbol{a} \, \Gamma^\dagger \, 
e^{- \frac{1}{2} ({\Gamma^\dagger}^{-1} \, x + \Gamma^\dagger \, t)}
+ e^{\frac{1}{2} (\Gamma^{-1} \, x + \Gamma \, t)} \, \boldsymbol{b} \, e^{\frac{1}{2} ({\Gamma^\dagger}^{-1} \, x + \Gamma^\dagger \, t)}  \, ,
\eez
where we introduced the Hermitian $n \times n$ matrices
\bez
 \boldsymbol{a} = a_1 \, \sigma_1 \, a_1^\dagger\, , \qquad \boldsymbol{b} = a_2 \, \sigma_2 \, a_2^\dagger\, .
\eez
If the spectrum condition for $\Gamma$ holds, the Lyapunov equation possesses a unique solution $\Omega$. If not, then the Lyapunov equation leads to constraints and we also have to use the equations in Corollary~\ref{cor:mFL_no_sc} to determine $\Omega$.
According to Theorem~\ref{thm:mFL_DT} and Corollary~\ref{cor:mFL_no_sc}, 
\be
 u' = - \sigma_2 \, \tilde{\eta}_2^\dagger \, \Omega^{-1} \eta_1 
    = - \sigma_2 \, a_2^\dagger \, e^{\frac{1}{2} ({\Gamma^\dagger}^{-1} x 
    	+ \Gamma^\dagger \, t)} \, \Omega^{-1} e^{-\frac{1}{2} (\Gamma^{-1} x + \Gamma \, t)} \, a_1     
	\label{mFL_zero_seed_u'}
\ee
solves the matrix FL equation (\ref{mFLeq}). In the following, we set
\bez
    \varphi(\Gamma) = \frac{1}{2} (\Gamma^{-1} \, x + \Gamma \, t) \, ,
    \label{mFL_zero_seed_varphi}
\eez
and work out some classes of solutions.

\begin{example}
Ordinary matrix multi-soliton solutions are obtained if $\Gamma = \mathrm{diag}(\gamma_1,\ldots,\gamma_n)$, with $\gamma_i \neq - \gamma_j^\ast$ for $i,j=1,\ldots,n$.
Then $\Omega$ is given by 
the Cauchy-like matrix with components
\bez
	\Omega_{ij} = \frac{1}{\gamma_i + \gamma_j^\ast} \, \Big( 
	\imag \, \boldsymbol{a}_{ij} \, \gamma_j^\ast \, e^{-\varphi(\gamma_i) - \varphi(\gamma_j^\ast)} 
	+ \boldsymbol{b}_{ij} \, e^{\varphi(\gamma_i) + \varphi(\gamma_j^\ast)} \Big) \, .   	   
\eez
Now (\ref{mFL_zero_seed_u'}) yields an $n$-soliton solution of the $m_2 \times m_1$ matrix FL equation. 

For $n=1$, excluding $\boldsymbol{a} = \boldsymbol{b} = 0$, 
we get the matrix single soliton solution
\bez
 u' = - \frac{2 \, \mathrm{Re}(\gamma) \, e^{-2 \, \imag \, \mathrm{Im}(\varphi)}}
	{\big( \imag \, \mathrm{Re}(\gamma) + \mathrm{Im}(\gamma) \big) \,  \boldsymbol{a} \, e^{-2 \mathrm{Re}(\varphi)}
 + \boldsymbol{b} \, e^{2 \mathrm{Re}(\varphi)}} \, \sigma_2 \, a_2^\dagger \, a_1 \, ,
\eez
where $a_2^\dagger \, a_1$ is a Kronecker product. Since $\mathrm{Re}(\gamma) \neq 0$ and $\boldsymbol{a}, \boldsymbol{b} \in \mathbb{R}$, this is regular and vanishes as $x\to \pm \infty$ or $t \to \pm \infty$, except on the lines where $x +|\gamma|^2 t$ is constant.
\hfill $\Box$
\end{example}

\begin{example}
Let $\Gamma = - \imag \, \mathrm{diag}(k_1,\ldots,k_n)$, $k_i \in \mathbb{R} \setminus \{0\}$ and $k_i\neq k_j$, $i,j=1,\ldots,n$. Then the Lyapunov equation becomes the constraint 
\bez
    \boldsymbol{b}_{ii} = k_i \, \boldsymbol{a}_{ii} \, , 
   \qquad	i=1,\ldots,n  \, .
\eez
The differential equation for $\Omega_{ii}$ reads
\bez
	\Omega_{iix} = \frac{\boldsymbol{a}_{ii}}{k_i} \, , \qquad 
	\Omega_{iit} = k_i \, \boldsymbol{a}_{ii} \, ,
\eez
so that
\bez
	\Omega_{ii} = \boldsymbol{a}_{ii} (k_i^{-1} x + k_i \, t) + c_i \, .
\eez
with a constant $c_i$, for which (\ref{mFLeq_key}) requires
\bez
	\mathrm{Im}(c_i) = \frac{1}{2} \boldsymbol{a}_{ii} \, .
\eez
For $i\neq j$, we obtain
\bez
	\Omega_{ij} = \frac{1}{\imag \, (k_i-k_j)} \, \Big( k_j \,  \boldsymbol{a}_{ij} \, e^{-\frac{\imag}{2} (k_i^{-1} - k_j^{-1}) (x+k_ik_j\,t)} - \boldsymbol{b}_{ij} \, e^{\frac{\imag}{2} (k_i^{-1}-k_j^{-1}) (x+k_ik_j\,t)} \Big) \, .
\eez
Inserting our results in (\ref{mFL_zero_seed_u'}), we obtain  another $n$-soliton solution of the $m_2 \times m_1$ matrix FL equation. For $n=1$, this is the quasi-rational matrix single soliton 
\bez
	u' = -\frac{e^{-\imag \, \big(k^{-1} x-k \,t \big)}}{
		 \boldsymbol{a} \, (k^{-1} x + k \, t) + \mathrm{Re}(c_1) + \frac{1}{2} \imag \, \boldsymbol{a}} 
		 \, \sigma_2 \, a_2^\dagger \, a_1 \, ,
\eez
which is regular. The absolute values of the components fall off toward zero away from the line $x = -k^2 t - \mathrm{Re}(c_1) \, k/\boldsymbol{a}$. In this sense, the solution is concentrated along this line.
\hfill $\Box$
\end{example}

\begin{example}
	Let $\Gamma = -\imag \, k \, I_n$, where $k \in \mathbb{R} \setminus \{0\}$. Then the Lyapunov equation becomes the constraint $\boldsymbol{b} = k \, \boldsymbol{a}$,
	and (\ref{mFLeq_Om_deriv}) leads to $\Omega = (k^{-1} x + k \, t) \, \boldsymbol{a} + \boldsymbol{c}$, with a constant $n \times n$ matrix $\boldsymbol{c}$, for which (\ref{mFLeq_key}) requires $\boldsymbol{a} + \imag \, (\boldsymbol{c}-\boldsymbol{c}^\dagger) =0$. The resulting solution is 
	$u'= -e^{- \imag \, (k^{-1} x - k \, t)} \, \sigma_2 \, a_2^\dagger \, \Omega^{-1} a_1$.
	\hfill $\Box$
\end{example}

\begin{example}
	\label{ex:zero_seed_n=2_anti-conjugate_gammas}
Let $n=2$ and $\Gamma = \mathrm{diag}(\gamma,-\gamma^\ast)$ 
with $\mathrm{Re}(\gamma) \neq 0$. Then we have
\bez
 \eta_1 = \left( \begin{array}{cc} e^{-\varphi(\gamma)} & 0 \\ 0 & e^{ \varphi(\gamma^\ast)} \end{array} \right) \, a_1 \, , \qquad 
\tilde{\eta}_2 = \left( \begin{array}{cc} e^{\varphi(\gamma)} & 0 \\ 0 & e^{-\varphi(\gamma^\ast)}
\end{array} \right) \, a_2 \, .
\eez
Hermiticity of the matrices $\boldsymbol{a}$ and $\boldsymbol{b}$ means that 
the components $\boldsymbol{a}_{11}, \boldsymbol{a}_{22}, \boldsymbol{b}_{11}, \boldsymbol{b}_{22}$ are real and $\boldsymbol{a}_{21} = \boldsymbol{a}_{12}^\ast, \boldsymbol{b}_{21} = \boldsymbol{b}_{12}^\ast$. 
The off-diagonal parts of the Lyapunov equation amount to the constraint
\bez
    \boldsymbol{b}_{12} = \imag \, \gamma \, \boldsymbol{a}_{12} \, , 
\eez
and the diagonal parts determine the diagonal components of $\Omega$,
\bez
   \Omega_{11} &=& \frac{1}{2 \mathrm{Re}(\gamma)} \big( \imag \, \gamma^\ast \boldsymbol{a}_{11} \, e^{-2 \mathrm{Re}(\varphi(\gamma))} + \boldsymbol{b}_{11} \, e^{2 \mathrm{Re}(\varphi(\gamma))} \big) \, , \\
   \Omega_{22} &=& \frac{1}{2 \mathrm{Re}(\gamma)} \big( \imag \, \gamma \, \boldsymbol{a}_{22} \, e^{2 \mathrm{Re}(\varphi(\gamma))} - \boldsymbol{b}_{22} \, e^{-2 \mathrm{Re}(\varphi(\gamma))} \big) \, .
\eez
The differential equations for $\Omega$ are now equivalent to
\bez
    \Omega_{12x} = - \frac{\imag \, \boldsymbol{a}_{12}}{\gamma} \, , \quad
    \Omega_{21x} = \frac{\imag \, \boldsymbol{a}_{12}^\ast}{\gamma^\ast} \, , \quad
    \Omega_{12t} = \imag \, \gamma \, \boldsymbol{a}_{12} \, , \quad
    \Omega_{21t} = -\imag \, \gamma^\ast \boldsymbol{a}_{12}^\ast \, ,
\eez
which integrates to
\bez
   \Omega_{12} = - \imag \, \boldsymbol{a}_{12} \, (\gamma^{-1} x - \gamma t) + c_{12} \, , \qquad
   \Omega_{21} = \imag \, \boldsymbol{a}_{12}^\ast \, (\gamma^{\ast-1} x - \gamma^\ast \, t) + c_{21} \, , 
\eez 
with complex constants $c_{12}$ and $c_{21}$. The remaining condition (\ref{mFLeq_key}) becomes
\bez
    c_{21} = c_{12}^\ast + \imag \, \boldsymbol{a}_{12}^\ast \, .
\eez 
Inserting the results in (\ref{mFL_zero_seed_u'}) yields the following solution of 
the $m_2 \times m_1$ matrix FL equation,
\bez
 u' =  -\frac{1}{\mathrm{det}(\Omega)} \, \sigma_2 \, a_2^\dagger \left( \begin{array}{cc} e^{-2\imag \, \mathrm{Im}(\varphi(\gamma))} \, \Omega_{22} & -e^{2\varphi(\gamma^\ast)} \, \Omega_{12} \\ -e^{-2\varphi(\gamma)} \, \Omega_{21} & e^{-2\mathrm{i} \mathrm{Im}(\varphi(\gamma))} \, \Omega_{11} 
\end{array} \right) a_1 \, ,
\eez
where
\bez
\mathrm{det}(\Omega) &=& \frac{1}{4\mathrm{Re}(\gamma)^2} \Big( \imag \, \gamma \, \boldsymbol{a}_{22} \boldsymbol{b}_{11} \, e^{4\mathrm{Re}(\varphi(\gamma))} - \imag \, \gamma^\ast \boldsymbol{a}_{11} \boldsymbol{b}_{22} \, e^{-4\mathrm{Re}(\varphi(\gamma))} - |\gamma|^2 \boldsymbol{a}_{11} \boldsymbol{a}_{22} - \boldsymbol{b}_{11} \boldsymbol{b}_{22} \Big) \\
&& - \big| \imag \, \boldsymbol{a}_{12}  (\gamma^{-1} x - \gamma t) - c_{12} \big|^2 -|\boldsymbol{a}_{12}|^2 \rho - \imag \,  \boldsymbol{a}_{12}^\ast c_{12} \, .
\eez
Besides the exponential dependence on the independent 
variables via $e^{\varphi(\gamma)}$, there is also a rational dependence on $\gamma^{-1} x-\gamma t$ and its complex conjugate. 
Fig.~\ref{fig:scalarFL_anti-conjugate_gammas} shows plots of the absolute values for 
two solutions of the scalar FL equation ($m_1=m_2=1$) from this class.
\begin{figure}[h]
	\begin{center}
		\includegraphics[scale=.4]{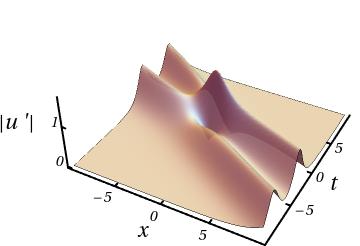} 
		\hspace{1cm}
		\includegraphics[scale=.4]{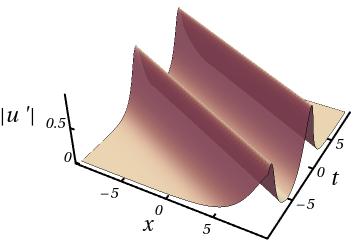} 
		\parbox{15cm}{
		\caption{Plots of the absolute value of solutions of the scalar FL 
		equation (with $\sigma_1=\sigma_2=1$) obtained with an anti-conjugate pair of spectral parameters from the family in Example~\ref{ex:zero_seed_n=2_anti-conjugate_gammas}. We set $\gamma = 1+\imag$,  $a_1 = (1,1)^T$, $a_2 = (\imag \, \gamma,1)^T$ and $c_{12}=1$ (left plot), respectively $c_{12}=50$ (right plot).  				\label{fig:scalarFL_anti-conjugate_gammas} } 
		}
	\end{center}
\end{figure} 
\hfill $\Box$
\end{example}

\begin{example}
Let $n=2$ and $\Gamma$ be the $2 \times 2$ Jordan block 
\bez
	\Gamma = \left( \begin{array}{cc} \gamma & 0 \\ 1 & \gamma 
	\end{array} \right) \, ,
\eez 	
with eigenvalue $\gamma$. Then we have
\bez
	\eta_1 = e^{ -\varphi} \, \left( \begin{array}{cc} 1 & 0 \\ \rho  & 1
	\end{array} \right) \, a_1 \, , \qquad 
	\tilde{\eta}_2 = e^{\varphi} \, \left( \begin{array}{cc} 1 & 0 \\ -\rho  & 1
	\end{array} \right) \, a_2 \, ,
\eez
where $\varphi$ stands for $\varphi(\gamma)$, and
\bez
	\rho = \frac{1}{2} \big( \gamma^{-2} x - t \big) \, ,
\eez 
which introduces the same dependence on the independent variables that we met in the preceding example. 
After having determined 
\bez
    \Omega = \left( \begin{array}{cc} \Omega_{11} & \Omega_{12} \\
    	\Omega_{21} & \Omega_{22} \end{array} \right) \, , \qquad
   \Omega^{-1} = \frac{1}{\mathrm{det}(\Omega)}
   \left( \begin{array}{cc} \Omega_{22} & -\Omega_{12} \\ -\Omega_{21} & \Omega_{11} \end{array} \right) \, , 	
\eez
from (\ref{mFL_zero_seed_u'}) we obtain the solution
\be
u' = -\frac{e^{-2 \, \imag \,
	\mathrm{Im}(\varphi)}}{\det(\Omega)} \, \sigma_2 \, a_2^\dagger \left( \begin{array}{cc} \Omega_{22}-\rho \, \Omega_{12} + \rho^\ast \, \Omega_{21} -|\rho|^2 \, \Omega_{11} & - \Omega_{12} - \rho^\ast \Omega_{11} \\ -\Omega_{21} + \rho \, \Omega_{11} & \Omega_{11} \end{array} \right) \, a_1 \, .
    \label{u'_ex4.7}
\ee

\paragraph{(1)}  
If $\kappa := 2 \mathrm{Re}(\gamma) \neq 0$, the solution of the Lyapunov equation is found from Example~\ref{ex:n=2Jordan_Lyapunov_sol} in Appendix~\ref{app:Lyapunov} to be 
\bez
  \Omega = \imag \, \tilde{\Omega}_1 \Gamma^\dagger + \Omega_2 
\eez
with  
\bez
\tilde{\Omega}_1 &=& \frac{1}{\kappa}
\left( \begin{array}{cc} A_{11} & A_{12} + \rho_-^\ast \, A_{11}  \\
	A_{12}^\ast + \rho_- \, A_{11} & (|\rho_-|^2 + \kappa^{-2}) \, A_{11} + \rho_- \, A_{12} + \rho_-^\ast A_{12}^\ast + A_{22} \end{array} \right) \, , \\
\Omega_2 &=& \frac{1}{\kappa}
\left( \begin{array}{cc} B_{11} & B_{12} - \rho_+^\ast \, B_{11}  \\
	B_{12}^\ast - \rho_+ \, B_{11} & (|\rho_+|^2 + \kappa^{-2}) \, B_{11} - \rho_+ \, B_{12} - \rho_+^\ast B_{12}^\ast + B_{22} \end{array} \right) \,  ,
\eez 
where  
\bez
A_{ij} = \boldsymbol{a}_{ij} \, e^{-2 \, \mathrm{Re}(\varphi)} \, , \qquad 
B_{ij} = \boldsymbol{b}_{ij} \, e^{2 \, \mathrm{Re}(\varphi)} \, , \qquad 
\rho_\pm = \rho \pm \kappa^{-1} \, .
\eez 
Hence, $\Omega$ has the components
\bez
\Omega_{11} &=& \kappa^{-1}  ( \imag \, \gamma^\ast \, A_{11}
+ B_{11}  ) \, , \\
\Omega_{12} &=& \kappa^{-1} \Big( \imag \, \big( A_{11} + \gamma^\ast (A_{12} + \rho_-^\ast A_{11}) \big)
+ B_{12} - \rho_+^\ast B_{11} \Big) \, , \\
\Omega_{21} &=&\kappa^{-1} \big( \imag \, \gamma^\ast (A_{12}^\ast + \rho_- \, A_{11}) +B_{12}^\ast - \rho_+ \, B_{11} \big) \, , \\
\Omega_{22} &=& \kappa^{-1} \Big( \imag \, (A_{12}^\ast + \rho_- \, A_{11} ) + \imag \, \gamma^\ast \big( (|\rho_-|^2 + \kappa^{-2}) \, A_{11} + \rho_- \, A_{12} + \rho_-^\ast A_{12}^\ast + A_{22} \big) \\
&& + ( |\rho_+|^2 + \kappa^{-2}) \, B_{11} - \rho_+ \, B_{12} - \rho_+^\ast B_{12}^\ast + B_{22} \Big) \, .       
\eez
Besides the exponential dependence via $A_{ij}$ and $B_{ij}$, the
solution (\ref{u'_ex4.7}) also depends on the independent variables $x$ and $t$ through the linear expression $\rho$. Generically, the determinant of $\Omega$ has non-zero real and imaginary parts, so that $u'$ is regular. We obtain matrix generalizations of ``double pole solutions", see \cite{MH+Ye25} for the scalar case.

\paragraph{(ii)} If $\gamma$ is imaginary, i.e., $\gamma=-\imag \, k$ with $k \in \mathbb{R} \setminus \{0\}$, then we have 
$\rho = -\frac{1}{2} (x/k^2+t) = \rho^\ast$ and the Lyapunov equation requires (cf. Example~\ref{ex:Lyap_n=2_Jordan_imag_gamma} in Appendix~\ref{app:Lyapunov})
\be
&& k \, \boldsymbol{a}_{11} - \boldsymbol{b}_{11} = 0 \, , \label{ex4.7(2)_constr1} \\
&& \Omega_{11} = -\rho\,( k\, \boldsymbol{a}_{11} + \boldsymbol{b}_{11}) +\imag \, \boldsymbol{a}_{11} - k \, \boldsymbol{a}_{12} +\boldsymbol{b}_{12} 
 = - \rho \, ( k \, \boldsymbol{a}_{11} + \boldsymbol{b}_{11}) - k \, \boldsymbol{a}_{12}^\ast + \boldsymbol{b}_{12}^\ast \, , \nonumber \\
&& \Omega_{12} + \Omega_{21} = - \rho^2( k \, \boldsymbol{a}_{11} - \boldsymbol{b}_{11}) + \rho \, (\imag \, \boldsymbol{a}_{11} - k \,  \boldsymbol{a}_{12} - \boldsymbol{b}_{12}) - \rho\,( k \, \boldsymbol{a}_{12}^\ast + \boldsymbol{b}_{12}^\ast) \nonumber \\
&&\hspace{2.2cm} + \imag \, \boldsymbol{a}_{12}^\ast - k \, \boldsymbol{a}_{22} + \boldsymbol{b}_{22} \, .   \nonumber
\ee
Using the first equation to eliminate $\boldsymbol{b}_{11}$ from the others, leads to
\bez
&& \Omega_{11} = - 2 \, k \, \rho \, \boldsymbol{a}_{11} +\imag \, \boldsymbol{a}_{11} - k \, \boldsymbol{a}_{12} + \boldsymbol{b}_{12}
 = - 2 \, k \, \rho \, \boldsymbol{a}_{11} - k \, \boldsymbol{a}_{12}^\ast +\boldsymbol{b}_{12}^\ast \, , \\
&& \Omega_{12} + \Omega_{21} = \rho \, (\imag \, \boldsymbol{a}_{11} - k \, \boldsymbol{a}_{12} - \boldsymbol{b}_{12}) - \rho \, ( k \, \boldsymbol{a}_{12}^\ast + \boldsymbol{b}_{12}^\ast) + \imag \, \boldsymbol{a}_{12}^\ast - k \, \boldsymbol{a}_{22} + \boldsymbol{b}_{22} \, .
\eez
Consistency of the two expressions for $\Omega_{11}$ requires
\bez
 \boldsymbol{a}_{11} = 2 \, \mathrm{Im}(k \, \boldsymbol{a}_{12} - \boldsymbol{b}_{12}) \, ,
\eez
which is
\be
    \mathrm{Im}(a_{21} \, a_{22}^\ast) 
 - \sigma_1 \sigma_2 \, \Big( \frac{1}{2} |a_{11}|^2 - k \, \mathrm{Im}(a_{11} \, a_{12}^\ast) \Big) = 0 \, .  \label{ex4.7(2)_constr2}
\ee 
The remaining components of $\Omega$ are determined by
\bez
\Omega_{12x} &=& \frac{1}{2} k^{-2} \big( \imag \, \boldsymbol{a}_{11} + k \, \boldsymbol{a}_{12} + \boldsymbol{b}_{12} \big) \, ,\qquad
\Omega_{12t} = - \frac{1}{2} \big( \imag \, \boldsymbol{a}_{11} - k \, \boldsymbol{a}_{12} - \boldsymbol{b}_{12} \big) \, , \\
\Omega_{21x} &=& - \frac{1}{2} k^{-2} \big( 2 \imag \, \boldsymbol{a}_{11} - k \, \boldsymbol{a}_{12}^\ast - \boldsymbol{b}_{12}^\ast \big) \, , \qquad 
\Omega_{21t} = \frac{1}{2} ( k \, \boldsymbol{a}_{12}^\ast + \boldsymbol{b}_{12}^\ast ) \, , \\
\Omega_{22x} &=& \frac{1}{2} k^{-3} \Big( 2 k^2 \rho^2 \, \boldsymbol{a}_{11} - k \, \rho \, \big( - k \, (\boldsymbol{a}_{12} + \boldsymbol{a}_{12}^\ast) + \imag \, \boldsymbol{a}_{11} + \boldsymbol{b}_{12} + \boldsymbol{b}_{12}^\ast \big) + k^2 \boldsymbol{a}_{22} - k \, (\imag \, \boldsymbol{a}_{12} - \boldsymbol{b}_{22}) \\
&& + \boldsymbol{a}_{11} - \imag \, (\boldsymbol{b}_{12} - \boldsymbol{b}_{12}^\ast) \Big) \, , \\
\Omega_{22t} &=& \frac{1}{2} k^{-1} \Big( 2 k^2 \rho^2 \, \boldsymbol{a}_{11} - k \, \rho \, \big( - k \, (\boldsymbol{a}_{12} + \boldsymbol{a}_{12}^\ast ) + \imag \, \boldsymbol{a}_{11} + \boldsymbol{b}_{12} + \boldsymbol{b}_{12}^\ast \big) + k^2 \boldsymbol{a}_{22} - k \, (\imag \, \boldsymbol{a}_{12}^\ast - \boldsymbol{b}_{22}) \Big) \, .
\eez
This system is solved by
\bez
\Omega_{12} 
&=& - \rho \, (k \, \boldsymbol{a}_{12} + \boldsymbol{b}_{12} + \imag \, \boldsymbol{a}_{11}) - \imag \, \boldsymbol{a}_{11} \, t + c_{12} \, , \\
\Omega_{21} 
&=& - \rho \, (k \, \boldsymbol{a}_{12}^\ast + \boldsymbol{b}_{12}^\ast - 2 \imag \, \boldsymbol{a}_{11}) + \imag \, \boldsymbol{a}_{11} \, t + c_{21}\, , \\
\Omega_{22} &=& - \frac{2}{3} k \, \rho^3 \boldsymbol{a}_{11} + \frac{1}{2} \rho^2 \big( -k \, (\boldsymbol{a}_{12} + \boldsymbol{a}_{12}^\ast) + \imag \, \boldsymbol{a}_{11} + \boldsymbol{b}_{12} + \boldsymbol{b}_{12}^\ast \big) - \rho \, ( k \, \boldsymbol{a}_{22} + \boldsymbol{b}_{22} - 2 \imag \,  \boldsymbol{a}_{12} + \imag \, \boldsymbol{a}_{12}^\ast )  \\ &&
+\imag \, (\boldsymbol{a}_{12} - \boldsymbol{a}_{12}^\ast) \, t + c_{22} \, ,
\eez
with constants $c_{ij}$, for which (\ref{mFLeq_key}) and the above equation for $\Omega_{12}+\Omega_{21}$ require
\bez
&& \mathrm{Re}(c_{12}) = \frac{1}{2}(- k \, \boldsymbol{a}_{22} +  
   \boldsymbol{b}_{22}) \, , \qquad 
 \mathrm{Re}(c_{21}) = \frac{1}{2} \big(-k \, \boldsymbol{a}_{22} + 2 \, \mathrm{Im}(\boldsymbol{a}_{12}) + \boldsymbol{b}_{22} \big)\, , \\
&& \mathrm{Im}(c_{21}) = \mathrm{Re}(\boldsymbol{a}_{12}) - \mathrm{Im}(c_{12}) \, , \qquad \mathrm{Im}(c_{22}) = \frac{1}{2} \boldsymbol{a}_{22} \, ,
\eez
so that $\mathrm{Re}(c_{22})$ and $\mathrm{Im}(c_{12})$ remain as 
free parameters. Let $k$ be determined by (\ref{ex4.7(2)_constr1}). The constants $a_{ij}$, $i,j=1,2$, are still constrained by (\ref{ex4.7(2)_constr2}). Then (\ref{u'_ex4.7})
is a quasi-rational solution of the $m_2\times m_1$ matrix FL equation.  
If $\boldsymbol{a}_{11} \neq 0$, for constant $t$ (respectively $x$) the numerators of components of $u'$ grow at most with $x^3$ (respectively $t^3$), whereas $\det(\Omega)$ grows with the fourth power. On a  
line, along which $\rho$ and thus $x + k^2 t$ is constant, $|u'| \sim 1/|t|$ as $t \to \pm \infty$. As a consequence, the solution must be localized along a worldline, which is not a straight line.  
In particular, our results supplement the treatment of the scalar FL equation ($m_1=m_2=1$) in Section~4.1 of \cite{MH+Ye25}, also see  Fig.~\ref{fig:scalarFL_2x2Jordan_imag_gamma} for two examples.
\begin{figure}[h]
	\begin{center}
		\includegraphics[scale=.4]{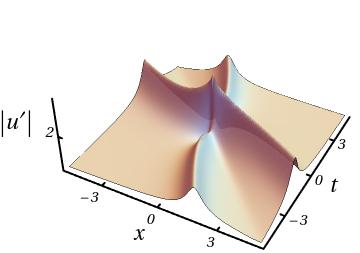} 
		\hspace{1cm}
		\includegraphics[scale=.4]{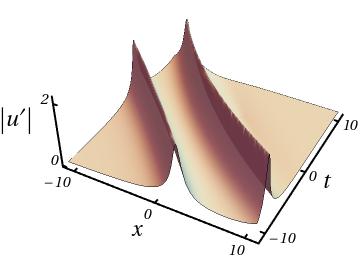} 
		\parbox{15cm}{
	\caption{Plots of the absolute value of solutions of the scalar FL 
	equation (with $\sigma_1=\sigma_2=1$) obtained with a $2\times 2$	 Jordan block $\Gamma$ with eigenvalue $\gamma = -\imag$, and $a_1 = (1, -\frac{1}{2} \imag)^T$, $a_2 = (1,0)^T$,  $\mathrm{Im}(c_{12})=0$, and $\mathrm{Re}(c_{22})=0$, respectively $=50$. Also see Fig.~\ref{fig:scalarFL_2x2Jordan_imag_gamma_2} for details on the first example.
					\label{fig:scalarFL_2x2Jordan_imag_gamma} } 
		}
	\end{center}
\end{figure} 
\begin{figure}[h]
	\begin{center}
		\includegraphics[scale=.25]{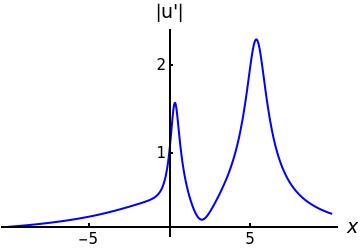} 
		\includegraphics[scale=.25]{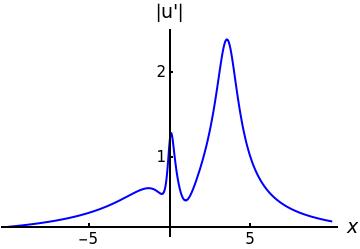} 
		\includegraphics[scale=.25]{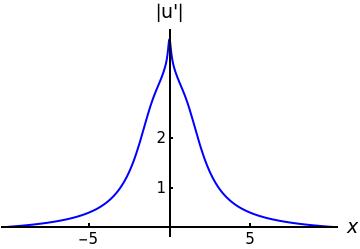} 
		\includegraphics[scale=.25]{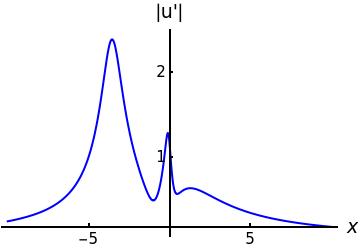} 
		\includegraphics[scale=.25]{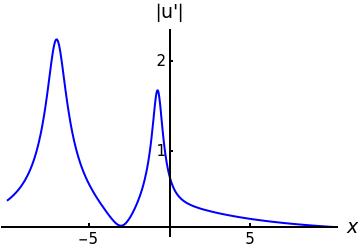}   \\
		\includegraphics[scale=.45]{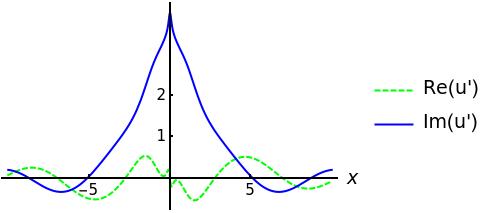}
		\parbox{15cm}{
		\caption{Plots of the absolute value of the first solution of the scalar FL equation displayed in Fig.~\ref{fig:scalarFL_2x2Jordan_imag_gamma}, at $t=-2, -1,0, 1, 3$. Below is a plot of $\mathrm{Re}(u')$ and $\mathrm{Im}(u')$ at $t=0$. The apparent cusp of $\mathrm{Im}(u')$ at $t=0$ is just a sharp maximum.
			\label{fig:scalarFL_2x2Jordan_imag_gamma_2} } 
		}
	\end{center}
\end{figure} 
\hfill$\Box$
\end{example}

The preceding examples exhaust the possibilities of solutions up to $n=2$. Mostly, they also present corresponding extensions to arbitrary $n$, but in the case, where $\Gamma$ is a Jordan block, we restricted our computations to $n=2$. The step to higher $n$ is straightforward, but the resulting expressions lengthy. According to Remark~\ref{rem:FL_superposition}, we can superpose any number of solutions from the classes elaborated in the above examples.

\section{The vector FL equation}
\label{sec:vFL}

Choosing $m_1 = 1$, $u = (u_1,\ldots,u_{m_2})^T$ is an $m_2$-component column vector. Without restriction of generality, we may set 
$\sigma_1 = 1$. Renaming $\sigma_2$ to $\boldsymbol{\sigma}$, (\ref{mFLeq}) reads
\be
    u_{xt} = (1 + \imag \, u^\dagger \boldsymbol{\sigma} \, u_x) \, u + \imag \, (u^\dagger \, \boldsymbol{\sigma} \, u) \, u_x \, ,  \label{vFLeq}
\ee
which first appeared in \cite{Guo+Ling12}, also see \cite{ZYCW17,LFZ18,Yang+Zhang18,KXM18,Wang+Chen19,YZCBG19,Gerd+Ivan21,Ling+Su24}.

Setting all components but one to zero, (\ref{vFLeq}) reduces to the scalar FL equation for the remaining component. 
If all components of $u$ are equal, then $u_\mu/\sqrt{m_2}$ satisfies 
the scalar FL equation. 

Now we specialize Theorem~\ref{thm:mFL_DT} to 
the vector FL case. In this case, the linear system (\ref{mFLeq_linsys}) 
can be decoupled and determines $\eta_1$ in terms of $\tilde{\eta}_2$.

\begin{theorem}
\label{thm:vFL_DT}
Let $\Gamma$ be an invertible constant $n \times n$ matrix, such that $\Gamma$ and $-\Gamma^\dagger$ have no common eigenvalue (spectrum condition). Let $u$ be a solution of the vector FL equation (\ref{vFLeq}) on an open set $\mathcal{U} \subset \mathbb{R}^2$, where $u^\dagger \boldsymbol{\sigma} u \neq 0$ and $u_x^\dagger \boldsymbol{\sigma} u \neq 0$. 
Furthermore, let $\tilde{\eta}$ be an $n \times m_2$ matrix solution of the linear system
\be
  && \hspace*{-.8cm}  \Big( \tilde{\eta}_t   
  - \frac{1}{2} \Gamma \tilde{\eta}    
  + \frac{u^\dagger \boldsymbol{\sigma} u}{u_x^\dagger \boldsymbol{\sigma} u} \big( \Gamma \tilde{\eta}_x 
    - \frac{1}{2} \tilde{\eta} \big) 
  + \imag \, u^\dagger \boldsymbol{\sigma} u \, \tilde{\eta} \Big) \, u  = 0 
  \, , \label{vFLeq_linsys_1} \\
  && \hspace*{-.8cm} \tilde{\eta}_{xx} \, u + \tilde{\eta}_x \Big( u_x - \frac{(u_x^\dagger \boldsymbol{\sigma} u)_x}{u_x^\dagger \boldsymbol{\sigma} u} \, u \Big) 
  + \frac{1}{2} \Big( \frac{(u_x^\dagger \boldsymbol{\sigma} u)_x}{u_x^\dagger \boldsymbol{\sigma} u} I_n
  - \frac{1}{2} \Gamma^{-1} \Big) \Gamma^{-1} \tilde{\eta} \, u 
  + ( \imag \, u_x^\dagger \boldsymbol{\sigma} u -  \frac{1}{2} ) \, \Gamma^{-1} \tilde{\eta} \, u_x = 0 
    \, ,  \qquad  \label{vFLeq_linsys_2}  
\ee
and
\be
( \tilde{\eta}_x - \frac{1}{2} \Gamma^{-1}  \tilde{\eta} ) \, P = 0 \, , \qquad
( \tilde{\eta}_t - \frac{1}{2} \Gamma \, \tilde{\eta} ) \, Q = 0 
  \, ,  \label{vFLeq_linsys_3}
\ee
where 
\bez
 P =  I_{m_2} - \frac{u \, u^\dagger_x \boldsymbol{\sigma}}{u_x^\dagger \boldsymbol{\sigma} u} 
     \, , \qquad  
 Q = I_{m_2} - \frac{u \, u^\dagger \boldsymbol{\sigma}}{u^\dagger \boldsymbol{\sigma} u}  \, ,
\eez
are projections (both with kernel spanned by $u$). 
Furthermore, let $\Omega$ be an $n \times n$ matrix solution of the Lyapunov equation 
\be
	\Gamma \Omega + \Omega \Gamma^\dagger 
	= \frac{\imag}{|u_x^\dagger \boldsymbol{\sigma} u|^2} \big( \tilde{\eta}_x 
	- \frac{1}{2} \Gamma^{-1} \tilde{\eta} \big) \, u \, \big[  
	\big( \tilde{\eta}_x 
	- \frac{1}{2} \Gamma^{-1} \tilde{\eta} \big) u \big]^\dagger
	\, \Gamma^\dagger 
	+ \tilde{\eta} \, \boldsymbol{\sigma} \, \tilde{\eta}^\dagger \, . \label{vFLeq_Lyap}
\ee
Then, in any open subset of $\mathcal{U}$ where $\Omega$ is invertible,
\be
  u' = u - \frac{\imag}{u_x^\dagger \boldsymbol{\sigma} u} \, \boldsymbol{\sigma} \,  \tilde{\eta}^\dagger \, \Omega^{-1}  \big( \tilde{\eta}_x 
  - \frac{1}{2} \Gamma^{-1} \tilde{\eta} \big) \, u 
  \label{vFLeq_u'}
\ee
solves the vector FL equation (\ref{vFLeq}).
\end{theorem}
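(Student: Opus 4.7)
The plan is to specialize Theorem~\ref{thm:mFL_DT} to the vector setting $m_1=1$, with $\sigma_1=1$ and $\sigma_2=\boldsymbol{\sigma}$. In that specialization $\eta_1$ becomes an $n$-component column vector and can be eliminated \emph{algebraically} from the matrix linear system (\ref{mFLeq_linsys}), so that the residual conditions on $\tilde{\eta}_2$ (to be identified with $\tilde{\eta}$) become exactly (\ref{vFLeq_linsys_1})--(\ref{vFLeq_linsys_3}), while the Lyapunov equation and the Darboux formula reduce to (\ref{vFLeq_Lyap}) and (\ref{vFLeq_u'}).

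The $\tilde{\eta}_{2x}$-equation of (\ref{mFLeq_linsys}) reads $\tilde{\eta}_x - \frac{1}{2}\Gamma^{-1}\tilde{\eta} = -\imag \eta_1\, u_x^\dagger\boldsymbol{\sigma}$, which declares the left-hand side to be a rank-one matrix whose column direction is $\eta_1$ and whose row direction is $u_x^\dagger\boldsymbol{\sigma}$. Right-multiplying by the projection $P$, whose range is the kernel of $u_x^\dagger\boldsymbol{\sigma}$, gives the first part of (\ref{vFLeq_linsys_3}); right-multiplying by $u$ and dividing by $u_x^\dagger\boldsymbol{\sigma}\,u\neq 0$ produces the explicit formula
\[
 \eta_1 \,=\, \frac{\imag}{u_x^\dagger\boldsymbol{\sigma}\, u}\left(\tilde{\eta}_x - \frac{1}{2}\Gamma^{-1}\tilde{\eta}\right) u,
\]
which is then used in (\ref{vFLeq_u'}). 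The $\tilde{\eta}_{2t}$-equation is handled analogously: right-multiplying by $Q$, which satisfies $u^\dagger\boldsymbol{\sigma}\,Q = 0$, yields the second part of (\ref{vFLeq_linsys_3}); right-multiplying by $u$ produces a second expression for $\Gamma\eta_1$ in terms of $\tilde{\eta}$ and $\tilde{\eta}_t$. Equating the two expressions for $\Gamma\eta_1$, after clearing the scalar denominators $u_x^\dagger\boldsymbol{\sigma}\,u$ and $u^\dagger\boldsymbol{\sigma}\,u$, is precisely (\ref{vFLeq_linsys_1}).

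For the remaining two equations in (\ref{mFLeq_linsys}) I would substitute the explicit formula for $\eta_1$ and differentiate. Applied to the $\eta_{1x}$-equation, this produces, after collecting the terms proportional to $\tilde{\eta}_{xx}u$, $\tilde{\eta}_x u_x$, $\tilde{\eta}_x u$, $\Gamma^{-1}\tilde{\eta}\,u$, $\Gamma^{-2}\tilde{\eta}\,u$ and $\Gamma^{-1}\tilde{\eta}\,u_x$, exactly equation (\ref{vFLeq_linsys_2}); the factor $(u_x^\dagger\boldsymbol{\sigma}\,u)_x/(u_x^\dagger\boldsymbol{\sigma}\,u)$ enters through the $x$-derivative of $1/(u_x^\dagger\boldsymbol{\sigma}\,u)$. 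The $\eta_{1t}$-equation of (\ref{mFLeq_linsys}) is not imposed as an independent hypothesis of Theorem~\ref{thm:vFL_DT}; it should hold automatically as a compatibility condition, obtained by applying $\partial_t$ to the explicit formula for $\eta_1$ and then invoking the already derived conditions together with the hypothesis that $u$ solves the vector FL equation (\ref{vFLeq}).

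Once $(\eta_1,\tilde{\eta}_2)$ satisfies the full system (\ref{mFLeq_linsys}), the remaining reductions are direct: substituting $\eta_1$ into the Lyapunov equation (\ref{mFLeq_Lyap}) yields (\ref{vFLeq_Lyap}), with $|u_x^\dagger\boldsymbol{\sigma}\,u|^2$ appearing in the denominator as the modulus squared of $u_x^\dagger\boldsymbol{\sigma}\,u$, and (\ref{mFLeq_u'}) becomes (\ref{vFLeq_u'}). Theorem~\ref{thm:mFL_DT} then provides a solution $u'$ of the matrix FL equation (\ref{mFLeq}), which under the present choice of parameters is the vector FL equation (\ref{vFLeq}). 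The main technical obstacle is the bookkeeping for the $\eta_{1x}$-equation substitution: the many terms produced by expanding $\Gamma(\imag\alpha/s)_x + \imag\alpha/(2s)$ with $\alpha=(\tilde{\eta}_x - \frac{1}{2}\Gamma^{-1}\tilde{\eta})u$ and $s = u_x^\dagger\boldsymbol{\sigma}\,u$ must reassemble into exactly (\ref{vFLeq_linsys_2}), and the parallel compatibility argument that eliminates the $\eta_{1t}$-equation requires careful use of the seed equation (\ref{vFLeq}).
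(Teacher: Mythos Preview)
Your proposal is correct and follows essentially the same route as the paper's own proof: specialize Theorem~\ref{thm:mFL_DT} to $m_1=1$, use the $\tilde{\eta}_{2x}$- and $\tilde{\eta}_{2t}$-equations of (\ref{mFLeq_linsys}) to obtain both the projection conditions (\ref{vFLeq_linsys_3}) and two explicit expressions for $\eta_1$, whose equality gives (\ref{vFLeq_linsys_1}), substitute $\eta_1$ into the $\eta_{1x}$-equation to obtain (\ref{vFLeq_linsys_2}), and observe that the remaining $\eta_{1t}$-equation is redundant. The paper records this last redundancy by writing out the resulting $t$-equation and stating that it ``can be shown to be a consequence of (\ref{vFLeq_linsys_1}) and (\ref{vFLeq_linsys_2})'', which is exactly the compatibility check you outline.
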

\begin{proof}
Writing $\tilde{\eta}$ instead of $\tilde{\eta}_2$, the linear system (\ref{mFLeq_linsys}) reads
\bez
&&  \Gamma \, \eta_{1x} + \frac{1}{2} \eta_1 - \tilde{\eta} \, u_x = 0 \, , \qquad\quad
\eta_{1t} + \big( \frac{1}{2} \Gamma  - \imag \,  u^\dagger \boldsymbol{\sigma} u \,  I_n \big) \, \eta_1 - \tilde{\eta} \, u = 0
\, ,   \nonumber \\
&& \Gamma \tilde{\eta}_x - \frac{1}{2} \tilde{\eta} + \imag \, \Gamma \eta_1 \,  u^\dagger_x \boldsymbol{\sigma} = 0 
       \, , \qquad
 \tilde{\eta}_t - \frac{1}{2} \Gamma  \tilde{\eta} + \imag \, \tilde{\eta} \,  u u^\dagger \boldsymbol{\sigma} 
 - \imag \, \Gamma \eta_1 \, u^\dagger \boldsymbol{\sigma} = 0 \, .  
\eez
Multiplying the last two equations by $u$ from the right, and solving them for $\eta_1$, we get
\bez
    \frac{\imag}{u_x^\dagger \boldsymbol{\sigma} u} \big( \tilde{\eta}_x 
     - \frac{1}{2} \Gamma^{-1} \tilde{\eta} \big) \, u 
   =  \eta_1 
   = - \frac{\imag}{u^\dagger \boldsymbol{\sigma} u} \big(\Gamma^{-1} \tilde{\eta}_t   
     - \frac{1}{2} \tilde{\eta} \big) u + \Gamma^{-1} \tilde{\eta} \, u  \, .
\eez
Inserting this back in the last two equations of the linear system, 
they are turned into (\ref{vFLeq_linsys_3}).
Using the expressions for $\eta_1$ in the first two equations of the linear system, yields (\ref{vFLeq_linsys_2}) and 
\bez
  && \Big( \big( \tilde{\eta}_t - \frac{1}{2} \Gamma \,  \tilde{\eta} \big) \, u \Big)_t + \Big( \frac{1}{2} \Gamma -\big( \imag \, u^\dagger \boldsymbol{\sigma} u + \frac{(u^\dagger \boldsymbol{\sigma} u)_t}{u^\dagger \boldsymbol{\sigma} u} \big) I_n \Big) \big( \tilde{\eta}_t - \frac{1}{2} \Gamma  \tilde{\eta} \big) \, u 
+ \imag \, u^\dagger \boldsymbol{\sigma} u \, ( \tilde{\eta} u )_t \\
  && - u^\dagger \boldsymbol{\sigma} u \, \big( \frac{1}{2} \imag \, \Gamma - u^\dagger \boldsymbol{\sigma} u \,  I_n \big) \, \tilde{\eta} \, u = 0 \, .
\eez
The two expressions for $\eta_1$ imply (\ref{vFLeq_linsys_1}), and the last equation can be shown to be a consequence of  (\ref{vFLeq_linsys_1}) and  (\ref{vFLeq_linsys_2}).
(\ref{vFLeq_Lyap}) and (\ref{vFLeq_u'}) result from (\ref{mFLeq_Lyap}) and (\ref{mFLeq_u'}), respectively, by using the first expression for $\eta_1$ above. 
\end{proof}

\begin{corollary}
	\label{cor:vFL_DT_nosc}
The spectrum condition for $\Gamma$ can be dropped in  Theorem~\ref{thm:vFL_DT} if the assumptions there are supplemented by 
\be
	\Gamma \Omega + \Omega^\dagger \Gamma^\dagger = \tilde{\eta} \, \boldsymbol{\sigma} \, \tilde{\eta}^\dagger \, ,  
	\label{vFLeq_key}
\ee	
and
\be
	&& \Omega_x = \frac{\imag}{u^\dagger \boldsymbol{\sigma} u_x} \Big( \frac{1}{u_x^\dagger \boldsymbol{\sigma}u} 
	\Xi \Big)_x 
	\, \Xi^\dagger + \Big( \tilde{\eta}_x \, \boldsymbol{\sigma} - \frac{1}{u_x^\dagger \boldsymbol{\sigma} u} \Xi \, u_x^\dagger \Big) \, \tilde{\eta}^\dagger \, \Gamma^{\dagger -1} \, , \nonumber \\
	&& \Omega_t = - \frac{\imag}{2 |u_x^\dagger \boldsymbol{\sigma} u|^2} \Xi \, \Xi^\dagger \, \Gamma^\dagger 
	+ \big( \frac{1}{2} \tilde{\eta} \, \boldsymbol{\sigma} - \frac{1}{u_x^\dagger \boldsymbol{\sigma} u} \Xi \, u^\dagger  \big) \, \tilde{\eta}^\dagger 
	\, ,	\label{vFLeq_Om_deriv}
\ee	
where we introduced
\be
	\Xi = (\tilde{\eta}_x - \frac{1}{2} \Gamma^{-1} \tilde{\eta} ) \, u \, . 
	\label{Xi}
	\ee
\end{corollary}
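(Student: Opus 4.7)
The plan is to derive this corollary as the vector specialization of Corollary~\ref{cor:mFL_no_sc}, in exactly the way Theorem~\ref{thm:vFL_DT} specializes Theorem~\ref{thm:mFL_DT}. With $m_1 = 1$, $\sigma_1 = 1$ and $\sigma_2 = \boldsymbol{\sigma}$, the proof of Theorem~\ref{thm:vFL_DT} has already extracted the explicit expression
\begin{equation*}
   \eta_1 = \frac{\imag}{u_x^\dagger \boldsymbol{\sigma} u} \, \Xi
\end{equation*}
from the first two equations of the matrix linear system (\ref{mFLeq_linsys}), where $\Xi$ is given by (\ref{Xi}). Hermiticity of $\boldsymbol{\sigma}$ then yields $\eta_1^\dagger = -\imag \, \Xi^\dagger / (u^\dagger \boldsymbol{\sigma} u_x)$. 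The strategy is to substitute these expressions into the supplementary conditions (\ref{mFLeq_key}) and (\ref{mFLeq_Om_deriv}) of Corollary~\ref{cor:mFL_no_sc} and to verify that they become precisely (\ref{vFLeq_key}) and (\ref{vFLeq_Om_deriv}).

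The first check is immediate: with $\sigma_2 = \boldsymbol{\sigma}$ and $\tilde{\eta}_2 \to \tilde{\eta}$, (\ref{mFLeq_key}) is nothing but (\ref{vFLeq_key}). For the $\Omega_x$ equation, I would substitute $\eta_{1x} = \imag \, (\Xi/(u_x^\dagger \boldsymbol{\sigma} u))_x$ together with the formula for $\eta_1^\dagger$ into $\imag \, \eta_{1x} \sigma_1 \eta_1^\dagger$; collecting the three factors $\imag \cdot \imag \cdot (-\imag) = \imag$ produces the first summand of (\ref{vFLeq_Om_deriv}). The remaining piece $\imag \, \eta_1 \sigma_1 u_x^\dagger = -\Xi u_x^\dagger / (u_x^\dagger \boldsymbol{\sigma} u)$ combines with $\tilde{\eta}_x \boldsymbol{\sigma}$ to reproduce the bracket multiplying $\tilde{\eta}^\dagger \Gamma^{\dagger -1}$. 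An entirely analogous evaluation for $\Omega_t$, using $-\imag \, \eta_1 \eta_1^\dagger = -\imag \, \Xi \Xi^\dagger / |u_x^\dagger \boldsymbol{\sigma} u|^2$ and $2 \imag \, \eta_1 u^\dagger = -2 \, \Xi u^\dagger / (u_x^\dagger \boldsymbol{\sigma} u)$, yields the $\Omega_t$ equation of (\ref{vFLeq_Om_deriv}).

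Once these identifications are in place, Corollary~\ref{cor:mFL_no_sc} (applied with the above choices of $m_1, \sigma_1, \sigma_2$) guarantees that $u'$ defined by (\ref{mFLeq_u'}) solves the matrix FL equation; substituting $\eta_1$ in (\ref{mFLeq_u'}) recovers precisely (\ref{vFLeq_u'}), which, being an $m_2$-component column vector, solves the vector FL equation (\ref{vFLeq}). Compatibility of (\ref{vFLeq_key}) with (\ref{vFLeq_Om_deriv}) is inherited from the analogous compatibility verified at the end of the proof of Corollary~\ref{cor:mFL_no_sc}, since the reduction by $\eta_1 = \imag\,\Xi/(u_x^\dagger\boldsymbol{\sigma} u)$ preserves the $x$- and $t$-constancy of $\Gamma \Omega + \Omega^\dagger \Gamma^\dagger - \tilde{\eta} \, \boldsymbol{\sigma}\,\tilde{\eta}^\dagger$. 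The main obstacle is bookkeeping: tracking the signs, complex conjugations and denominators produced by the substitution, and keeping in mind that the two representations of $\eta_1$ implied by (\ref{vFLeq_linsys_3}) remain consistent --- a point that was already settled in the proof of Theorem~\ref{thm:vFL_DT} and is not affected by dropping the spectrum condition on $\Gamma$.
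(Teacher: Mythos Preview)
Your proposal is correct and follows exactly the approach the paper takes: the paper's proof is a single sentence stating that the result follows from Corollary~\ref{cor:mFL_no_sc} together with the intermediate identity $\eta_1 = \imag\,\Xi/(u_x^\dagger \boldsymbol{\sigma} u)$ extracted in the proof of Theorem~\ref{thm:vFL_DT}. You have simply spelled out the substitutions in detail, which the paper leaves to the reader.
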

\begin{proof}
This follows directly from Corollary~\ref{cor:mFL_no_sc}, using intermediate results in the proof of Theorem~\ref{thm:mFL_DT}.
\end{proof}

\begin{remark}
Using a familiar formula for the determinant of a $2 \times 2$ block matrix, the new solution (\ref{vFLeq_u'}) can be expressed as follows in terms of a quotient of determinants,
\bez 
  u'_\mu &=& u_\mu - \frac{\imag}{(u_x^\dagger \boldsymbol{\sigma} u) \, \det( \Omega)} \, \det \left( \begin{array}{cc}
  	\Omega & \big( \tilde{\eta}_x 
  	- \frac{1}{2} \Gamma^{-1} \tilde{\eta} \big) \, u \\
  	  \boldsymbol{\sigma} \,  \tilde{\eta}_\mu^\dagger & 0 \end{array} \right) \\
  &=& \frac{1}{\det(\Omega)} \, \det \left( \begin{array}{cc}
  	   \Omega & \big( \tilde{\eta}_x 
  	   - \frac{1}{2} \Gamma^{-1} \tilde{\eta} \big) \, u \\
  	   \imag \, (u_x^\dagger \boldsymbol{\sigma} u)^{-1} \boldsymbol{\sigma} \,  \tilde{\eta}_\mu^\dagger & u_\mu \end{array} \right) \qquad \quad
  	   \mu=1,\ldots,m_2 \, .
\eez
In the cited references, where solutions have been generated via the classical (iterative) Darboux transformation method, they are mostly 
presented in such a form, with a specialized matrix $\Omega$.
\hfill $\Box$  	   
\end{remark}

\begin{remark}
For $m_2=1$, the equations (\ref{vFLeq_linsys_3}) are identically satisfied. For $m_2 >1$ and if $\boldsymbol{\sigma}$ is positive (or negative) definite, 
at each point of $\mathbb{R}^2$, where $u$ is defined and $u \neq 0$, the vector $u$ determines a splitting of the $m_2$-dimensional complex vector space, 
$\mathbb{C}^{m_2} = \mathrm{span}(u) \oplus V$, where $V$ is the orthogonal complement of the linear span of $u$ with respect to the inner product $\langle w | w' \rangle = w^\dagger \boldsymbol{\sigma} w'$. Then the second of (\ref{vFLeq_linsys_3}) implies that $(\tilde{\eta}_t - \frac{1}{2} \Gamma \tilde{\eta}) \, w = 0$ for all $w \in V$. Whereas $Q$ is an orthogonal projection, $P$ is not. We have
\bez
       P(v) = - \frac{\langle u_x | v \rangle}{\langle u_x | u \rangle} \, u + v  \qquad \forall v \perp u \, .
\eez
The first equation of (\ref{vFLeq_linsys_3}) says that $\tilde{\eta}_x - \frac{1}{2} \Gamma^{-1} \tilde{\eta}=0$ holds on the orthogonal complement of  $\mathrm{span}(u_x)$. 
In particular, we have $\tilde{\eta}_x - \frac{1}{2} \Gamma^{-1} \tilde{\eta} = 0$ and $\tilde{\eta}_t - \frac{1}{2} \Gamma \tilde{\eta} = 0$ on  the orthogonal complement of $\mathrm{span}(u,u_x)$.   \hfill $\Box$
\end{remark}

\begin{proposition}
	\label{prop:vFL_unitary_symmetry}
Let $\boldsymbol{U}_0$ be any constant $\boldsymbol{\sigma}$-unitary matrix (i.e., a 
constant $m_2 \times m_2$ matrix satisfying $\boldsymbol{U}_0^\dagger \, \boldsymbol{\sigma} \, \boldsymbol{U}_0 = \boldsymbol{\sigma}$). 
Then 
$u \mapsto \boldsymbol{U}_0 \, u$ is a symmetry of the vector FL 
equation (\ref{vFLeq}) and also a symmetry of the Darboux transformation in Theorem~\ref{thm:vFL_DT} and Corollary~\ref{cor:vFL_DT_nosc}. 	
\end{proposition}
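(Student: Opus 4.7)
The key algebraic identity I will use throughout is that $\boldsymbol{U}_0^\dagger \boldsymbol{\sigma} \boldsymbol{U}_0 = \boldsymbol{\sigma}$, combined with $\boldsymbol{\sigma}^2 = I_{m_2}$, is equivalent to any of
\[
 \boldsymbol{U}_0^\dagger \boldsymbol{\sigma} = \boldsymbol{\sigma} \boldsymbol{U}_0^{-1} \, , \qquad
 \boldsymbol{\sigma} (\boldsymbol{U}_0^{-1})^\dagger = \boldsymbol{U}_0 \, \boldsymbol{\sigma} \, , \qquad
 \boldsymbol{U}_0^{-1} \boldsymbol{\sigma} (\boldsymbol{U}_0^{-1})^\dagger = \boldsymbol{\sigma} \, ,
\]
so that the Hermitian form $w^\dagger \boldsymbol{\sigma} w'$ is preserved under $w \mapsto \boldsymbol{U}_0 w$. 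For the first claim, $u \mapsto \boldsymbol{U}_0 u$ leaves the scalars $u^\dagger \boldsymbol{\sigma} u$ and $u^\dagger \boldsymbol{\sigma} u_x$ appearing in (\ref{vFLeq}) unchanged, while both sides pick up a common left factor $\boldsymbol{U}_0$, which then cancels by invertibility.

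For the Darboux transformation, I will propose the joint substitution
\[
 u \mapsto \boldsymbol{U}_0 u \, , \qquad  \tilde{\eta} \mapsto \tilde{\eta} \, \boldsymbol{U}_0^{-1} \, , \qquad
 \eta_1 \mapsto \eta_1 \, , \qquad \Omega \mapsto \Omega \, , \qquad \Gamma \mapsto \Gamma \, ,
\]
and verify that it preserves the hypotheses of Theorem~\ref{thm:vFL_DT} (and those of Corollary~\ref{cor:vFL_DT_nosc} when the spectrum condition fails). Under this rule the contracted products $\tilde{\eta} \, u$ and $\tilde{\eta}_x u$, and thus $\Xi$ of (\ref{Xi}), are invariant; $\tilde{\eta} \, \boldsymbol{\sigma} \, \tilde{\eta}^\dagger$ is invariant by the third identity above; and the projections satisfy $P \mapsto \boldsymbol{U}_0 P \boldsymbol{U}_0^{-1}$, $Q \mapsto \boldsymbol{U}_0 Q \boldsymbol{U}_0^{-1}$. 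Consequently, each of the equations (\ref{vFLeq_linsys_1})--(\ref{vFLeq_linsys_3}), the Lyapunov equation (\ref{vFLeq_Lyap}), and the supplementary relations (\ref{vFLeq_key}) and (\ref{vFLeq_Om_deriv}), is either pointwise invariant or acquires a uniform trailing factor $\boldsymbol{U}_0^{-1}$ that cancels. Finally, in (\ref{vFLeq_u'}) the identity $\boldsymbol{\sigma} (\boldsymbol{U}_0^{-1})^\dagger = \boldsymbol{U}_0 \boldsymbol{\sigma}$ sends the correction term to $\boldsymbol{U}_0$ times its original value, so $u' \mapsto \boldsymbol{U}_0 u'$, as required.

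The cleanest route I expect to take is to perform this bookkeeping one level up, in Theorem~\ref{thm:mFL_DT}: since (\ref{vFLeq_linsys_1})--(\ref{vFLeq_u'}) are precisely the specialization $m_1 = 1$, $\sigma_1 = 1$, $\sigma_2 = \boldsymbol{\sigma}$ of (\ref{mFLeq_linsys})--(\ref{mFLeq_u'}), it suffices to check covariance of the matrix FL system under the same joint substitution, then specialize. No new ideas appear in that verification; the only real danger, and the main obstacle I anticipate, is a misplaced $\boldsymbol{\sigma}$ when moving $\boldsymbol{U}_0$ or $(\boldsymbol{U}_0^\dagger)^{-1}$ past it, which the three equivalent forms of $\boldsymbol{\sigma}$-unitarity above are designed precisely to rule out.
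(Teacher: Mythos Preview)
Your proposal is correct and follows essentially the same approach as the paper: the same joint substitution $u \mapsto \boldsymbol{U}_0 u$, $\tilde{\eta} \mapsto \tilde{\eta}\,\boldsymbol{U}_0^{-1}$, the same observation that $P$ and $Q$ transform by conjugation, and the same conclusion that $\Omega$ is invariant so that $u' \mapsto \boldsymbol{U}_0 u'$. Your write-up is in fact more explicit than the paper's (spelling out the equivalent forms of $\boldsymbol{\sigma}$-unitarity and checking (\ref{vFLeq_key}), (\ref{vFLeq_Om_deriv}) as well), and your suggested alternative of verifying covariance at the level of Theorem~\ref{thm:mFL_DT} is a sound shortcut the paper does not take.
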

\begin{proof}
It is easily verified that, if $u$ is a solution of (\ref{vFLeq}), 
then also $\boldsymbol{U}_0 \, u$. 
Furthermore, we note that 
\bez
    u \mapsto \boldsymbol{U}_0 \, u \, , \qquad
    \tilde{\eta} \mapsto \tilde{\eta} \, \boldsymbol{U}_0^{-1} \, ,
\eez
is a symmetry of the linear system (\ref{vFLeq_linsys_1}) - (\ref{vFLeq_linsys_3}), 
since $P \mapsto \boldsymbol{U}_0 \, P \, \boldsymbol{U}_0^{-1}$ 
and $Q \mapsto \boldsymbol{U}_0 \, Q \, \boldsymbol{U}_0^{-1}$. 
$\Omega$ is invariant, hence it follows that $u' \mapsto \boldsymbol{U}_0 \, u'$. 
We used that $\boldsymbol{\sigma}$ is Hermitian and involutary. 
\end{proof}

\begin{remark}
	\label{rem:vFL_superposition}
We specialize the important ``superposition property", formulated in Remark~\ref{rem:FL_superposition}, to the case of the vector FL equation. 
Let $(\Gamma^{(i)},\tilde{\eta}^{(i)},\Omega^{(i)})$,
$i=1,2$, be data determining solutions ${u'}^{(i)}$ of the vector FL equation via the vectorial Darboux transformation, using the same seed solution $u$. Let
\bez
	\Gamma = \left( \begin{array}{cc} \Gamma^{(1)} & 0 \\
		0 & \Gamma^{(2)} \end{array} \right) \, , \quad
	\tilde{\eta} = \left(\begin{array}{c} \tilde{\eta}^{(1)} \\ \tilde{\eta}^{(2)}
	\end{array} \right) \, , \quad	
	\Omega = \left( \begin{array}{cc} \Omega^{(1)} & \Omega^{(12)} \\
		\Omega^{(21)} & \Omega^{(2)} \end{array} \right) \, .
\eez
The linear system (\ref{vFLeq_linsys_1})-(\ref{vFLeq_linsys_3}) is then solved, and  it only remains to determine $\Omega^{(12)}$ and $\Omega^{(21)}$ in order to get a new solution $u'$, which then represents a nonlinear superposition of ${u'}^{(1)}$ and ${u'}^{(2)}$. These non-diagonal blocks of $\Omega$ are subject to the Sylvester equations
\bez
 && \Gamma^{(1)} \Omega^{(12)} + \Omega^{(12)} \Gamma^{(2)\dagger}
	= \frac{\imag}{|u_x^\dagger \boldsymbol{\sigma} u|^2} \Xi^{(1)} \, \Xi^{(2) \dagger} \, \Gamma^{(2)\dagger} 
	+ \tilde{\eta}^{(1)} \, \boldsymbol{\sigma} \, \tilde{\eta}^{(2)\dagger} \, , \\
 && \Gamma^{(2)} \Omega^{(21)} + \Omega^{(21)} \Gamma^{(1)\dagger}
	= \frac{\imag}{|u_x^\dagger \boldsymbol{\sigma} u|^2} \Xi^{(2)} \, \Xi^{(1) \dagger} \, \Gamma^{(1)\dagger} 
	+ \tilde{\eta}^{(2)} \, \boldsymbol{\sigma} \, \tilde{\eta}^{(1)\dagger}
	\, ,
\eez	
where 
\bez
    \Xi^{(i)} = \Big( \tilde{\eta}_x^{(i)} 
    - \frac{1}{2} \Gamma^{(i)-1} \tilde{\eta}^{(i)} \Big) \, u \, , \qquad i=1,2 \, .
\eez
If $\Gamma^{(1)}$ and $-\Gamma^{(2)\dagger}$ have no common eigenvalue, these equations possess unique solutions. Otherwise additional equations, obtained from those in Corollary~\ref{cor:vFL_DT_nosc}, have to be solved.

In the simplest case of a superposition of two $n=1$ solutions, setting $\Gamma^{(1)} = \gamma_1$ and $\Gamma^{(2)} = \gamma_2$, 
and assuming $\gamma_1 \neq - \gamma_2^\ast$, we have
\bez
  &&  \Omega^{(12)} = \frac{1}{\gamma_1 + \gamma_2^\ast} \Big( 
  \frac{\imag \, \gamma_2^\ast}{|u_x^\dagger \boldsymbol{\sigma} u|^2} \, \Xi^{(1)} \, \Xi^{(2)\ast}
    + \tilde{\eta}^{(1)} \, \boldsymbol{\sigma} \, \tilde{\eta}^{(2)\dagger} \Big) \, , \\
  &&  \Omega^{(21)} = \frac{1}{\gamma_2 + \gamma_1^\ast} \Big( \frac{\imag \, \gamma_1^\ast}{|u_x^\dagger \boldsymbol{\sigma} u|^2} \, \Xi^{(2)} \, \Xi^{(1)\ast}
 + \tilde{\eta}^{(2)} \, \boldsymbol{\sigma} \, \tilde{\eta}^{(1)\dagger} \Big) \, ,  
\eez	
where the $\Xi^{(i)}$ are scalars.
\hfill $\Box$
\end{remark}

In the following subsection and the next section we concentrate 
on a plane wave seed solution in the above Darboux transformation for the vector FL equation. Inserting the plane wave ansatz
\bez
  u = e^{\imag \, ( \boldsymbol{\alpha} x + \boldsymbol{\beta} t)} \, A 
\eez
in (\ref{vFLeq}), with a constant column vector $A$ and commuting constant Hermitian matrices $\boldsymbol{\alpha}, \boldsymbol{\beta}$, leads to
\bez
\boldsymbol{\alpha} \, \boldsymbol{\beta} = (u^\dagger \boldsymbol{\sigma} \boldsymbol{\alpha} \, u -1) \, I_{m_2} 
+ (u^\dagger \boldsymbol{\sigma} u) \, \boldsymbol{\alpha} \, .
\eez
Assuming that $\boldsymbol{\alpha}$ is invertible, this yields
\bez
\boldsymbol{\beta} =  (u^\dagger \boldsymbol{\sigma} \boldsymbol{\alpha} \, u -1) \, \boldsymbol{\alpha}^{-1} 
+ (u^\dagger \boldsymbol{\sigma} u) \, I_{m_2} \, ,
\eez
which requires that the scalars $u^\dagger \boldsymbol{\sigma} u$ and $u^\dagger \boldsymbol{\sigma} \boldsymbol{\alpha} \, u$ are constant. This is guaranteed is $\boldsymbol{\alpha}$ commutes with $\boldsymbol{\sigma}$. 
The assumptions $u^\dagger \boldsymbol{\sigma} u \neq 0$ and $u_x^\dagger \boldsymbol{\sigma} u \neq 0$ in Theorem~\ref{thm:vFL_DT} now read
\bez
A^\dagger  \boldsymbol{\sigma} A \neq 0 \, , \qquad 
A^\dagger  \boldsymbol{\alpha} \, \boldsymbol{\sigma} A \neq 0 \, .
\eez
Throughout, they will be assumed in the following.

\subsection{Plane wave seed}
\label{subsec:vFL_pws}
We restrict our considerations to the case where $\boldsymbol{\sigma} = I_{m_2}$ 
and $\boldsymbol{\alpha} = \mathrm{diag}(\alpha_1,\ldots,\alpha_{m_2})$, with 
$\alpha_\mu \in \mathbb{R}$. 
The plane wave solution of the vector FL equation is then given by
\bez
     u_\mu = A_\mu \, e^{\imag \, \varphi_\mu} \, ,  \qquad
     \varphi_\mu = \alpha_\mu x + \beta_\mu t \, , 
      \qquad 
     \mu=1,\ldots, m_2 \, ,
\eez
with $A_\mu \in \mathbb{C}$ and $\alpha_\mu \in \mathbb{R}$. The absolute value of $\alpha_\mu$ is the wave number of the respective mode. Assuming $\alpha_\mu \neq 0$ for all $\mu$, we have $\boldsymbol{\beta} = \mathrm{diag}(\beta_1,\ldots,\beta_{m_2})$ with
\bez
\beta_\mu = \sum_{\nu=1}^{m_2} \Big( 1 + \frac{\alpha_\nu}{\alpha_\mu} \Big) \, |A_\nu|^2  - \frac{1}{\alpha_\mu} 
\qquad \quad \mu=1,\ldots, m_2    \, .
\eez
This solution will be taken as the seed to generate further exact solutions via Theorem~{\ref{thm:vFL_DT}}.
We have
\bez
u_x = \imag \, \boldsymbol{\alpha} \, u \, , \qquad
u_t = \imag \, \boldsymbol{\beta} \, u  \, .
\eez
Let $\boldsymbol{U}$ be a unitary $m_2 \times m_2$ matrix such that 
\bez
 u = e^{\imag \, (\boldsymbol{\alpha} x + \boldsymbol{\beta} t) } \, A
   = \boldsymbol{U} \left( \begin{array}{c}    
     \| A \|  \\ 0 \\ \vdots \\ 0 \end{array} \right) \, , \qquad
 A =  \left( \begin{array}{c} A_1 \\ \vdots \\ A_{m_2} \end{array} \right)            \, ,
\eez
where $\|A\|$ is the standard norm of the complex vector $A$. 
This requires $\boldsymbol{U}_{\mu1} = e^{\imag \, \varphi_\mu} A_\mu/\|A\|$, $\mu=1,\ldots,m_2$.  We set
\bez
      \tilde{\eta} = \chi \, \boldsymbol{U}^\dagger \, , \qquad  \chi = (\chi_1,\ldots,\chi_{m_2}) \, ,
\eez
with $n$-component column vectors $\chi_\mu$, $\mu=1, \ldots, m_2$. Since we assume $A^\dagger \boldsymbol{\alpha}  A \neq 0$,   (\ref{vFLeq_linsys_1}) and the second of equations  (\ref{vFLeq_linsys_3}) become
\be
  && \chi_{1t} - \frac{1}{2} \Gamma \, \chi_1 + \imag \, \| A\|^2 \chi_1 - \imag \, \chi \boldsymbol{U}^\dagger \boldsymbol{\beta} \, \boldsymbol{U} \, (1,0,\ldots,0)^T  \nonumber \\
  && \qquad + \imag \, \frac{\|A\|^2}{ A^\dagger \boldsymbol{\alpha}  A} \big( \Gamma ( \chi_{1x} 
  - \imag \, \chi \boldsymbol{U}^\dagger \boldsymbol{\alpha} \, \boldsymbol{U} \, (1,0,\ldots,0)^T ) - \frac{1}{2} \chi_1 \big) = 0  \, , \nonumber \\
 &&    \chi_{\mu t} =  \frac{1}{2} \Gamma \, \chi_\mu 
    - \sum_{\nu=1}^{m_2} \chi_\nu \, (\boldsymbol{U}_t^\dagger \boldsymbol{U})_{\nu \mu} 
  \qquad \mu=2,\ldots,m_2 \, .  \label{vFL_chi_t}
\ee
Here we used $\boldsymbol{U}_x^\dagger u = - \boldsymbol{U}^\dagger u_x = - \imag \, \boldsymbol{U}^\dagger \boldsymbol{\alpha} \, u$ and $\boldsymbol{U}_t^\dagger u = - \boldsymbol{U}^\dagger u_t = - \imag \, \boldsymbol{U}^\dagger \boldsymbol{\beta} \, u$.
The first of (\ref{vFLeq_linsys_3}) tells us that we have $\tilde{\eta}_x -  \frac{1}{2} \Gamma^{-1} \, \chi = 0$
on the orthogonal complement of $u_x$. A vector (field) $w$ is orthogonal to $u_x$ if and only if $\tilde{w} = \boldsymbol{U}^\dagger w$ is orthogonal to $\boldsymbol{U}^\dagger u_x = \imag \, \boldsymbol{U}^\dagger \boldsymbol{\alpha} \, u = \imag \, \boldsymbol{U}^\dagger \boldsymbol{\alpha} \, \boldsymbol{U} (\|A\|,0,\ldots,0)^T$, i.e., 
\bez
  0 = (1, 0, \ldots, 0) \,  \boldsymbol{U}^\dagger \boldsymbol{\alpha} \, \boldsymbol{U} \, \tilde{w}
     = \sum_{\nu=1}^{m_2} (\boldsymbol{U}^\dagger \boldsymbol{\alpha} \, \boldsymbol{U})_{1\nu} \,  \tilde{w}_\nu \, .
\eez
Then (\ref{vFLeq_linsys_3}) reads
\be
      ( \chi_x + \chi \, \boldsymbol{U}_x^\dagger \boldsymbol{U} -  \frac{1}{2} \Gamma^{-1} \, \chi ) \, \tilde{w} =0 \, ,  \label{chi_x_orthog}
\ee
for all such $\tilde{w}$. The remaining equation  (\ref{vFLeq_linsys_2}) of the linear system becomes
\be
   &&   \chi_{1xx}  - \frac{1}{4} \Gamma^{-2} \chi_1 + \Big(  2 \chi_x \boldsymbol{U}_x^\dagger \boldsymbol{U} + \chi \, \boldsymbol{U}_{xx}^\dagger \boldsymbol{U}  
 + \imag \, ( \chi_x + \chi \, \boldsymbol{U}_x^\dagger \boldsymbol{U}) \, \boldsymbol{U}^\dagger \boldsymbol{\alpha} \, \boldsymbol{U} \nonumber \\
 &&  + \imag \, ( A^\dagger \boldsymbol{\alpha} A - \frac{1}{2} ) \,  \Gamma^{-1} \chi \, \boldsymbol{U}^\dagger \boldsymbol{\alpha} \, \boldsymbol{U}  \Big) \, (1,0,\ldots,0)^T = 0 \, . \label{chi1_xx}
\ee
The Lyapunov equation (\ref{vFLeq_Lyap}) reads
\be
    \Gamma \, \Omega + \Omega \, \Gamma^\dagger 
 = \frac{\imag}{(A^\dagger \boldsymbol{\alpha} A)^2} \, \Xi \, \Xi^\dagger \Gamma^\dagger + \sum_{\mu=1}^{m_2} \chi_\mu \, \chi_\mu^\dagger \, ,   \label{pw_vFLeq_Lyap}
\ee
now with
\bez
      \Xi = \|A\| \, \big( \chi_{1x} - \frac{1}{2} \Gamma^{-1} \chi_1 
   - \imag \, \chi \, \boldsymbol{U}^\dagger \boldsymbol{\alpha} \, \boldsymbol{U}  \, (1,0,\ldots,0)^T \big) \, .
\eez
If $\Gamma$ satisfies the spectrum condition, then (\ref{pw_vFLeq_Lyap}) has a unique solution $\Omega$. If $\Gamma$ does not satisfy the spectrum condition, then the additional equations 
in Corollary~\ref{cor:vFL_DT_nosc} have to be taken into account in order to determine $\Omega$.
(\ref{vFLeq_u'}) takes the form
\bez
  u' &=& u + \frac{1}{A^\dagger \boldsymbol{\alpha} A} \, \boldsymbol{U} \, \chi^\dagger \, \Omega^{-1} \Xi \\
  &=& \boldsymbol{U} \left[ \left( \begin{array}{c}    
  	\|A\| \\ 0 \\ \vdots \\ 0 \end{array} \right)  
  	+ \frac{1 }{A^\dagger \boldsymbol{\alpha} A} \, 
  	\left( \begin{array}{c}   
  	\chi_1^\dagger \, \Omega^{-1} \, \Xi  \\ 
  	\vdots  \\ 
  	\chi_{m_2}^\dagger \, \Omega^{-1} \, \Xi \end{array} \right) \right]   \, .
\eez
In the following section, we will concentrate on the two-component vector FL equation.

\section{Solutions of a two-component vector FL equation}
\label{sec:2vFL}
In this section, we elaborate the set of solutions of the two-component vector FL equation, i.e, (\ref{vFLeq}) with $m_2=2$, generated via the Darboux transformation obtained in the preceding section. In order to somewhat reduce the complexity, we restrict our considerations to the case where $\boldsymbol{\sigma} = I_2$ and $\boldsymbol{\alpha} = \mathrm{diag}(\alpha_1,\alpha_2)$, with $\alpha_\mu \in \mathbb{R}$, $\mu=1,2$.
Then we can choose
\be
    \boldsymbol{U} = \left( \begin{array}{cc} e^{\imag \, \varphi_1} 
    	&  0  \\  0  & e^{\imag \, \varphi_2}  \end{array} \right) \, \boldsymbol{U}_0 \, , \qquad
      \boldsymbol{U}_0 = \frac{1}{\|A\|}  \left( \begin{array}{cc} A_1 & - A_2^\ast  \\
        A_2  & A_1^\ast  \end{array} \right) \, ,   \label{2vFL_U}
\ee
so that $\boldsymbol{U}_x = \imag \,  \boldsymbol{\alpha} \, \boldsymbol{U}, \boldsymbol{U}_t = \imag \, \boldsymbol{\beta} \, \boldsymbol{U}$ and
\bez
    \boldsymbol{U}^\dagger \boldsymbol{\alpha} \, \boldsymbol{U} 
 =  \boldsymbol{U}_0^\dagger \, \boldsymbol{\alpha} \, \boldsymbol{U}_0   
 =  \frac{1}{\|A\|^2} \left( \begin{array}{cc}  \alpha_1 |A_1|^2 + \alpha_2 |A_2|^2 
    	&  (\alpha_2 - \alpha_1) \, A_1^\ast A_2^\ast  \\  (\alpha_2 - \alpha_1) \, A_1 A_2  & 
       \alpha_2 |A_1|^2 + \alpha_1 |A_2|^2  \end{array} \right) \, .
\eez
From (\ref{chi_x_orthog}) we obtain
\be
   \chi_{2x}  
 - \Big( \frac{1}{2} \Gamma^{-1} + \imag \, \frac{ \alpha_1 \alpha_2 \, \|A\|^2}{A^\dagger \boldsymbol{\alpha} A} \, I_n \Big) \, \chi_2 = \frac{(\alpha_2 - \alpha_1) A_1^\ast A_2^\ast}{A^\dagger \boldsymbol{\alpha} A} \big(  \chi_{1x} - \frac{1}{2} \Gamma^{-1} \chi_1 \big) \, ,   \label{2vFL_chi_2x}
\ee
and (\ref{chi1_xx}) becomes
\bez
  &&  \chi_{1xx} - \frac{1}{4} \Gamma^{-2} \chi_1 - \frac{\imag \, A^\dagger \boldsymbol{\alpha} A}{\|A\|^2}
   \big(  \chi_{1x} + ( \frac{1}{2} - A^\dagger \boldsymbol{\alpha} A ) \, \Gamma^{-1} \chi_1 \big)  \\
  && - \frac{\imag \, A_1 A_2}{\|A\|^2}  (\alpha_2 - \alpha_1) \,  \big( \chi_{2x} + ( \frac{1}{2} - A^\dagger \boldsymbol{\alpha} A ) \, \Gamma^{-1} \chi_2 \big) = 0 \, .
\eez
Eliminating $\chi_{2x}$ with the help of the preceding equation, we get
\be
 && \chi_{1xx} - \imag \, \frac{A^\dagger \boldsymbol{\alpha}^2 A}{A^\dagger \boldsymbol{\alpha} A} \, \chi_{1x} - \frac{1}{4} \Gamma^{-2} \chi_1    \nonumber \\
 && + \frac{\imag}{2 \|A\|^2 \, A^\dagger \boldsymbol{\alpha} A} 
 \Big( (\alpha_2-\alpha_1)^2 |A_1|^2 |A_2|^2  - (A^\dagger \boldsymbol{\alpha} A)^2 (1-2 A^\dagger \boldsymbol{\alpha} A) \Big) \, \Gamma^{-1} \chi_1  \nonumber \\
 && - (\alpha_2-\alpha_1) \frac{A_1 A_2}{\|A\|^2} \Big( \imag \, (1 - A^\dagger \boldsymbol{\alpha} A) \, \Gamma^{-1} - \frac{\alpha_1 \alpha_2 \|A\|^2}{A^\dagger \boldsymbol{\alpha} A} I_n \Big) \, \chi_2 = 0   \, .     \label{alpha-branch_eq}
\ee
We note that
\be
    \Xi &=& \|A\| \, \big( \chi_{1x} - \frac{1}{2} \Gamma^{-1} \chi_1 
    - \imag \, \chi \, \boldsymbol{U}_0^\dagger \, \boldsymbol{\alpha} \, \boldsymbol{U}_0  \, (1,0)^T \big) \nonumber \\
   &=& \|A\| \, \Big( \chi_{1x} - \big( \frac{1}{2}  \Gamma^{-1} + \imag \, \frac{A^\dagger \boldsymbol{\alpha} A}{\|A\|^2} \, I_n \big) \, \chi_1 \Big)
   - \imag \, (\alpha_2-\alpha_1) \, \frac{A_1 A_2}{\|A\|} \, \chi_2   
    \, .  \label{2vFL_Xi}
\ee
The Lyapunov equation (\ref{pw_vFLeq_Lyap}) now reads
\be
\Gamma \, \Omega + \Omega \, \Gamma^\dagger = \frac{\imag}{(A^\dagger \boldsymbol{\alpha} A)^2} \Xi \, \Xi^\dagger \Gamma^\dagger + \chi_1 \chi_1^\dagger + \chi_2 \chi_2^\dagger \, .
        \label{2vFL_Lyap}
\ee

From this point on, we have to treat the cases $\alpha_1=\alpha_2$ and $\alpha_1 \neq \alpha_2$ separately. 
This is done in Sections~\ref{subsec:alpha1=alpha2} and \ref{subsec:alpha1_neq_alpha2}, respectively. Although the first case is non-generic, it exhibits interesting features.
Moreover, it is considerably simpler than the second, generic case and provides a convenient first step into the plethora of solutions generated via Theorem~\ref{thm:vFL_DT}.

The corresponding analysis is different, depending on whether the spectrum condition for the matrix $\Gamma$ in the Lyapunov equation holds or not. It is necessary to distinguish several cases. Moreover, for special choices of $\Gamma$, the linear system, which here is equivalent to a constant coefficient linear system, ``degenerates" and admits quasi-rational solutions. In Sections~\ref{subsec:alpha1=alpha2} and \ref{subsec:alpha1_neq_alpha2}, we will organize the various possibilities into a collection of examples.  

\begin{remark}
If $\boldsymbol{\sigma} = \mathrm{diag}(1,\sigma)$, with $\sigma \in \{\pm 1\}$, we choose 
\bez
\boldsymbol{U} = \frac{1}{\sqrt{A^\dagger \boldsymbol{\sigma} A}}  \left( \begin{array}{cc} e^{\imag \, \varphi_1} 
	&  0  \\  0  & e^{\imag \, \varphi_2}  \end{array} \right)
\left( \begin{array}{cc} A_1 & - \sigma \, A_2^\ast  \\
	A_2  & A_1^\ast  \end{array} \right) \, ,
\eez
which satisfies $(A_1 e^{\imag \, \varphi_1},A_2 e^{\imag \, \varphi_2})^T 
= \boldsymbol{U} \, (\sqrt{A^\dagger \boldsymbol{\sigma} A},0)^T$ and the $\boldsymbol{\sigma}$-unitarity condition $\boldsymbol{U}^\dagger \boldsymbol{\sigma} \boldsymbol{U} = \boldsymbol{\sigma}$. If $\sigma=-1$, so that $\boldsymbol{\sigma}$ is indefinite, we have to restrict $A$ such that $|A_1| \neq |A_2|$. \hfill $\Box$
\end{remark}

\begin{remark}
The form of the two-component vector FL equation in \cite{LFZ18,Ling+Su24} and corresponding solutions are obtained from ours via $x \mapsto 2 x$ and $t \mapsto -t/2$. In \cite{CYSGB18,YZCBG19} the two-component vector FL equation is taken in the form 
\bez
 && \imag \, D_\tau \frac{\partial v_1}{\partial \xi} + \frac{\sigma}{2} \frac{\partial^2 v_1}{\partial \tau^2}
 + (2 |v_1|^2 + |v_2|^2) \, D_\tau v_1 + v_1 v_2^\ast D_\tau v_2 =0  \, , \\
 && \imag \, D_\tau \frac{\partial v_2}{\partial \xi} + \frac{\sigma}{2} \frac{\partial^2 v_2}{\partial \tau^2}
 + (2 |v_2|^2 + |v_1|^2) \, D_\tau v_2 + v_2 v_1^\ast D_\tau v_1 =0  \, ,
\eez	
where $D_\tau = 1 + \imag \, \epsilon \, \partial/\partial \tau$, 
$\xi$ and $\tau$ are independent variables, and $\epsilon \neq 0$  and $\sigma$ are real constants. Then
\bez
    v_j(\tau,\xi) = e^{\imag \, (\epsilon^{-1} \tau + \sigma \epsilon^{-2} \xi)} \, u_j(x,t)  \qquad j=1,2 \, , 
\eez 
with $x = -\frac{1}{4} \sigma \epsilon^{-4} \, (\sigma \, \xi +  \epsilon \, \tau)$ and $t=\xi$, maps (solutions of) the latter system to (solutions of) our form of the two-component vector FL equation, and vice versa (also see \cite{Ling+Su24}).
\hfill $\Box$
\end{remark}

\subsection{The special plane wave case $\alpha_1 = \alpha_2$}
\label{subsec:alpha1=alpha2}
If $\alpha_1 = \alpha_2 =: \alpha$, we have 
\bez  
  \beta := \beta_1 = \beta_2 = 2 \|A\|^2-1/\alpha \, . 
\eez  
Writing
\bez
    \chi_1 = e^{\frac{\imag}{2} \varphi} \, \tilde{\chi}_1 \, , \qquad
    \chi_2 = e^{\imag \, \varphi} \, \tilde{\chi}_2 \, ,
\eez
with 
\bez
 \varphi = \alpha \, x + \beta \, t
         = \alpha \, x + (2 \|A\|^2-1/\alpha) \, t \, ,  
\eez
the linear system becomes
\be
    \tilde{\chi}_{1xx} - R^2 \, \tilde{\chi}_1 = 0 \, , \quad
    \tilde{\chi}_{1t} + \frac{\imag}{\alpha} \Gamma \, \tilde{\chi}_{1x} = 0 \, ,
    \label{2vFL_spw_linsys1}
\ee
where\footnote{If the right hand side does not admit a matrix root, we shall simply regard $R^2$ as an abbreviation for the right hand side of (\ref{special_quadratic_R_eq}).}
\be
   R^2 = \frac{1}{4} \Gamma^{-2} - \frac{\imag}{2} \, \alpha^2 \, \beta \, \Gamma^{-1} - \frac{\alpha^2}{4} I_n  \, ,
    \label{special_quadratic_R_eq}
\ee
and
\bez
 \tilde{\chi}_{2x} - \frac{1}{2} \Gamma^{-1} \tilde{\chi}_2 = 0 \, , \quad
 \tilde{\chi}_{2t} - \frac{1}{2} \Gamma \, \tilde{\chi}_2 = 0 \, ,
\eez
which is solved by
\bez
  \tilde{\chi}_2 = e^{ \frac{1}{2} (\Gamma^{-1} x + \Gamma \, t)} \, c_0 \, ,
\eez
where $c_0$ is a constant $n$-component column vector.

After determining the solution of the linear system and then $\Omega$, finally, 
\be
 u' &=& e^{\imag \, \varphi} \,
 \left( \begin{array}{cc} A_1 + (\alpha \, \|A\|^3)^{-1} (A_1 \chi_1^\dagger - A_2^\ast \chi_2^\dagger) \, \Omega^{-1} \Xi  \\[4pt]
   A_2 + (\alpha \, \|A\|^3)^{-1} (A_2 \chi_1^\dagger + A_1^\ast \chi_2^\dagger) \, \Omega^{-1} \Xi \end{array} \right) \nonumber \\
   &=& e^{\imag \, \varphi} \, \boldsymbol{U}_0 \, \left[ \left( \begin{array}{c} \|A\| \\ 0 \end{array} \right) + \frac{1}{\alpha \, \|A\|^2} \left( \begin{array}{c} 
   	\chi_1^\dagger \, \Omega^{-1} \Xi \\ \chi_2^\dagger \, \Omega^{-1} \Xi \end{array} \right) \right]  \, ,
      \label{special_pw_u'}
\ee
with 
\bez
  \Xi = \|A\| \Big( \chi_{1x} - \big( \frac{1}{2}  \Gamma^{-1} + \imag \, \alpha \, I_n \big) \, \chi_1 \Big) \, ,
\eez
is a solution of the two-component vector FL equation. Here $\boldsymbol{U}_0$ is the 
constant unitary matrix in (\ref{2vFL_U}). According to Proposition~\ref{prop:vFL_unitary_symmetry}, it yields a symmetry of the vector FL equation and of its Darboux transformation. It is therefore 
sufficient to elaborate (\ref{special_pw_u'}) with $\boldsymbol{U}_0$ replaced by the identity matrix, since all other solutions are obtained by application of a constant unitary matrix. 
This particular solution corresponds to having a plane wave background only in the first component ($A_1=\|A\|$) and zero background in the second ($A_2=0$). 
\vspace{.2cm}

We have to distinguish the following two cases.
\vspace{.1cm}

\noindent
\textbf{(1)} 
If the spectrum condition for $\Gamma$ holds, the Lyapunov equation (\ref{2vFL_Lyap}) possesses a unique solution. In particular, if 
$\Gamma = \mathrm{diag}(\gamma_1,\ldots,\gamma_n)$, with 
$\gamma_i \neq -\gamma_j^\ast$, $i,j=1,\ldots,n$, 
then $\Omega$ is given by the Cauchy-like matrix with components 
\bez
\Omega_{ij} =\frac{1}{\gamma_i + \gamma_j^\ast} \Big( \frac{\imag}{\alpha^2 \|A\|^4} \,  \Xi_i \Xi_j^\ast \gamma_j^\ast + \chi_{1i} \chi_{1j}^\ast + \chi_{2i} \chi_{2j}^\ast \Big)  \, .
\eez
Given an $n=1$ soliton solution, generated with some value of $\gamma$, an ``$n$-th order" extension of it is determined by an $n \times n$ (lower triangular) Jordan block with eigenvalue $\gamma$.    
\vspace{.1cm}

\noindent
\textbf{(2)} 
If the spectrum condition does \emph{not} hold, the Lyapunov equation (\ref{2vFL_Lyap}) 
imposes a constraint on the solution of the linear system and determines at most part of $\Omega$. 
According to Corollary~\ref{cor:vFL_DT_nosc}, we also have to solve the equation (\ref{vFLeq_key}), which is now
\be
    \Gamma \Omega + \Omega^\dagger \Gamma^\dagger  = \chi_1 \chi_1^\dagger + \chi_2 \chi_2^\dagger \, ,
          \label{2vFLeq_pw}
\ee
and the compatible differential equations
\be
\Omega_x &=& \frac{\imag}{\alpha^2 \|A\|^4} \Xi_x  
\Xi^\dagger + \frac{1}{2}\Gamma^{-1} \, \chi_1\chi_1^\dagger \, \Gamma^{\dagger -1} 
 + \Big( \chi_{2x} - \imag \, \alpha \, \chi_2 \Big)\chi_2^\dagger \Gamma^{\dagger -1}
      \, , \nonumber \\
\Omega_t &=& -\frac{\imag}{2 \alpha^2 \|A\|^4} \, \Xi \, \Xi^\dagger \, \Gamma^\dagger  
 -\frac{\imag}{\alpha \, \|A\|} \Xi \, \chi_1^\dagger + \frac{1}{2} (\chi_1\chi_1^\dagger+\chi_2\chi_2^\dagger) \, .   \label{2vFL_Om_diff}
\ee

If the expression on the right hand side of (\ref{special_quadratic_R_eq}) is invertible, then it indeed possesses a matrix root $R$. 
(\ref{2vFL_spw_linsys1}) is then solved by
\be
 \chi_1 = e^{\frac{\imag}{2} \varphi} \, \sum_\kappa e^{R_\kappa (I_n x -\frac{\imag}{\alpha}  \Gamma \, t)} \, c_\kappa \, ,                 
   \label{special_pw_sol_chi1}
\ee
where $c_\kappa$ are constant $n$-component column vectors and
$\{R_\kappa\}$ is a set of different roots of (\ref{special_quadratic_R_eq}).

If the right hand side of (\ref{special_quadratic_R_eq}) is \emph{not} invertible, so that typically no square root exists (and $R^2$ in (\ref{2vFL_spw_linsys1}) should then be replaced by the right hand side of (\ref{special_quadratic_R_eq})), we have to step back to (\ref{2vFL_spw_linsys1}) to determine its solution. This is actually an important case, since it leads in particular to rogue wave solutions.  

Let us recall that a solution $u'$ is called quasi-rational if it has the form $u'= e^{\imag \, \varphi} \, (v_1,v_2)^T$ with (in $x$ and $t$) rational expressions $v_1,v_2$. If $u_1$ is a solution of the scalar FL equation, then 
$u = U_0 \, (u_1,0)^T$, with a constant unitary matrix $U_0$ solves the two-component vector FL equation. In this case we say that $u$ is equivalent to the solution $u_1$ of the scalar FL equation. 

\begin{proposition}
	\label{prop:2vFL_spw_quasi-rational}
Any quasi-rational solution of the two-component vector FL equation, obtained by the vectorial Darboux transformation from the special plane wave seed with $\alpha_1=\alpha_2$, is equivalent to a quasi-rational solution of the scalar FL equation.  
\end{proposition}
\begin{proof}
In order for (\ref{special_pw_u'}) to be quasi-rational, it is necessary that 
$\chi_2 =0$. But then the second component of $u'$ in the last expression in (\ref{special_pw_u'}) vanishes. 
\end{proof}

\begin{remark}
	\label{rem:other_vFL}
In \cite{Yang+Zhang18,ZYCW17,Xu+Chen19,Wang+Chen19,Yue+Chen21} only the special ($\alpha_1=\alpha_2$) plane wave background case has been used as a seed solution,   
applying a generalized Darboux transformation, with a limit procedure that takes different spectral parameters to the same value, in order to obtain higher order solutions, in particular higher order rogue waves. In our framework, the corresponding $n$-th order solution is obtained by choosing $\Gamma$ to be an $n \times n$ Jordan block. 
Proposition~\ref{prop:2vFL_spw_quasi-rational} indicates that all quasi-rational solutions presented in the mentioned publications are equivalent to quasi-rational solutions of the scalar FL equation, which is indeed the case. This is a useful insight within the \emph{vectorial} Darboux transformation framework. 
The publication \cite{LFZ18} also contains solutions with the special plane wave background, but steps into the richer case of different wave numbers ($\alpha_1 \neq \alpha_2$, see Section~\ref{subsec:alpha1_neq_alpha2}), though without treating rogue waves.
\hfill $\Box$
\end{remark}

\subsubsection{$n=1$}
\label{sec:2vFL_special_pw_n=1}
For $n=1$, the roots of the quadratic characteristic equation (\ref{special_quadratic_R_eq}) are 
$r_\pm  = \pm w/(2 \gamma)$ with
\be
     w = \sqrt{1 - 2 \imag \, \alpha^2 (2 \|A\|^2 - \alpha^{-1}) \,  \gamma - \alpha^2 \gamma^2}   \, .  \label{special_pw_roots}
\ee
Throughout this subsection, $\chi_2$ is always given by 
\bez
   \chi_2 = c_0 \, e^{\imag \, \varphi} \, e^{\frac{1}{2} (\gamma^{-1} x + \gamma t)} \, .
\eez   

\begin{example}
	\label{ex:special_pw_breather}
Let $\mathrm{Re}(\gamma) \neq 0$ and also $w \neq 0$.  Then (\ref{special_pw_sol_chi1}) yields the general solution of the linear system for $\chi_1$,
\bez
    \chi_1 = e^{\frac{1}{2} \imag \, \varphi} \, \big( 
    c_1 \, e^\rho + c_2 \, e^{-\rho} \big) \, ,
\eez
where
\bez
   \rho = \frac{1}{2} w \, \big( \frac{x}{\gamma} - \imag \, \frac{t}{\alpha} \big) 
 = \frac{1}{2} \big( \mathrm{Re}(w/\gamma) \, x + \alpha^{-1} \mathrm{Im}(w) \, t \big) + \frac{1}{2} \imag \, \big( \mathrm{Im}(w/\gamma) \, x - \alpha^{-1} \mathrm{Re}(w) \, t \big) \, .
\eez
We obtain 
\bez
   \Xi &=& - \frac{\|A\|}{2 \gamma} \, e^{\frac{1}{2} \imag \, \varphi} \, W
    \, , \qquad
   W = c_1 \, (1+\imag \, \alpha \gamma-w) \, e^\rho 
   + c_2 \, (1+ \imag \, \alpha \gamma +w) \, e^{-\rho}  \, , \\
   \Omega &=& \frac{1}{2 \mathrm{Re}(\gamma)} \Big( \frac{\imag \, \gamma^\ast}{\alpha^2 \|A\|^4} |\Xi|^2 + |\chi_1|^2 + |\chi_2|^2 \Big) \\
   &=& \frac{1}{2 \mathrm{Re}(\gamma)} \, \Big( |c_0|^2 \, e^{\mathrm{Re}(\gamma) ( x/|\gamma|^2 + t)}
   + \big| c_1 \, e^\rho + c_2 \, e^{-\rho} \big|^2 
   + \frac{\imag}{4 \alpha^2 \, \|A\|^2 \, \gamma } \, |W|^2 \Big)
    \, .
\eez
As mentioned before, it is sufficient to elaborate the particular case of (\ref{special_pw_u'}) where $\boldsymbol{U}_0$ is replaced by the identity matrix. Inserting the above expression for $\Omega$, then
yields the following solution of the two-component vector FL equation,
\bez
  u_1' = \|A\| \, e^{\imag \, \varphi} \, \Big( 1 -  
  \frac{W \, (c_1^\ast \, e^{\rho^\ast} + c_2^\ast \, e^{-\rho^\ast})}{2 \alpha \, \gamma \, \|A\|^2 \, \Omega} \Big)  \, , \qquad
  u_2' = - \frac{c_0^\ast \, W}{2 \alpha \gamma \, \|A\| \, \Omega} e^{ \frac{1}{2} \big( \imag \, \varphi + \gamma^{\ast -1} x + \gamma^\ast t \big)}  \, .
\eez
If $c_0 =0$ and either $c_1 \neq 0$ or $c_2 \neq 0$, we have $u_2'=0$ and $u_1'$ is then a solution of the scalar FL equation. Therefore we shall now assume $c_0 \neq 0$. If $c_1=c_2=0$, $u_1'$
reduces to the plane wave background (seed) solution and $u_2'=0$. Therefore, besides $c_0 \neq 0$, at least one of the constants $c_1,c_2$ should be non-zero. 

If $c_2=0$, the above solution reduces to 
\bez
    u_1' &=& \|A\| \, e^{\imag \, \varphi} \, \Big( 1 - \mathrm{Re}(\gamma) \,
  \frac{ \gamma^\ast (1-w) + \imag \, \alpha \, |\gamma|^2 }{2 \alpha \, |\gamma|^2 \, \|A\|^2 \, 
 	e^{ 2 \delta} } \, \big[ 1 + \mathrm{tanh}\big( \vartheta + \imag \, \mathrm{Im}(\delta) \big) \big] \Big) \, ,  \\
  u_2' &=& \mathrm{Re}(\gamma)  \, \frac{ \gamma^\ast (1-w) + \imag \, \alpha \, |\gamma|^2}{2 \alpha \, |\gamma|^2 \, \|A\| \, e^{\mathrm{Re}(\delta)} } \, 
  e^{\imag \, \varphi} \, e^{-\frac{1}{2} \imag \, \psi} \, 
  \mathrm{sech}\big( \vartheta + \imag \, \mathrm{Im}(\delta) \big) \, ,
\eez
after removing a constant, $\mathrm{Re}(\delta) + \ln(|c_1/c_0|)$, in the argument of 
the hyperbolic functions via a translation in $x$ or $t$, and using the unitary symmetry of the vector FL equation to eliminate constant phase factors. Here we set
\bez
\vartheta &=& \frac{1}{2 |\gamma|^2}  \mathrm{Re}\big( \gamma \, (w^\ast - 1) \big) \, x + \frac{1}{2} \big( \alpha^{-1} \mathrm{Im}(w) - \mathrm{Re}(\gamma) \big) \, t \, , \\
\psi &=& \big( \alpha - \mathrm{Im}(\gamma)/|\gamma|^2 - \mathrm{Im}(w/\gamma) \big) \, x + \big( 2 \|A\|^2 + \mathrm{Im}(\gamma) + \alpha^{-1} (\mathrm{Re}(w)-1) \big) \, t \, , \\
e^{2\delta} &=& 1 + \imag \, \gamma^\ast \frac{|1+\imag \, \alpha \gamma - w|^2}{4 \alpha^2 |\gamma|^2 \, \|A\|^2} 
 = 1 + \big( \mathrm{Im}(\gamma) + \imag \, \mathrm{Re}(\gamma) \big) \frac{|1+\imag \, \alpha \gamma - w|^2}{4 \alpha^2 |\gamma|^2 \, \|A\|^2} \, .
\eez
The above expressions also cover the alternative case $c_1=0$ (and $c_2 \neq 0$) if 
we allow for $w$ also the negative of the square root in (\ref{special_pw_roots}). 
If $\mathrm{Re}\big( \gamma \, (w^\ast - 1) \big) = 0$, then both components of
$u'$ are functions of $t$ only, multiplied by a phase factor with an exponent 
linear in $x$. See Appendices~\ref{app:Ex6.2_special} and \ref{app_very_simple_class} for this case. 
If $\mathrm{Re}\big( \gamma \, (w^\ast - 1) \big) \neq 0$, then the wave, described by $u'$, travels with constant velocity  
\bez
 v = |\gamma|^2 \, \frac{\mathrm{Re}(\gamma) - \alpha^{-1} \mathrm{Im}(w) }{\mathrm{Re}\big( \gamma \, (w^\ast - 1) \big)} \, ,
\eez
along the $x$-axis, while $u_2'$ oscillates (i.e., ``beats") with period
\bez
    T = 4 \pi \, \big| 2 \|A\|^2 - \alpha^{-1} +\mathrm{Im}(\gamma) + \alpha^{-1} \mathrm{Re}(w) \big|^{-1} \, ,
\eez
disregarding the phase factor $e^{\imag \, \varphi}$, which causes a modulation and is common to both components.  
In the comoving coordinate $x' = x - v \, t$, we have
\bez
   \psi &=& \Big( \big( \alpha - \mathrm{Im}(\gamma)/|\gamma|^2 - \mathrm{Im}(w/\gamma) \big) \, v + 2 \|A\|^2 + \mathrm{Im}(\gamma) + \alpha^{-1} (\mathrm{Re}(w) - 1) \Big) \, t \\
   && + \big( \alpha - \mathrm{Im}(\gamma)/|\gamma|^2 - \mathrm{Im}(w/\gamma) \big) \, x' \, , 
\eez
so that $u_2'$ exhibits, at constant $x'$, the period
\bez
   T' = 4 \pi \, \Big| \big( \alpha - \mathrm{Im}(\gamma)/|\gamma|^2 - \mathrm{Im}(w/\gamma) \big) \, v + 2 \|A\|^2 + \mathrm{Im}(\gamma) + \alpha^{-1} (\mathrm{Re}(w) - 1) \Big|^{-1} \, . 
\eez  
The absolute squares of the components of $u'$ are found to be given by
\bez
|u_1'|^2 = \|A\|^2 \, \Big| 1 + \frac{2 \, \mathrm{Re}(\gamma) \, \mathrm{Re}(\gamma \, (w^\ast-1))}{\alpha \, |\gamma|^2 \, \|A\|^2 \, |e^{\vartheta+2 \delta} + e^{-\vartheta}|^2} \Big| \, , \qquad
|u_2'|^2 = \mathrm{Re}(\gamma)^2 \, \frac{|1+\imag \, \alpha \gamma-w|^2 }{\alpha^2 \, |\gamma|^2 \, \|A\|^2 \, |e^{\vartheta+2\delta} + e^{-\vartheta}|^2} \, .
\eez
As a consequence, 
$u_1'$ represents a dark soliton if $\alpha \, \mathrm{Re}(\gamma) \, \mathrm{Re}(\gamma \, (w^\ast-1)) < 0$. If the maximum of the absolute value in the expression for $|u_1'|^2$ is smaller than 1, we have a single dark soliton. An example is shown in Fig.~\ref{fig:2vFL_spw_beat}.
\begin{figure}[h]
	\begin{center}
		\includegraphics[scale=.34]{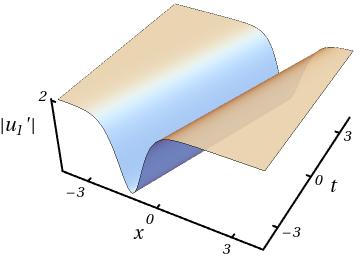} 
		\hspace{1cm}
		\includegraphics[scale=.34]{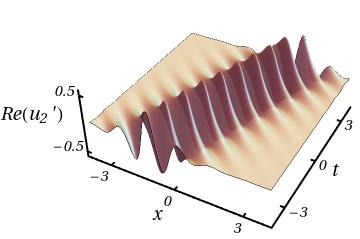} 
		\parbox{15cm}{
			\caption{Plots of the absolute values of the first, and the real part (the imaginary part is similar) of the second component for an $n=1$ solution from the class in Example~\ref{ex:special_pw_breather}. The parameters are 
				$\alpha = \|A\| = 2$, $\gamma = 1 + \imag$, $c_0=c_1=1$, $c_2=0$, but we 
				replaced $w$ by $-w$. The second plot shows the ``beating". The 
				velocity is $v \approx 0.3$, the beating period $T' \approx 1$.  
				\label{fig:2vFL_spw_beat} } 
		}
	\end{center}
\end{figure}
If the maximum is greater than 1, there will be two dark soliton parts on both sides of an ``anti-dark" one, i.e., a bright soliton on the plane wave background. 

When $\alpha \, \mathrm{Re}(\gamma) \, \mathrm{Re}(\gamma\,(w^\ast-1)) > 0$, then $u_1'$ represents an anti-dark soliton, since its amplitude is then greater than the background density value and $|u_1'|$ falls off toward it as $x \to \pm \infty$. 

In each case, $u_2'$ describes a bright soliton (on vanishing background). Solutions of the kind described here already appeared in \cite{LFZ18} (see Section~5.2 therein).  
\hfill $\Box$
\end{example}

\begin{example}
		\label{ex:special_pw_beating_solitons}
Applying $\boldsymbol{U}_0$, given in (\ref{2vFL_U}), with $A_2 = A_1 \in \mathbb{R}$, to the solution with $c_2=0$ in the preceding example, the resulting new solution, 
\be
    u'_{\mathrm{new}} = \frac{1}{2} \left( \begin{array}{c} u_1' - u_2' \\ u_1' + u_2' \end{array} \right) \, ,  \label{u'_beating}
\ee
is generated from the uniform plane wave seed $u= A_1 e^{\imag \, \varphi} (1,1)^T$ by the Darboux transformation. Now we have $\|A\|^2 = 2 A_1^2$. The structure of the solution 
displayed in (\ref{u'_beating}), and with a relative oscillation of $u_2'$, is characteristic for what has been called 
``beating solitons" in the NLS literature (cf. \cite{LCA24} and references cited there).
We have 
\bez
    u'_{\mathrm{new}}(x',t+T'/2) 
  = \frac{1}{2} \left( \begin{array}{c} u_1'(x',t) + u_2'(x',t) \\ 
  	u_1'(x',t) - u_2'(x',t) \end{array} \right) \, ,
\eez
which means an exchange of the two components, 
and $u'_{\mathrm{new}}(x',t+T') = u'_{\mathrm{new}}(x',t)$. 

An extreme example from the resulting class of solutions is presented in  Fig.~\ref{fig:2vFL_special_beating}. Here the parameters are chosen in such a way that the coefficient of $t$ in the expression for $\psi$ vanishes, i.e., $4 A_1^2 - \alpha^{-1} +\mathrm{Im}(\gamma) + \alpha^{-1} \mathrm{Re}(w) =0$, so that $T$ is infinite while $T'$ is finite.
\begin{figure}[h]
	\begin{center}
		\includegraphics[scale=.34]{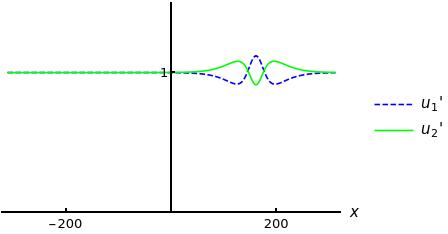} 
		\includegraphics[scale=.34]{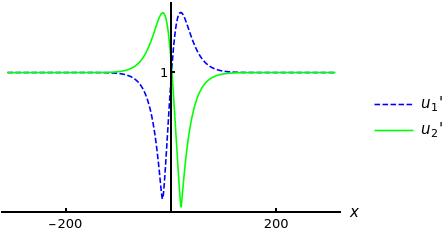} 
		\includegraphics[scale=.34]{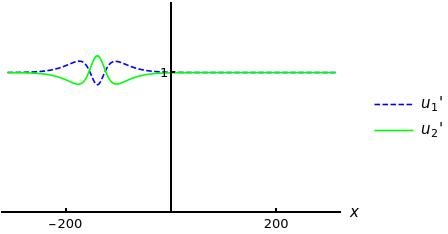} 
		\parbox{15cm}{
\caption{Plots of the absolute value of the components of an $n=1$ beating soliton solution from the class in Example~\ref{ex:special_pw_beating_solitons} at $t=-19.2$ (left), $t=0$ (middle) and $t=16.9$ (right). Here we set $A_2=A_1$, $\alpha = 1/(4 A_1^2)$ (so that $\beta=0$ and $w = \sqrt{1-\gamma^2/(16 A_1^4)}$),  
$A_1 = c_0 = c_1 = 1$, $c_2 =0$ and $\gamma=10$. 
Comparing the first and the third plot, we observe an exchange of the 
profiles of the absolute values of the components $u_1'$ and $u_2'$. 
				\label{fig:2vFL_special_beating} } 
		}
	\end{center}
\end{figure}
Now  
\bez
x \mapsto x - 2 \pi/\big( \alpha - \mathrm{Im}(\gamma)/|\gamma|^2 - \mathrm{Im}(w/\gamma) \big) 
\eez 
exchanges the signs in the expression for 
the components of $u'_{\mathrm{new}}$. Together with
\bez
t \mapsto t - 2 \pi \alpha \, \frac{\mathrm{Re}(\gamma) /|\gamma|^2 - \mathrm{Re}(w/\gamma)}{\big( \alpha - \mathrm{Im}(\gamma)/|\gamma|^2 - \mathrm{Im}(w/\gamma) \big) \big( \mathrm{Im}(w) - \alpha \, \mathrm{Re}(\gamma) \big)} \, ,
\eez  
it maps $|u'_{\mathrm{new},1}|$ into $|u'_{\mathrm{new},2}|$, and vice versa. Applying this map twice, 
$|u'_{\mathrm{new},i}|$ at the shifted values of the independent variables coincides with $|u'_{\mathrm{new},i}|$ itself, for $i=1,2$. 
\hfill $\Box$
\end{example}

Next we turn to the case $\mathrm{Re}(\gamma) = 0$, where the spectrum condition for the Lyapunov equation does not hold.

\begin{example}
\label{ex:special_pw_gamma_imaginary}
Let $\gamma = - \imag \, k$ with $k \in \mathbb{R} \setminus \{0\}$. Then the two roots of the characteristic equation (\ref{special_quadratic_R_eq}) are 
\bez
	r_\pm =   \pm  \frac{1}{2 k}  \sqrt{ 2 \, \alpha^2 (2 \|A\|^2 - \alpha^{-1}) \, k - \alpha^2 k^2 - 1} \, .
\eez	
The discriminant is real. If it is negative (positive), then $r_\pm$ is imaginary (respectively, real). 
The Lyapunov equation can only be satisfied if just one of the constants $c_\kappa$ is different from zero and, moreover,  $c_0=0$. We thus restrict the solution of the linear system to
\bez
	\chi_1 = e^{\frac{\imag}{2} \varphi} \, e^{r ( x - \frac{k}{\alpha} \, t)} \, c_1 \,  , \qquad 
        \chi_2=0 \, ,
\eez
where $c_1 \neq 0$ and $r$ now stands for any of the roots $r_\pm$. 
As a consequence of $\chi_2=0$, the resulting solution is equivalent to a solution of the scalar FL equation.
The Lyapunov equation reduces to the constraint
\bez
  k \, \big| r - \frac{\imag}{2} ( \alpha + \frac{1}{k} ) \big|^2 
       = \alpha^2 \|A\|^2 \, ,
\eez
which requires $k >0$. This is only compatible with the expression 
for $r$ if it is real, in which case the constraint is automatically satisfied. In the following, we therefore assume that $r$ is real.
The differential equations for $\Omega$ read
\bez
\Omega_x = \frac{|c_1|^2}{2\,k^2} \big( 1 + 2 \, \imag \, k \, ( r + \imag \, \frac{\alpha}{2}) \big) \, e^{2\theta} \, , \qquad 
\Omega_t = - \frac{k}{\alpha} \, \Omega_x \, ,
\eez
where
\bez
 \theta = r \, \big( x-\frac{k}{\alpha} \, t \big) \, .
\eez  
\noindent
\textbf{(i)} $r \neq 0$. We obtain
\bez
\Omega = \frac{\imag \, |c_1|^2\,(1+\imag \, k \, (2 r + \imag \, \alpha) )}{2\,k^2\,(\alpha+\imag \, (2 r + \imag \, \alpha) )} \, e^{2\theta} + c \, ,
\eez 
with a constant $c$, which (\ref{2vFLeq_pw}) requires to be real. The generated solution of the two-component vector FL equation is
\bez
 u_\mu' = A_\mu \,  e^{\imag \, \varphi} \, \Big( 1 
 + \frac{ k\, |c_1|^2 \big( k\,  (2 r + \imag \, \alpha) - \imag \, (1+2 \, \alpha \, k) \big) \, (\alpha+\imag \, (2 r + \imag \, \alpha) ) \, e^{2\theta}}{\alpha \|A\|^2 \big( \imag \, |c_1|^2 (1+\imag \, k \,  (2 r + \imag \, \alpha) ) \, e^{2\theta} + 2 \, k^2 \, c \, (\alpha+ \imag \, (2 r + \imag \, \alpha) )\big)} \Big) 
 \, , \quad  \mu=1,2 \, .
\eez

\noindent
\textbf{(ii)}  $r =0$. Then we have 
\bez
  k = 2 \, \|A\|^2 - \alpha^{-1} \pm 2 \sqrt{\|A\|^2 (\|A\|^2 - \alpha^{-1})} \, ,
\eez
which is real if and only if $\|A\|^2 \geq \alpha^{-1}$. In case of the plus sign, $k$ is then positive. 
In case of the minus sign, $k$ is only positive if $\alpha<0$. Since now  $\theta =0$, we obtain
\bez
     \Omega = \frac{|c_1|^2}{2 k} (1-k \alpha) \, \big( \frac{x}{k} - \frac{t}{\alpha} \big) + c \, ,
\eez 
with a constant $c$, for which (\ref{2vFLeq_pw}) requires $\mathrm{Im}(c) =|c_1|^2/(2 k)$. Obviously, $\mathrm{Re}(c)$ can be absorbed by a shift (respectively redefinition) of one of the independent variables and multiplication with a constant phase factor (which is a special case of the unitary symmetry of the vector FL equation). 
We obtain the regular quasi-rational solution
\be
 u' &=& e^{\imag \, \varphi} \, \Big( 1 - \frac{\imag \, |c_1|^2}{2 \alpha \, \|A\|^2 \, \Omega} \big( \alpha + \frac{1}{k} \big) \Big) 
\left( \begin{array}{cc} A_1 \\ A_2 \end{array} \right) \nonumber \\
  &=& e^{\imag \, \varphi} \, \Big( 1 - \frac{\imag \, k \, ( 1 + k \alpha )}{\|A\|^2 \, \big( (1-k \alpha) \, ( \alpha \, x - k \, t ) + \imag \, k \alpha \big)}  \Big) 
  \left( \begin{array}{cc} A_1 \\ A_2 \end{array} \right) 
  \, ,   \label{special_quasi-rational_soliton} 
\ee
which represents a soliton rationally concentrated along the line $x = (k/\alpha) \, t$ (hence with velocity $k/\alpha$), where $|u'_\mu| = |A_\mu| \, |1-(\alpha^{-1}+k)/\|A\|^2|$. This shows that we also obtain dark solitons in this way. An example is displayed in Fig.~\ref{fig:special_rational_dark_soliton}.\footnote{This  solution is in the ``repeated real root" class, obtained via the ``limit technique" in \cite{LFZ18}, page 206.}

\begin{figure}[h]
	\begin{center}
		\includegraphics[scale=.4]{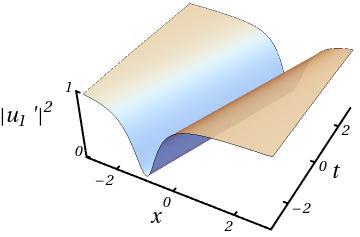} 
		\parbox{15cm}{
	\caption{Plot of the absolute square of the first component of an $n=1$ quasi-rational solution of the two-component vector FL equation, from the class given by (\ref{special_quasi-rational_soliton}). Here we chose 
	$\alpha = 1$, $A_1 = 1$, $A_2 = 1/\sqrt{3}$ and $k=1/3$. The second component just differs in the amplitude.    \label{fig:special_rational_dark_soliton} } 
		}
	\end{center}
\end{figure}
\hfill $\Box$
\end{example}

In case of a double root, i.e., $r_+ = r_-$, (\ref{special_pw_sol_chi1}) does not constitute the complete solution of the corresponding part of the linear system. Then we have to find the missing solution, which is done in the next example. This leads, in particular, to quasi-rational solutions.

\begin{example}
\label{ex:special_pw_double_root}
We have a double root if and only if $r=0$, so that 
\be
  \gamma 
  = - \imag \, \beta \pm 2 \sqrt{ \|A\|^2 (\alpha^{-1} - \|A\|^2) } \, .
       \label{special_gamma_for_r=0}
\ee
Then the general solution of (\ref{2vFL_spw_linsys1}) is
\bez
   \chi_1 = ( c_1 + c_2 \, \rho ) \, e^{\frac{\imag}{2} \, \varphi} \, ,  \qquad \rho = x -\imag \, \frac{\gamma}{\alpha} \, t \, ,
\eez
with complex constants $c_1,c_2$. If the real discriminant in the expression for $\gamma$ is non-positive, then $\gamma$ is imaginary and the constraint resulting from the 
Lyapunov equation enforces $c_2=0$, which takes us back to the case covered under (ii) in Example~\ref{ex:special_pw_gamma_imaginary}. In the remaining case, where $\alpha^{-1} > \|A\|^2$, we have
$\mathrm{Re}(\gamma) \neq 0$. We obtain
\bez
\Xi = - \frac{\|A\|}{2 \gamma} \, \Big( (1 + \imag \, \alpha \gamma)  (c_1 + c_2 \, \rho) - 2 c_2 \gamma  \Big) \, e^{\frac{\imag}{2} \, \varphi}  \, , 
\eez
and the Lyapunov equation has the unique solution
\bez
  \Omega = \frac{1}{2 \, \mathrm{Re}(\gamma)} \big( \frac{\imag \, \gamma^\ast}{\alpha^2 \|A\|^4} |\Xi|^2 + |\chi_1|^2 + |\chi_2 |^2 \big)   \, .
\eez
According to (\ref{special_pw_u'}), the generated solution of the two-component FL equation is now given by 
\bez
  u' = e^{\imag \, \varphi} \, \boldsymbol{U}_0 \, \left( \begin{array}{c} \|A\| + (\alpha \, \|A\|^2 \, \Omega)^{-1}  
   	\chi_1^\ast \, \Xi \\ 
   (\alpha \, \|A\|^2 \, \Omega)^{-1} \, \chi_2^\ast \, \Xi \end{array} \right) \, .
\eez
This solution is quasi-rational if $c_0 =0$ and then models a rogue wave, rationally localized in space and time and equivalent to a corresponding solution of the scalar FL equation. Fig.~\ref{fig:2vFL_special_rogue_wave} displays an example. 
\begin{figure}[h]
	\begin{center}
		\includegraphics[scale=.4]{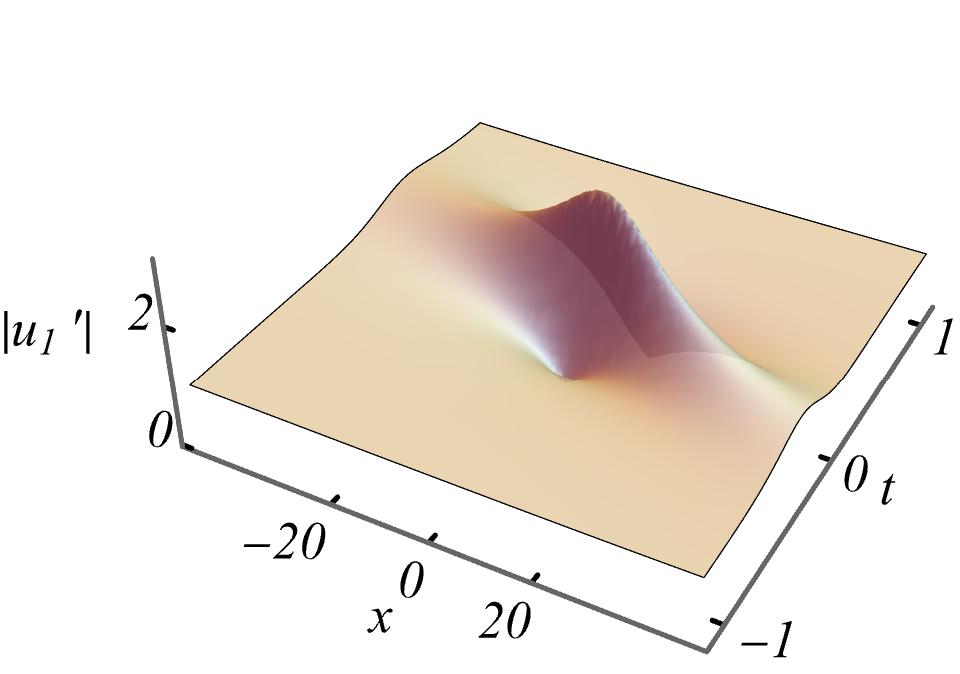} 
		\parbox{15cm}{
\caption{Plots of the absolute value of the first component of an $n=1$ quasi-rational solution from the class in Example~\ref{ex:special_pw_double_root}. We chose $\alpha = 1/10$, $\gamma =8+6 \imag$, $A_1 = A_2 = c_1 = c_2 = 1$ and $c_0=0$. 	\label{fig:2vFL_special_rogue_wave} } 
		}
	\end{center}
\end{figure}
If $c_0 \neq 0$, the solution is not equivalent to one of the scalar equation. In this case, we have a superposition with a breather, see Fig.~\ref{fig:2vFL_special_rw+breather} for an example. The double root case treated in this example has already
been considered in \cite{Xu+Chen19}.
\begin{figure}[h]
	\begin{center}
		\includegraphics[scale=.4]{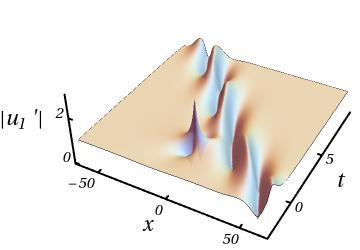} 
                   \hspace{1cm}
                   \includegraphics[scale=.4]{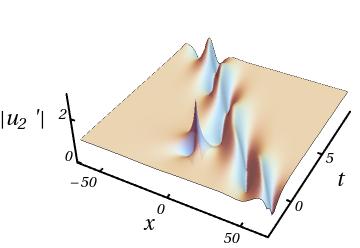} 
		\parbox{15cm}{
\caption{Plots of the absolute values of the two components of an $n=1$ solution from the class in Example~\ref{ex:special_pw_double_root}. We chose $\alpha = 1/3$, $A_1 = A_2 = c_0 = c_2 = 1$ and $c_1=10$. 	
\label{fig:2vFL_special_rw+breather} } 
		}
	\end{center}
\end{figure}
\hfill $\Box$
\end{example}

Appendix~\ref{app:special_superpos} presents some examples of superpositions of two solutions 
from the class in Example~\ref{ex:special_pw_breather}.

\subsubsection{$n=2$}

If $\Gamma$ is diagonal and either satisfies the spectrum condition or has imaginary eigenvalues (which is one of the cases where the spectrum condition does not hold), then the corresponding $n=2$ solution is easily obtained by use of results in Section~\ref{sec:2vFL_special_pw_n=1} and an application of the superposition rule,  formulated in Remark~\ref{rem:vFL_superposition}. 
Next we address another case, where $\Gamma$ is diagonal and the spectrum condition is violated, and then the case, where $\Gamma$ is a Jordan block.
 
\begin{example}
	\label{ex:2vFL_spw_anticonjugate pair}
Let $\Gamma = \mathrm{diag}(\gamma,-\gamma^\ast)$ with $\mathrm{Re}(\gamma) \neq 0$. Then $R = \mathrm{diag}(\pm r,\pm r^\ast)$ with
\bez
    r = \frac{1}{2} \sqrt{\gamma^{-2}-2 \imag \, \alpha^2 \beta \, \gamma^{-1} - \alpha^2} \, .
\eez
Although the matrix equation for $R$ thus has four different matrix roots, it is plausible that already two of them are sufficient to obtain the complete solution of the linear system. Assuming $r \neq 0$, this is 
\bez
    \chi_1 = e^{\frac{\imag}{2} \varphi} \left( \begin{array}{c} c_{11} \, e^\psi + c_{12} \, e^{-\psi} \\
    c_{21} \, e^{\psi^\ast} + c_{22} \, e^{-\psi^\ast} \end{array}\right) \, ,  \quad
    \chi_2 = e^{\imag \, \varphi} \left( \begin{array}{c}  c_{01} \, e^{\frac{1}{2} (\gamma^{-1} x + \gamma \, t)} \\ c_{02} \, e^{-\frac{1}{2} (\gamma^{\ast -1} x + \gamma^\ast \, t)} 
    \end{array}\right) \, ,
\eez
where
\bez
     \psi = r \, (x-\imag \, \alpha^{-1} \gamma \, t) \, .
\eez
We further obtain
\bez
   \Xi = \frac{\|A\|}{2} \, e^{\frac{\imag}{2} \varphi} \left( \begin{array}{c} 
   	c_{11} \, (2 r - \gamma^{-1} -\imag \, \alpha) \, e^\psi 
   	- c_{12} \, (2 r + \gamma^{-1} +\imag \, \alpha) \, e^{-\psi} \\
   	c_{21} \, (2 r^\ast + \gamma^{\ast -1} - \imag \, \alpha ) \, e^{\psi^\ast} - c_{22} \, (2 r^\ast - \gamma^{\ast -1} + \imag \, \alpha ) \, e^{-\psi^\ast} 
   \end{array}\right) \, .
\eez
The Lyapunov equation determines the diagonal entries of $\Omega$,
\bez
    \Omega_{11} &=& \frac{1}{2 \mathrm{Re}(\gamma)} \Big( \frac{\imag}{\alpha^2 \|A\|^4} \Xi \Xi^\dagger \, \Gamma^\dagger
     + \chi_1 \chi_1^\dagger + \chi_2 \chi_2^\dagger \Big)_{11}  \\
     &=& \frac{1}{2\mathrm{Re}(\gamma)}\Big(\big(\frac{\imag \, \gamma^\ast}{4\alpha^2 \|A\|^2} |2r-\gamma^{-1}-\imag \, \alpha|^2+1\big)\,|c_{11}|^2 e^{2\mathrm{Re}(\psi)}\\
     &&-\big(\frac{\imag \, \gamma^\ast}{4\alpha^2\|A\|^2}(2r-\gamma^{-1}-\imag \, \alpha)\,(2r^\ast+\gamma^{\ast -1}-\imag \, \alpha)-1 \big)\,c_{11}c_{12}^\ast e^{2\imag \, \mathrm{Im}(\psi)}\\
     && -\big(\frac{\imag \, \gamma^\ast}{4\alpha^2 \|A\|^2}(2r+\gamma^{-1}+\imag \, \alpha) \, (2r^\ast-\gamma^{\ast -1} +\imag \, \alpha)-1 \big) \, c_{11}^\ast c_{12} e^{-2\imag \, \mathrm{Im}(\psi)} \\
     &&+\big(\frac{\imag \, \gamma^\ast}{4\alpha^2 \|A\|^2} |2r+\gamma^{-1}+\imag \, \alpha|^2+1 \big) \, |c_{12}|^2 e^{-2\mathrm{Re}(\psi)} + |c_{01}|^2 e^{\mathrm{Re}(\gamma^{-1}x+\gamma t)} \Big) \, , \\
   \Omega_{22} &=& -\frac{1}{2 \mathrm{Re}(\gamma)} \Big( \frac{\imag}{\alpha^2 \|A\|^4} \Xi \Xi^\dagger \, \Gamma^\dagger
   + \chi_1 \chi_1^\dagger + \chi_2 \chi_2^\dagger \Big)_{22} \\
   &=& \frac{1}{2\mathrm{Re}(\gamma)}\Big(\frac{\imag \, \gamma}{4\alpha^2\|A\|^2}\big(|2r+\gamma^{-1}+\imag \, \alpha|^2-1 \big) \, |c_{21}|^2 e^{2\mathrm{Re}(\psi)} \\
   &&-\big(\frac{\imag \, \gamma}{4\alpha^2\|A\|^2}(2r+\gamma^{-1}+\imag \, \alpha) \, (2r^\ast-\gamma^{\ast -1}+\imag \, \alpha)+1\big)\,c_{21}^\ast c_{22}e^{2\imag \, \mathrm{Im}(\psi)} \\
   &&-\big(\frac{\imag \, \gamma}{4\alpha^2 \|A\|^2}(2r-\gamma^{-1}-\imag \, \alpha) \, (2r^\ast+\gamma^{\ast -1}-\imag \, \alpha)+1\big) \, c_{21}c_{22}^\ast e^{-2\imag \, \mathrm{Im}(\psi)} \\
   &&  +\frac{\imag \, \gamma}{4\alpha^2 \|A\|^2} \big(|2r-\gamma^{-1}-\imag \, \alpha|^2-1\big) \, |c_{22}|^2 e^{-2\mathrm{Re}(\psi)} - |c_{02}|^2 e^{-\mathrm{Re}(\gamma^{-1} x + \gamma t)} \Big) \, ,   
\eez
and imposes the constraints
\bez
    \Big( \frac{\imag}{\alpha^2 \|A\|^4} \Xi \Xi^\dagger \, \Gamma^\dagger
    + \chi_1 \chi_1^\dagger + \chi_2 \chi_2^\dagger \Big)_{ij} = 0 \qquad \mbox{for} \quad i \neq j \, .
\eez
The latter amount to
\bez
(c_{11}c_{22}^\ast+c_{12}c_{21}^\ast) \, \big( 4\alpha^2\|A\|^2\gamma+\imag \, (1+\imag \, \alpha \gamma)^2 \big) - 2 \imag \, \gamma\, r \, (c_{11}c_{22}^\ast - c_{12}c_{21}^\ast) \,  (1+\imag \, \alpha \gamma)+2c_{01} c_{02}^\ast\, \alpha^2 \|A\|^2 \gamma =0 \, .
\eez
The differential equations for $\Omega$ lead to
\bez
\Omega_{12x} &=& \frac{\imag}{\alpha^2 \|A\|^4} \big( \Xi_{1x} - \frac{1}{2}\gamma^{-1} \Xi_1 \big) \, \Xi_2^\ast \, , \qquad
\Omega_{12t} = - \frac{\imag}{\alpha^2 \|A\|^4}\big( \alpha \|A\|^3 \chi_{12}^\ast - \gamma \, \Xi_2^\ast \big) \, \Xi_1 \, , \\
\Omega_{21x} &=& \frac{\imag}{\alpha^2 \|A\|^4} \big( \Xi_{2x} + \frac{1}{2}\gamma^{\ast -1} \Xi_2 \big) \, \Xi_1^\ast \, , \qquad \Omega_{21t} = -\frac{\imag}{\alpha^2 \|A\|^4} \big(\alpha \|A\|^3 \chi_{11}^\ast + \gamma^\ast \Xi_1^\ast \big) \, \Xi_2 \, ,
\eez
which leads to
\bez
\Omega_{12} &=& \frac{\imag}{16 \alpha^2 \|A\|^2 \gamma^3 \, r} \Big( c_{11} c_{21}^\ast \big( 1+\gamma \, (\imag \, \alpha+2r) \big) \, \big( 1+\gamma^2 \, (\alpha^2+4r^2) - 4 \gamma \,r \big) \, e^{2\psi} \\
&& - c_{12} c_{22}^\ast \big( 1+\gamma \, (\imag \, \alpha -2r) \big) \, \big( 1 + \gamma^2 (\alpha^2 + 4r^2) + 4\gamma \, r \big) \, e^{-2\psi} \Big) \\
&& +\frac{\imag \, x}{8\alpha^2 \|A\|^2 \gamma^3} \Big( c_{12} c_{21}^\ast \big(1+\gamma \, (\imag \, \alpha  +2r) \big) \, \big(1+\gamma^2 \, (\alpha^2+4r^2)+4 \gamma \, r \big) \\
&& + c_{11} c_{22}^\ast \, \big( 1+\gamma \, (\imag \, \alpha -2r) \big) \, \big( 1+\gamma^2 (\alpha^2+4r^2) - 4\gamma \, r \big) \Big) \\
&& + \frac{\imag \, t}{4\alpha^2 \|A\|^2\gamma} \Big( c_{12} c_{21}^\ast \, \big( 2\alpha \|A\|^2 \, (1+\gamma\,(\imag \, \alpha+2r) \big) + \gamma^2 \, \big( \alpha^2-4r \, (\imag \, \alpha +r) \big) - 2\gamma \, (\imag \, \alpha + 2r) -1 \big) \\
&& +c_{11} c_{22}^\ast \, \big( 2\alpha \|A\|^2 \, \big(1+\gamma \, (\imag \, \alpha-2r) \big) + \gamma^2 \, \big( \alpha^2+4r \, (\imag \, \alpha-r) \big) - 2\gamma \, (\imag \, \alpha-2r) -1 \big) \Big) +\tilde{c}_{12} \, , \\
\Omega_{21} &=& -\frac{\imag}{16\alpha^2 \|A\|^2 \gamma^{\ast 3} \, r^\ast} \Big(c_{11}^\ast  c_{21} \big( 1-\gamma^\ast (\imag \, \alpha+2r^\ast) \big) \, \big( 1+\gamma^{\ast 2} \, (\alpha^2+4r^{\ast 2}) + 4 \gamma^\ast r^\ast \big) \, e^{2\psi^\ast} \\
&& -c_{12}^\ast c_{22} \big( 1-\gamma^\ast (\imag \, \alpha -2r^\ast) \big) \, \big( 1+\gamma^{\ast 2} (\alpha^2+4r^{\ast 2}) - 4\gamma^\ast r^\ast \big) \, e^{-2\psi^\ast} \Big) \\
&& + \frac{\imag \, x}{8\alpha^2 \|A\|^2\gamma^{\ast 3}} \Big( 2\gamma^\ast r^\ast (c_{11}^\ast c_{22} - c_{12}^\ast c_{21}) \, \big( \gamma^{\ast 2} (\alpha^2+4r^{\ast 2}) -2 \imag \, \alpha \gamma^\ast +3\big) \\
&& +(c_{11}^\ast c_{22}+c_{12}^\ast c_{21}) \, \big( 4r^{\ast 2}(\imag \, \alpha\gamma^{\ast 3}-3\gamma^{\ast 2}) + (1+\alpha^2 \gamma^{\ast 2}) \, (\imag \, \alpha \gamma^\ast-1) \big) \Big) \\
&& + \frac{\imag \, t}{4\alpha^2 \|A\|^2\gamma^\ast} \Big( 4\gamma^\ast r^\ast (c_{11}^\ast c_{22}-c_{12}^\ast c_{21}) \, \big(\alpha \, (\imag \, \gamma^\ast + \|A\|^2)-1 \big) \\
&& +(c_{11}^\ast c_{22}+c_{12}^\ast c_{21}) \, \big( 1-\gamma^{\ast 2}(\alpha^2-4r^{\ast 2}) + 2\alpha \|A\|^2 (\imag \, \alpha \gamma^\ast -1) - 2\imag \, \alpha \gamma^\ast \big) \Big) + \tilde{c}_{21} \, ,
\eez
with complex constants $\tilde{c}_{12}$ and $\tilde{c}_{21}$. The remaining condition (\ref{mFLeq_key}) becomes
\bez
\gamma \, (\tilde{c}_{12}-\tilde{c}_{21}^\ast) = c_{01} c_{02}^\ast + c_{11} c_{22}^\ast + c_{12} c_{21}^\ast \, .
\eez 
The resulting solution of the two-component FL equation is obtained by inserting the expressions for $\chi_1$, $\chi_2$, $\Xi$ and $\Omega$ in  (\ref{special_pw_u'}). In addition to the exponential dependence of the solution via $e^\psi$ and its complex conjugate, there is also a rational 
dependence on the variables $x$ and $t$.
Fig.~\ref{fig:2vFL_spw_anticonjugate_Im(gamma)=0} displays an example 
from this class.
\begin{figure}[h]
	\begin{center}
	\includegraphics[scale=.4]{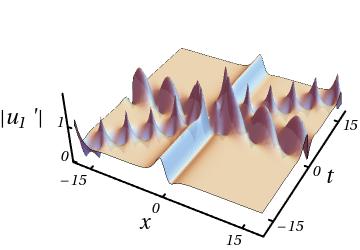} 
	\hspace{1cm}
	\includegraphics[scale=.4]{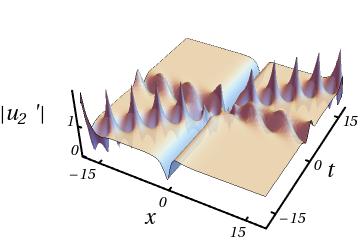} 
	\parbox{15cm}{
	\caption{Plots of the absolute values of the two components of a solution from the class in Example~\ref{ex:2vFL_spw_anticonjugate pair}, showing 
    a bright, respectively dark soliton, a breather and a 
    periodic rogue wave. 
    We chose $\alpha=1$, $A_1=1/2$, $A_2 = \sqrt{3}/2$, $\gamma=1$ (so that $r=-1/2+\imag/2$), $c_{01}=c_{02}=c_{12}=c_{21}=c_{22}=\tilde{c}_{12}=1$, $c_{11}=\tilde{c}_{21}=-1/2+3\imag/2$.
				\label{fig:2vFL_spw_anticonjugate_Im(gamma)=0} }
		}
	\end{center} 
\end{figure} 	
\hfill $\Box$
\end{example}

\begin{example}
Let $\Gamma$ be the $2 \times 2$ lower triangular Jordan block with eigenvalue $\gamma$. Then
\be
   R = \pm \left( \begin{array}{cc} r & 0 \\
   	-(1-\imag \, \alpha^2 \beta \gamma) (4 \gamma^3 \, r)^{-1} & r \end{array} \right) \, , \qquad r = \frac{1}{2\gamma} \sqrt{1 - \alpha^2 \gamma \, (\gamma + 2 \imag \, \beta)} \, .
   	   \label{2vFL_spw_n=2_Jordan_R}
\ee	
Let $r \neq 0$. We obtain
\bez
 \chi_1 &=& e^{\frac{1}{2} \imag \, \varphi} \left(\begin{array}{c}  
 c_{11} \, e^\psi + c_{21} \, e^{-\psi} \\  
(4 \gamma^3 r)^{-1} \Big( \big( -c_{11} \, \rho + 4 c_{12} \gamma^3 r \big) \, e^\psi + \big( c_{21} \, \rho + 4 c_{22} \gamma^3 r \big) \, e^{-\psi} \Big) \end{array} \right) \, , \\
 \chi_2 &=& e^{\imag \, \varphi} \, e^{\frac{1}{2} (\gamma^{-1} x + \gamma \, t)} \left(\begin{array}{c} c_{01} \\
 c_{02}	- \frac{1}{2} c_{01} \gamma^{-1} (\gamma^{-1} x - \gamma \, t) \, 
 	\end{array} \right) \, , \\
 \Xi &=& \frac{\|A\|}{8 \gamma^4 r} \, e^{\frac{\imag}{2} \varphi} \left(\begin{array}{c}
 	- 4 \gamma^3 r \, \big( c_{11} \, b_- \, e^\psi 
 	+ c_{21} \, b_+ \, e^{-\psi} \big) \\ 
 	\big( c_{11} ( b_- \, \rho - 2 \gamma \, d_- )  
 	- 4 c_{12} b_- \gamma^3 r \big) \, e^\psi 
 	- \big( c_{21} ( b_+ \, \rho - 2 \gamma \, d_+ )
 	+ 4 c_{22} b_+ \gamma^3 r \big) \, e^{-\psi} 
 	\end{array} \right) \, ,	
\eez
where 
\bez
  && \rho = (1- \imag \, \alpha^2 \beta \gamma) \, x + \alpha \gamma^2 (  \beta - \imag \, \gamma ) \, t \, , 
   \qquad
   \psi = r \, \big( x - \imag \, \frac{\gamma}{\alpha} t \big) \, , \\
  && b_\pm = 1 \pm 2 \gamma \, r + \imag \, \alpha \, \gamma \, , \qquad
   d_\pm = 1 \pm 2 \gamma \, r - \imag \, \alpha^2 \beta \, \gamma \, .
\eez
It is now straightforward to elaborate the equations determining $\Omega$ and then the generated solution (\ref{special_pw_u'}) of the two-component vector FL equation. If $\mathrm{Re}(\gamma) \neq 0$, then $\Omega$ is completely determined by the Lyapunov equation and given by Example~\ref{ex:n=2Jordan_Lyapunov_sol}. If $\mathrm{Re}(\gamma) = 0$, we have to employ the differential equations for $\Omega$ in addition, a more tedious task. We skip the details.
\hfill $\Box$
\end{example}

\begin{example}
Again, let $\Gamma$ be the $2 \times 2$ lower triangular Jordan block with eigenvalue $\gamma$, so that $R$ and $r$ are given 
by (\ref{2vFL_spw_n=2_Jordan_R}). But now we consider 
the double root case $r=0$. Then the expression on the right hand side of (\ref{special_quadratic_R_eq}) is nilpotent and does not possess a square root, and $\gamma$ is fixed by (\ref{special_gamma_for_r=0}). The general solution of the linear system is now given by $\tilde{\chi}_i = (\tilde{\chi}_{i1},\tilde{\chi}_{i2})^T$, 
$i=1,2$, with
\bez
  \tilde{\chi}_{11} &=& a_1 \, \rho + a_0 
  \qquad \mbox{where} \quad \rho = x- \frac{\imag}{2 \gamma} \, t \, , \\
  \tilde{\chi}_{12} &=& 
   - \frac{(1 - \imag \, \alpha^2 \beta \gamma)}{12 \gamma^3} ( a_1 \, 
    \rho^3 + 3 a_0 \, \rho^2 ) + c_2 \rho + c_1 + \frac{6 \, \imag \, a_1}{12 \gamma^2} \, t \, , \\
  \tilde{\chi}_{21} &=&  b_1 \, e^{\frac{1}{2} (\gamma^{-1} x + \gamma \, t)} \, , \qquad  
  \tilde{\chi}_{22} = \big( b_2 - \frac{1}{2} b_1 ( \gamma^{-2} x - t)  \big) \, e^{\frac{1}{2} (\gamma^{-1} x + \gamma \, t)}  \, , 
\eez
with constants $a_0,a_1,b_1,b_2,c_1,c_2$. Then we obtain the following components of $\Xi$,
\bez
  \Xi_1 &=& - \frac{\|A\|}{2 \gamma} \, e^{\imag \, \varphi} \, \Big(  
  a_1 \, (\rho - 2 \gamma) + a_0 \Big) \, ,  \\
  \Xi_2 &=& \|A\| \, e^{\imag \, \varphi} \, \Big(  
    \frac{a_1}{24 \, \gamma^4} (1 - \imag \, \alpha^2 \beta \gamma) \, \rho^3 
    + \frac{a_0 - 2 a_1 \gamma}{8 \gamma^4} (1-\imag \, \alpha^2 \beta \gamma) \, \rho^2 \\
   && - \frac{1}{2 \gamma^3} \big( a_0 (1- \imag \, \alpha^2 \beta \gamma) 
    - a_1 \gamma + c_2 \gamma^2 \big) \, \rho 
    + \frac{1}{4 \gamma^3} \big( -\imag \, a_1 \, t + 2 a_0 \gamma 
    - 2 c_1 \gamma^2 + 4 c_2 \gamma^3 \big) \, . 
\eez
If $\alpha^{-1} > \|A\|^2$, then $\mathrm{Re}(\gamma) \neq 0$ (cf. Example~\ref{ex:special_pw_double_root}), and $\Omega$ 
is completely determined by the Lyapunov equation alone,
\bez
   \Omega_{11} &=& \frac{1}{2 \, \mathrm{Re}(\gamma)} \Big(
   	 \frac{\gamma^\ast}{\alpha^2 \|A\|^4} |\Xi_1|^2 
   	 +|\tilde{\chi}_{11}|^2 + |\tilde{\chi}_{21}|^2 \Big) \, , \\
   \Omega_{12} &=& \frac{1}{4 \, \mathrm{Re}(\gamma)^2 \alpha^2 \|A\|^4} \Big( \gamma \, |\Xi_1|^2 
    + 2 \, \mathrm{Re}(\gamma) \, \gamma^\ast \, \Xi_1 \, \Xi_2^\ast \\
   && - \alpha^2 \|A\|^4 \big( |\tilde{\chi}_{11}|^2 + |\tilde{\chi}_{21}|^2 
    - 2 \, \mathrm{Re}(\gamma) \, ( \tilde{\chi}_{11} \tilde{\chi}_{12}^\ast + \tilde{\chi}_{21} \tilde{\chi}_{22}^\ast ) \big) \Big) \, , \\
   \Omega_{21} &=& \frac{1}{4 \, \mathrm{Re}(\gamma)^2 \alpha^2 \|A\|^4} \Big( -\gamma^\ast \, |\Xi_1|^2 + 2 \, \mathrm{Re}(\gamma) \, \gamma^\ast \, \Xi_1^\ast \, \Xi_2 \\
   && - \alpha^2 \|A\|^4 \big( |\tilde{\chi}_{11}|^2 + |\tilde{\chi}_{21}|^2 - 2 \, \mathrm{Re}(\gamma) \, ( \tilde{\chi}_{11}^\ast \tilde{\chi}_{12} + \tilde{\chi}_{21}^\ast \tilde{\chi}_{22}) \big) \Big) \, , \\
   \Omega_{22} &=& \frac{1}{8 \, \mathrm{Re}(\gamma)^3 \alpha^2 \|A\|^4} \Big( - 2 \imag \, \mathrm{Im}(\gamma) \, |\Xi_1|^2  
   + 4 \mathrm{Re}(\gamma)^2 \, \gamma^\ast |\Xi_2|^2
   + 4 \imag \, \mathrm{Re}(\gamma) \, \mathrm{Im}(\gamma \, \Xi_1^\ast \, \Xi_2) \\
  && + 2 \alpha^2 \|A\|^4 \big( |\tilde{\chi}_{11}|^2 + |\tilde{\chi}_{21}|^2 
   + 2 \mathrm{Re}(\gamma)^2 \, (|\tilde{\chi}_{12}|^2 + |\tilde{\chi}_{22}|^2 ) - 
  2 \, \mathrm{Re}(\gamma) \, \mathrm{Re}(\tilde{\chi}_{11} \tilde{\chi}_{12}^\ast + \tilde{\chi}_{21} \tilde{\chi}_{22}^\ast) 
    \big)  \Big) \, .   	    
\eez
Now (\ref{special_pw_u'}) determines a solution of the two-component 
vector FL equation, which is a second order version of the solution 
in Example~\ref{ex:special_pw_double_root}. It has already been derived in a different way in \cite{Xu+Chen19}.  
This solution is quasi-rational if and only if $\tilde{\chi}_2=0$, in which case it is equivalent to a solution of the scalar FL equation.  
\hfill $\Box$
\end{example}

\begin{remark}
The solution set of a polynomial matrix equation, like (\ref{special_quadratic_R_eq}), is in general more intricate than for polynomial equations over a field. Here we take a look at an example in our context. Let $\Gamma = \mathrm{diag}(\gamma_1,\gamma_2)$, 	
where $\gamma_1,\gamma_2$ have the form (\ref{special_gamma_for_r=0}). In this case, 
\bez
	R_1 = \left( \begin{array}{cc} 0 & z_1 \\
		0 & 0 \end{array} \right) \, , \qquad
	R_2 = \left( \begin{array}{cc} 0 & 0 \\
		z_2 & 0 \end{array} \right) \, ,	
\eez
solve (\ref{special_quadratic_R_eq}), which is now $R^2 = 0$, for arbitrary $z_1,z_2 \in \mathbb{C}$.
Assuming $z_1 z_2 \neq 0$, (\ref{special_pw_sol_chi1}) yields
\bez
	\chi_1 = e^{\frac{\imag}{2} \, \varphi} \left( \begin{array}{c} c_{11} + c_{12} \, (x- \frac{\imag}{\alpha} \gamma_1 t) \\
		c_{21} + c_{22} \, (x- \frac{\imag}{\alpha} \gamma_2 t) 	 \end{array} \right) \, .
\eez
Here $z_1$ and $z_2$ have been absorbed by redefinition of the 
constant 2-vectors $c_\kappa$. It is thus obvious that, in this way, we get directly a solution describing a (nonlinear) superposition of two solitons of the kind given in  Example~\ref{ex:special_pw_double_root}, without having to reconsider the linear system in order to find missing solutions.
\hfill $\Box$
\end{remark}

\subsection{The generic plane wave case $\alpha_1 \neq \alpha_2$}
\label{subsec:alpha1_neq_alpha2}
If $\alpha_1 \neq \alpha_2$ 
and if the matrix in the last brackets in (\ref{alpha-branch_eq}) is invertible, the latter equation can be solved for $\chi_2$, 
\bez
\chi_2 &=& \frac{\|A\|^2}{(\alpha_2-\alpha_1) \, A_1 A_2} 
 \Big( \imag \, (1 - A^\dagger \boldsymbol{\alpha} A) \, \Gamma^{-1} - \frac{\alpha_1 \alpha_2 \|A\|^2}{A^\dagger \boldsymbol{\alpha} A} I_n \Big)^{-1} \Big( \chi_{1xx} - \imag \, \frac{A^\dagger \boldsymbol{\alpha}^2 A}{A^\dagger \boldsymbol{\alpha} A} \, \chi_{1x} - \frac{1}{4} \Gamma^{-2} \chi_1  \\
 && + \frac{\imag}{2 \|A\|^2 \, A^\dagger \boldsymbol{\alpha} A} 
 \big( (\alpha_2-\alpha_1)^2 |A_1|^2 |A_2|^2 
 - (A^\dagger \boldsymbol{\alpha} A)^2 (1-2 A^\dagger \boldsymbol{\alpha} A) \big) \, \Gamma^{-1} \chi_1 \Big) \, .   
\eez
Inserted in (\ref{2vFL_chi_2x}), this yields a third order ODE for 
$\chi_1$,
\bez 
&& \chi_{1xxx} - \big( \imag \, (\alpha_1+\alpha_2) I_n + \frac{1}{2} \Gamma^{-1} \big) \, \chi_{1xx} - \big( \frac{1}{4} \Gamma^{-2} - \imag \, A^\dagger \boldsymbol{\alpha}^2 A \, \Gamma^{-1} + \alpha_1 \alpha_2 I_n \big) \, \chi_{1x} \\
&& + \Big( \frac{1}{8} \Gamma^{-3} + \frac{1}{4} \imag \, \big( \alpha_1+\alpha_2 - 2 A^\dagger \boldsymbol{\alpha}^2 A \big) \Gamma^{-2} + \alpha_1 \alpha_2 (A^\dagger \boldsymbol{\alpha} A - \frac{1}{2} \big) \Gamma^{-1} \Big) \, \chi_1 = 0 \, .
\eez
(\ref{vFL_chi_t}) reads
\bez
\chi_{1t} &=& - (\imag \, \| A\|^2I_n - \frac{1}{2} \Gamma) \, \chi_1 
+ \frac{\imag}{\|A\|^2} \big( A^\dagger \boldsymbol{\beta} \, A \, \chi_1 + (\beta_2-\beta_1) \, A_1 A_2 \, \chi_2 \big) 
- \imag \, \frac{\|A\|^2}{ A^\dagger \boldsymbol{\alpha}  A} \Big( 
\Gamma \,\chi_{1x}  \\
&&  - \frac{\imag}{\|A\|^2} \Gamma \big( A^\dagger \boldsymbol{\alpha} \, A \, \chi_1 + (\alpha_2-\alpha_1) \, A_1 A_2 \, \chi_2 \big) - \frac{1}{2} \chi_1 \Big)  \, , \\
 \chi_{2t} &=& \frac{1}{2} \Gamma \, \chi_2 + \frac{\imag}{\|A\|^2} \big( (\beta_2-\beta_1) \, A_1^\ast A_2^\ast \, \chi_1 + (\beta_1 |A_2|^2 + \beta_2 |A_1|^2) \, \chi_2 \big) \, ,
\eez
where now
\bez
\beta_1 = 2 |A_1|^2 + (1+\alpha_2/\alpha_1) |A_2|^2 - \frac{1}{\alpha_1} \, , \qquad
\beta_2 = (1+\alpha_1/\alpha_2) |A_1|^2 + 2 |A_2|^2 - \frac{1}{\alpha_2} \, . 
\eez
After elimination of $\chi_2$ in the equation for $\chi_{1t}$, we get
\bez
\alpha_1 \alpha_2 \, \chi_{1t} = \Gamma \chi_{1xx} - \imag \, (\alpha_1+\alpha_2) \Gamma \chi_{1x} - \big( \frac{1}{2} \alpha_1 \alpha_2 \, \Gamma + \frac{1}{4} \Gamma^{-1} - \imag \, (\alpha_1+\alpha_2)(A^\dagger \boldsymbol{\alpha} A - \frac{1}{2}) I_n \big) \, \chi_1 \, ,
\eez
which determines the $t$-dependence of $\chi_1$. 
By consistency, the equation for $\chi_{2t}$ should be identically satisfied as a consequence of the others. This can indeed be verified by a tedious computation. 
The linear system is now (in special cases not necessarily completely) solved by
\be
\chi_1 &=& \sum_\kappa e^{R_\kappa x + S_\kappa t} c_\kappa \, , 
    \label{2vFL_gen_sol_chi1}  \\
\chi_2 &=& \frac{\|A\|^2}{(\alpha_2-\alpha_1) \, A_1 A_2} 
\Big( \imag \, (1 - A^\dagger \boldsymbol{\alpha} A) \, \Gamma^{-1} - \frac{\alpha_1 \alpha_2 \|A\|^2}{A^\dagger \boldsymbol{\alpha} A} I_n \Big)^{-1} \sum_\kappa \Big( R_\kappa^2-  \frac{\imag \, A^\dagger \boldsymbol{\alpha}^2 A}{A^\dagger \boldsymbol{\alpha} A} \, R_\kappa - \frac{1}{4} \Gamma^{-2}  \nonumber  \\
&& + \frac{\imag}{2 \|A\|^2 \, A^\dagger \boldsymbol{\alpha} A} 
\big( (\alpha_2-\alpha_1)^2 |A_1|^2 |A_2|^2 
- (A^\dagger \boldsymbol{\alpha} A)^2 (1-2 A^\dagger \boldsymbol{\alpha} A) \big) \, \Gamma^{-1}   \Big)\,e^{R_\kappa x+S_\kappa t} c_\kappa \, ,  \label{2vFL_gen_sol_chi2}  
\ee
where $\kappa$ runs over a set numbering different roots $R_\kappa$, 
commuting with $\Gamma$, of the cubic $n \times n$ matrix equation
\bez
&& R^3 - \big( \imag \, (\alpha_1+\alpha_2) I_n + \frac{1}{2} \Gamma^{-1} \big) R^2 - \big( \frac{1}{4} \Gamma^{-2} - \imag \, A^\dagger \boldsymbol{\alpha}^2 A \, \Gamma^{-1} + \alpha_1 \alpha_2  I_n \big) R \nonumber \\
&& + \frac{1}{8} \Gamma^{-3} + \frac{1}{4} \imag \, \big( \alpha_1+\alpha_2-2A^\dagger \boldsymbol{\alpha}^2 A \big) \Gamma^{-2} + \alpha_1 \alpha_2 \, (A^\dagger \boldsymbol{\alpha} A - \frac{1}{2} \big) \Gamma^{-1} = 0 \, .
\eez
Further, $\{c_\kappa\}$ are constant $n$-component column vectors and
\bez
S_\kappa = \frac{1}{\alpha_1 \alpha_2} \Big( \Gamma R_\kappa^2 - \imag \, (\alpha_1+\alpha_2) \Gamma R_\kappa - \frac{1}{2} \alpha_1 \alpha_2 \Gamma - \frac{1}{4} \Gamma^{-1} + \imag \, (\alpha_1+\alpha_2)(A^\dagger \boldsymbol{\alpha} A - \frac{1}{2}) I_n \Big) \, .
\eez
It remains to determine $\Omega$. Then, finally, 
\be
u' &=& \left( \begin{array}{cc} A_1 \, e^{\imag \, \varphi_1} & - A_2^\ast \, e^{\imag \, \varphi_1} \\
	A_2 \, e^{\imag \, \varphi_2}  & A_1^\ast \, e^{\imag \, \varphi_2}  \end{array} \right) \left[ \left( \begin{array}{c}    
	1  \\ 0   \end{array} \right)  
+ \frac{1 }{\|A\| \, A^\dagger \boldsymbol{\alpha} A} \, 
\left( \begin{array}{c}   
	\chi_1^\dagger \, \Omega^{-1} \Xi  \\ 
	\chi_{2}^\dagger \, \Omega^{-1} \Xi \end{array} \right) \right]  \nonumber \\
 &=& \left( \begin{array}{cc} e^{\imag \, \varphi_1} 
 	&  0  \\  0  & e^{\imag \, \varphi_2}  \end{array} \right)
 \left( \begin{array}{cc} A_1 + (\|A\| \, A^\dagger \boldsymbol{\alpha} A)^{-1} (A_1 \chi_1^\dagger - A_2^\ast \chi_2^\dagger) \, \Omega^{-1} \Xi  \\[4pt]
   A_2 + (\|A\| \, A^\dagger \boldsymbol{\alpha} A)^{-1} (A_2 \chi_1^\dagger + A_1^\ast \chi_2^\dagger) \, \Omega^{-1} \Xi \end{array} \right) \, ,    \label{2vFL_sol}
\ee
with $\Xi$ given by (\ref{2vFL_Xi}), 
is a solution of the two-component vector FL equation.
\vspace{.3cm}

\noindent
Again, we have to distinguish the following two cases.
\vspace{.2cm}

\noindent
\textbf{(1)} 
If the spectrum condition for $\Gamma$ holds, then the Lyapunov equation (\ref{pw_vFLeq_Lyap}) possesses a unique solution, irrespective of the 
$n \times n$ matrix on its right hand side. In particular, 
if $\Gamma = \mathrm{diag}(\gamma_1,\ldots,\gamma_n)$, with 
$\gamma_i \neq -\gamma_j^\ast$, $i,j=1,\ldots,n$, 
the solution $\Omega$ is given by the Cauchy-like matrix with components 
\bez
\Omega_{ij} = \frac{\imag \, \Xi_i \, \Xi_j^\ast \, \gamma_j^\ast + (A^\dagger  \boldsymbol{\alpha} A)^2 \big( \chi_{1i} \chi_{1j}^\ast + \chi_{2i} \chi_{2j}^\ast \big)}{(A^\dagger \boldsymbol{\alpha} A)^2 \, (\gamma_i + \gamma_j^\ast)} 
\, ,
\eez
where
\bez
\Xi_i = \|A\| \sum_\kappa 
\Big( r_{i \kappa} - \frac{1}{2} \gamma_i^{-1} - \frac{\imag \, A^\dagger \boldsymbol{\alpha} A}{\|A\|^2} \Big) \, e^{r_{i\kappa} x+s_{i\kappa} t} \, c_{\kappa i}
- \frac{\imag \, (\alpha_2-\alpha_1) A_1 A_2}{\|A\|} \chi_{2i} \, ,
\eez
writing $R_\kappa = \mathrm{diag}(r_{1\kappa},\ldots,r_{n\kappa})$ 
and $S_\kappa = \mathrm{diag}(s_{1\kappa},\ldots,s_{n\kappa})$.
\vspace{.2cm}

\noindent
\textbf{(2)} 
If the spectrum condition does \emph{not} hold, the Lyapunov equation (\ref{pw_vFLeq_Lyap}) constrains its right hand side. Moreover, it does not 
(fully) determine $\Omega$. In this case, we have to take (\ref{vFLeq_Om_deriv}) into account, i.e.,
\bez
\Omega_x &=& \frac{\imag}{(A^\dagger \boldsymbol{\alpha} A)^2} \, \Xi_x \, 
 \Xi^\dagger + \frac{1}{2} \Gamma^{-1} \chi_1 \chi_1^\dagger \Gamma^{\dagger -1} \\
&& -\Big(\frac{A_1^\ast A_2^\ast}{A^\dagger \boldsymbol{\alpha} A}(\alpha_2-\alpha_1)\big(\chi_{1x}-\frac{1}{2}\Gamma^{-1}\chi_1\big)-\chi_{2x}
+\frac{\imag \,\|A\|^2}{A^\dagger \boldsymbol{\alpha} A} \alpha_1 \alpha_2 \chi_2 \Big) \chi_2^\dagger \, \Gamma^{\dagger -1} \, , \nonumber \\
\Omega_t &=& -\frac{\imag}{2 (A^\dagger \boldsymbol{\alpha}  A)^2} \, \Xi \,
\, \Xi^\dagger \, \Gamma^\dagger   
 -\frac{\imag \, \|A\|}{A^\dagger \boldsymbol{\alpha}  A} \, \Xi \, \chi_1^\dagger + \frac{1}{2} (\chi_1\chi_1^\dagger+\chi_2\chi_2^\dagger) \, ,  
\eez
and (\ref{vFLeq_key}), which is (\ref{2vFLeq_pw}).

\subsubsection{$n=1$}
\label{subsec:2vFL_generic_pw_n=1}
For $n=1$, we write $\Gamma=\gamma$ and $R_\kappa=r_\kappa$, $S_\kappa = s_\kappa$, so that 
\bez
\chi_1 &=& \sum_{\kappa} e^{r_\kappa x+s_\kappa t} \, c_\kappa \, , \\
\chi_2 &=& \frac{\gamma\,\|A\|^2 A^\dagger \boldsymbol{\alpha}  A}{(\alpha_2-\alpha_1) A_1 A_2 \big(\imag \, (1-A^\dagger \boldsymbol{\alpha} A) \, A^\dagger \boldsymbol{\alpha}  A - \gamma \, \alpha_1 \alpha_2 \|A\|^2 \big)} \sum_\kappa \Big( r_\kappa^2 - \frac{\imag \, A^\dagger \boldsymbol{\alpha}^2 A}{A^\dagger \boldsymbol{\alpha} A} \, r_\kappa - \frac{1}{4} \gamma^{-2} \\
&& + \frac{\imag}{2 \|A\|^2 \, A^\dagger \boldsymbol{\alpha} A} 
\big( (\alpha_2-\alpha_1)^2 |A_1|^2 |A_2|^2 
- (A^\dagger \boldsymbol{\alpha} A)^2 (1-2 A^\dagger \boldsymbol{\alpha} A) \big) \, \gamma^{-1} \Big) \, e^{r_\kappa x+s_\kappa t} \, c_\kappa \, ,
\eez
assuming $\gamma \neq \imag \, (1-A^\dagger \boldsymbol{\alpha}  A) \, A^\dagger \boldsymbol{\alpha}  A/(\alpha_1 \alpha_2 \|A\|^2)$. Here
$\{ r_\kappa \}$ is a set of different roots of the cubic 
characteristic equation
\be
&& r^3 - \big( \imag \, (\alpha_1+\alpha_2)   + \frac{1}{2} \gamma^{-1} \big) r^2 - \big( \frac{1}{4} \gamma^{-2} - \imag \, A^\dagger \boldsymbol{\alpha}^2 A \gamma^{-1} + \alpha_1 \alpha_2  \big) r \nonumber \\
&& + \frac{1}{8} \gamma^{-3} + \frac{1}{4} \imag \, \big( \alpha_1+\alpha_2-2A^\dagger \boldsymbol{\alpha}^2 A \big) \gamma^{-2} + \alpha_1 \alpha_2 \, (A^\dagger \boldsymbol{\alpha} A - \frac{1}{2} \big) \gamma^{-1} = 0 \, ,
    \label{2vFL_n=1_cubic_eq}
\ee
and 
\be
s_\kappa = \frac{1}{\alpha_1 \alpha_2} \Big( \gamma\, r_\kappa^2 - \imag \, (\alpha_1+\alpha_2) \gamma\, r_\kappa - \frac{1}{2} \alpha_1 \alpha_2 \gamma - \frac{1}{4} \gamma^{-1} + \imag \, (\alpha_1+\alpha_2)(A^\dagger \boldsymbol{\alpha} A - \frac{1}{2}) \Big) \, .   \label{2vFL_n=1_s_kappa}
\ee
If there are three different roots $r_\kappa$, $\kappa=1,2,3$, then the above solution of the linear system is 
the general solution. Otherwise, the missing solutions still have to be found, which is addressed in subsections~\ref{subsec:2vFL_triple_root} and \ref{subsec:2vFL_double_root}. 

The Darboux transformation yields the following solution of the two-component vector FL equation,
\be
 u'= 
\left( \begin{array}{cc} e^{\imag \, \varphi_1} \, \big[ A_1 + (\|A\| \, A^\dagger \boldsymbol{\alpha} A \, \Omega)^{-1} (A_1 \chi_1^\dagger - A_2^\ast \chi_2^\dagger) \,   \Xi \big] \\[4pt]
e^{\imag \, \varphi_2} \, \big[ A_2 + (\|A\| \, A^\dagger \boldsymbol{\alpha} A \, \Omega)^{-1} (A_2 \chi_1^\dagger + A_1^\ast \chi_2^\dagger) \,  \Xi \big] \end{array} \right) \, ,
    \label{alphas_different_n=1_solution}
\ee
where now
\bez
 \Xi = \|A\| \Big( \chi_{1x} - (\frac{1}{2} \gamma^{-1} +\imag \, \frac{ A^\dagger \boldsymbol{\alpha} A}{\|A\|^2}) \, \chi_1 \Big) 
  - \imag \, (\alpha_2-\alpha_1)  \frac{A_1 A_2}{\|A\|} \, \chi_2 \, .
\eez
It only remains to determine $\Omega$.

\paragraph{(1)}  Let $\mathrm{Re}(\gamma) \neq 0$. Then we get
\bez
\Omega = \frac{\imag \, \gamma^\ast \,|\Xi|^2 + (A^\dagger \boldsymbol{\alpha}  A)^2 (|\chi_1|^2+|\chi_2|^2)}{2 \mathrm{Re}(\gamma) \, (A^\dagger \boldsymbol{\alpha} A)^2} \, ,
\eez
which is everywhere non-zero, so that the generated solution $u'$ is regular on $\mathbb{R}^2$. 
Fig.~\ref{fig:2vFL_general_breather} shows plots of the absolute values of the two components of $u'$ for an example from 
this class of solutions of the two-component vector FL equation.

\begin{figure}[h]
	\begin{center}
 \includegraphics[scale=.4]{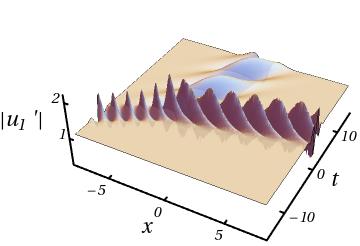} 
 \hspace{1cm}
	\includegraphics[scale=.4]{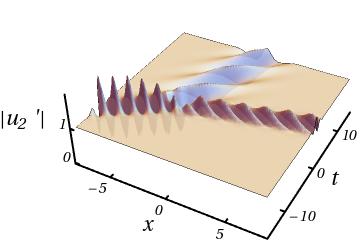} 
	\parbox{15cm}{
	\caption{Plots of the absolute value of the two components of a solution from the class in Section~\ref{subsec:2vFL_generic_pw_n=1} (1), showing a breather that splits, during evolution in $t$, into two breathers. Here we chose $\alpha_1=1$, $\alpha_2=2$, $A_1=\sqrt{6/5}$, $A_2=\frac{1}{2} \sqrt{41/10}$, $\gamma=1+\frac{1}{2}\imag$, $r_1=1-\imag$, $r_{2,3}=\frac{1}{10} (-3 + 19 \, \imag \pm \sqrt{-32-114 \, \imag})$, $c_1=c_2=1$, $c_3=10$. 
				\label{fig:2vFL_general_breather} } 
		}
	\end{center} 
\end{figure}

\paragraph{(2)}   Let $\mathrm{Re}(\gamma)=0$. We write $\gamma=-\imag \, k$ with real $k$. The Lyapunov equation then becomes the constraint
\bez
    (A^\dagger \boldsymbol{\alpha} A)^2 \, \big( |\chi_1|^2+|\chi_2|^2 \big) = k \,  |\Xi|^2 \, ,
\eez
which requires $k>0$. Since this has to hold for all $x$ and $t$, it further requires that only one of the $c_\kappa$ is nonzero. Hence 
\bez
\chi_1 &=&  e^{r  x+s t} \, c \, , \\
\chi_2 &=& -\frac{k\,\|A\|^2 A^\dagger \boldsymbol{\alpha}  A}{(\alpha_2-\alpha_1) A_1 A_2 \big( (1-A^\dagger \boldsymbol{\alpha}  A) \, A^\dagger \boldsymbol{\alpha} A + k \, \alpha_1 \alpha_2 \|A\|^2 \big)}  \Big( r^2 - \frac{\imag \, A^\dagger \boldsymbol{\alpha}^2 A}{A^\dagger \boldsymbol{\alpha} A} \, r + \frac{1}{4} k^{-2} \\
&& - \frac{1}{2 \|A\|^2 \, A^\dagger \boldsymbol{\alpha} A} 
\big( (\alpha_2-\alpha_1)^2 |A_1|^2 |A_2|^2 
-(A^\dagger \boldsymbol{\alpha} A)^2 (1-2 A^\dagger \boldsymbol{\alpha} A) \big) \, k^{-1} \Big) \, e^{r x+s t} \, c \, ,
\eez
where $r$ is any root of the characteristic equation (\ref{2vFL_n=1_cubic_eq}) (and $s$ the corresponding $s_\kappa$). 
 From (\ref{2vFL_n=1_cubic_eq}) and its complex conjugate, we get
\bez
&& |A_1|^2 = \Big( \big(1+2k\,(\alpha_1-\mathrm{Im}(r))\big)^2 + 4k^2\,\mathrm{Re}(r)^2\Big) \, \frac{2\,\mathrm{Im}(r) - \alpha_2}{4\,k\,\alpha_1^2\,(\alpha_1-\alpha_2)} \, , \\
&&|A_2|^2 = - \Big( \big(1+2k \, (\alpha_2-\mathrm{Im}(r))\big)^2 + 4k^2\,\mathrm{Re}(r)^2 \Big) \, \frac{2\,\mathrm{Im}(r) - \alpha_1}{4\,k\,\alpha_2^2\,(\alpha_1-\alpha_2)} \, ,
\eez
which requires 
\bez
 \frac{2\,\mathrm{Im}(r)-\alpha_2}{\alpha_1-\alpha_2} > 0 \, , \qquad \frac{2\,\mathrm{Im}(r)-\alpha_1}{\alpha_1-\alpha_2} < 0 \, .
\eez
By using these expressions for $|A_1|^2$ and $|A_2|^2$, a lengthy calculation reveals that the above constraint is automatically satisfied. 
Furthermore, we have
\bez
\Omega_x &=& -\frac{|c|^2 \big( \delta_1 - \imag \, k \, \mathrm{Re}(r) \big)}{k^2} \, \Big( 1 - \frac{\big( 4 \, \delta_2 \, \mathrm{Im}(r) + \alpha_1 \alpha_2 \big) \, \big( \delta_1^2 + k^2 \, \mathrm{Re}(r)^2 \big) }{  (2 \, \delta_1 \delta_2 + {k \, \alpha_1 \alpha_2})^2 + 4 \, k^2 \, \delta^2_2 \, \mathrm{Re}(r)^2 } \Big) \, e^{2 \theta} \, , \\ 
\Omega_t &=& \frac{2 \, k \, \delta_2}{\alpha_1 \alpha_2} \, \Omega_x \, ,
\eez
where
\bez
 \theta = \mathrm{Re}(r)\,\big( x+\frac{2\, k \,    \delta_2}{\alpha_1\alpha_2}\, t \big) \, , \qquad
 \delta_1 = k \, \mathrm{Im}(r)-\frac{1}{2} \, , \qquad
 \delta_2 = \mathrm{Im}(r) - \frac{1}{2}(\alpha_1+\alpha_2) \, .
\eez
\textbf{(i)} $\mathrm{Re}(r)=0$. Then $\theta=0$ and we obtain
\bez
\Omega &=& -\frac{|c|^2 \, \delta_1}{k^2} \Big( 1 - \frac{\delta_1^2 \, (4 \, \delta_2 \, \mathrm{Im}(r) + \alpha_1 \alpha_2)}{(2 \,\delta_1 \delta_2 + k \, \alpha_1 \alpha_2)^2} \Big) \, \Big( x + \frac{2 \, k \, \delta_2}{\alpha_1 \alpha_2} \, t + \tilde{c} \Big) \, ,
\eez 
with a constant $\tilde{c}$, for which (\ref{2vFLeq_pw}) requires
\bez
   \mathrm{Im}(\tilde{c}) = - \frac{k}{2 \delta_1} = \frac{k}{1 - 2 \, k \, \mathrm{Im}(r)} \, ,
\eez
assuming $\delta_1 \neq 0$. Since $\mathrm{Im}(\tilde{c}) \neq 0$, $\Omega$ vanishes nowhere, so that 
\bez
u_\mu' = A_\mu \, e^{\imag \, \varphi_\mu} \, \Big( 1 + \frac{\imag \, |c|^2}{(2 \, \delta_1 \delta_2 + k \, \alpha_1 \alpha_2) \, \Omega} \Big[ (-1)^{\mu} \, \frac{\delta_1 (k \, \alpha_1 - \delta_1) \, (k \, \alpha_2 - \delta_1) \,(4 \, \delta_2 \, \mathrm{Im}(r) + \alpha_1 \alpha_2)}{k \, |A_\mu|^2 (\alpha_1-\alpha_2) \, (2 \, \delta_1 \delta_2 + k \, \alpha_1 \alpha_2)} - \alpha_1 \alpha_2 \Big] \Big) \, ,
\eez
$\mu=1,2$, is regular on $\mathbb{R}^2$. Along the line $x + 2 \, k \, \delta_2 \, t/(\alpha_1\alpha_2) + \mathrm{Re}(\tilde{c}) = 0$, the absolute square of $u'_\mu$ is given by
\bez
    |u'_\mu|^2 = |A_\mu|^2 \, \big(1-b_\mu\big)^2 \, , \qquad \mu=1,2,
\eez
where
\bez
b_\mu &=& \frac{2}{(2\,\delta_1\delta_2+k\,\alpha_1\,\alpha_2)^2-\delta_1^2\,(4\,\delta_2\,\mathrm{Im}(r) +\alpha_1\alpha_2)} \Big(k\,\alpha_1\alpha_2\,(2\,\delta_1\delta_2+k\,\alpha_1 \alpha_2) \\
&&  -(-1)^\mu\,\frac{\delta_1\,(k\,\alpha_1-\delta_1)\,(k\,\alpha_2-\delta_1) \, (4\,\delta_2\,\mathrm{Im}(r)+\alpha_1 \alpha_2 )}{(\alpha_1-\alpha_2)\,|A_\mu|^2}\Big) \, .
\eez
If $0<b_\mu<2$, then $u'_\mu$ represents a dark soliton. 
If $b_\mu < 0$ or $b_\mu > 2$, it represents an anti-dark soliton. 
 \\
\textbf{(ii)} $\mathrm{Re}(r)\neq 0$. Then we have
\bez
\Omega &=& -\frac{|c|^2 \big( \delta_1 - \imag \, k \, \mathrm{Re}(r) \big)}{2 \, k^2 \, \mathrm{Re}(r)} \, \Big(1 - \frac{\big(4 \, \delta_2 \,\mathrm{Im}(r) + \alpha_1 \alpha_2 \big) \, \big( \delta_1^2 + k^2 \, \mathrm{Re}(r)^2 \big) }{  (2 \, \delta_1 \delta_2 + {k \,\alpha_1 \alpha_2})^2 + 4 \, k^2 \, \delta^2_2 \, \mathrm{Re}(r)^2 } \Big) \, e^{2\theta} + \tilde{c} \, ,
\eez 
with a constant $\tilde{c}$, which (\ref{2vFLeq_pw}) requires to be real. The generated solution of the two-component vector FL equation is
\bez
u_\mu'&=& A_\mu \, e^{\imag \, \varphi_\mu} \, \Big( 1 + 
\frac{|c|^2 \, e^{2\theta} \, F_\mu}{k^2 |A_\mu|^2 (\alpha_1-\alpha_2) A^\dagger \boldsymbol{\alpha} A \, \|A\|^2 \big(k \, \alpha_1 \alpha_2 \|A\|^2-(A^\dagger \boldsymbol{\alpha} A)^2 + A^\dagger \boldsymbol{\alpha} A \big)^2 \, \Omega} \Big)\, ,
\eez
where
\bez
F_\mu &=& \Big(k \,|A_\mu|^2 (\alpha_1-\alpha_2) \, \big(k \,\alpha_1 \alpha_2 \|A\|^2 - (A^\dagger \boldsymbol{\alpha} A)^2 + A^\dagger \boldsymbol{\alpha} A\big) + (-1)^\mu \, r^\ast \, \|A\|^2 \,k^2 \big( \imag \, A^\dagger \boldsymbol{\alpha}^2 A \\
&& +r^\ast \, A^\dagger \boldsymbol{\alpha} A) -(-1)^\mu \,k \, \big( (A^\dagger \boldsymbol{\alpha} A)^3 -\frac{1}{2} (A^\dagger \boldsymbol{\alpha} A)^2 +\frac{1}{2} (\alpha_1-\alpha_2)^2 |A_1|^2 |A_2|^2 \big) + \frac{1}{4} (-1)^\mu \, A^\dagger \boldsymbol{\alpha} A \, \|A\|^2 \Big) \\
&& \times \Big( \|A\|^2 \, k^2 \big( \imag \, (r^2-\alpha_1 \alpha_2) A^\dagger \boldsymbol{\alpha} A +r \,(\alpha_1 \alpha_2 \, \|A\|^2 + A^\dagger \boldsymbol{\alpha}^2 A) \big) - k \, \big( (\frac{1}{2} \imag + r \, \|A\|^2) (A^\dagger \boldsymbol{\alpha} A)^2 \\
&&-r A^\dagger \boldsymbol{\alpha} A \|A\|^2 + \frac{1}{2} \imag \, (\alpha_1 \alpha_2\,\|A\|^4 + |A_1|^2 |A_2|^2 (\alpha_1-\alpha_2)^2 ) \big) + \frac{1}{2} \imag \, A^\dagger \boldsymbol{\alpha} A \|A\|^2 (A^\dagger \boldsymbol{\alpha} A - \frac{1}{2}) \Big) \, .
\eez 
This class of solutions includes dark and bright solitons on the plane wave background. Fig.~\ref{fig:2vFL_antidark-dark soliton} shows plots of the absolute values of the two components of $u'$ for an example from 
this class of solutions of the two-component vector FL equation.
\begin{figure}[h]
	\begin{center}
		\includegraphics[scale=.4]{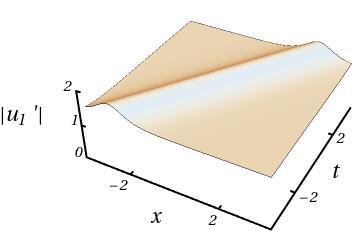} 
		\hspace{1cm}
		\includegraphics[scale=.4]{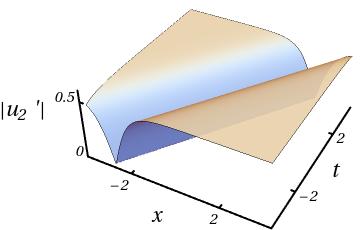} 
		\parbox{15cm}{
			\caption{Plots of the absolute values of the two components of a solution from the class in  Section~\ref{subsec:2vFL_generic_pw_n=1} (ii), showing a bright-dark soliton. Here we chose $\gamma=-\imag$ (i.e., $k=1$), $r=-1+\imag/3$, $\alpha_1=-\alpha_2=1$, 
			$c=1$, $\tilde{c}=-1$. 
				\label{fig:2vFL_antidark-dark soliton} } 
		}
	\end{center} 
\end{figure}

\subsubsection{$n=2$ and generic roots}
Given two $n=1$ solutions from Section~\ref{subsec:2vFL_generic_pw_n=1}, with spectral parameters 
$\gamma_2 \neq -\gamma_1^\ast$, a corresponding superposition,  
which also solves the vector FL equation, is easily computed. With 
the correspondingly generalized restriction, an arbitrary number of $n=1$ solutions can easily be superposed. If, however, the (relative part of the) spectrum condition for the Lyapunov equation does not hold, it is more involved to compute the superposition of $n=1$ solutions. We elaborate an $n=2$ case.

\begin{example}
	\label{ex:2vFL_general_anticonjugate pair}
Let $\Gamma = \mathrm{diag}(\gamma,-\gamma^\ast)$ with $\mathrm{Re}(\gamma) \neq 0$. 
The Lyapunov equation determines the diagonal entries of $\Omega$,
\bez
	\Omega_{11}&=&\frac{1}{2\mathrm{Re}(\gamma)}\Big(\frac{\imag \, \gamma^\ast |\Xi_1|^2}{(A^\dagger \boldsymbol{\alpha}  A)^2}+|\chi_{11}|^2+|\chi_{21}|^2\Big) \, , \\
	\Omega_{22}&=& \frac{1}{2\mathrm{Re}(\gamma)} \Big( \frac{\imag \, \gamma |\Xi_2|^2}{(A^\dagger \boldsymbol{\alpha} A)^2}-|\chi_{12}|^2-|\chi_{22}|^2 \Big) \, ,
\eez
where
\bez
	\Xi =\|A\| \, \Big( \chi_{1x} - \big( \frac{1}{2}  \Gamma^{-1} + \imag \, \frac{A^\dagger \boldsymbol{\alpha} A}{\|A\|^2} \, I_n \big) \, \chi_1 \Big)
	- \imag \, (\alpha_2-\alpha_1) \, \frac{A_1 A_2}{\|A\|} \, \chi_2    \, ,
\eez
and the constraints
\bez
	\Big( \frac{\imag}{(A^\dagger \boldsymbol{\alpha} A)^2} \Xi \Xi^\dagger \, \Gamma^\dagger
	+ \chi_1 \chi_1^\dagger + \chi_2 \chi_2^\dagger \Big)_{ij} = 0 \qquad \mbox{for} \quad i \neq j \, .
\eez
We observe that if $r$ is a solution of (\ref{2vFL_n=1_cubic_eq}), for a given parameter $\gamma$, then $-r^\ast$ is a solution of (\ref{2vFL_n=1_cubic_eq}) with $\gamma$ replaced by $-\gamma^\ast$. Assuming that the three roots of (\ref{2vFL_n=1_cubic_eq}) are all different\footnote{If there is a multiple root, we still obtain exact solutions by restricting the sums to a set of different roots.}, the linear system is completely solved by (\ref{2vFL_gen_sol_chi1}) and (\ref{2vFL_gen_sol_chi2}), 
where now $R_\kappa=\mathrm{diag}(r_\kappa,-r_\kappa^\ast)$, $S_\kappa=\mathrm{diag}(s_\kappa,-s_\kappa^\ast)$, with $s_\kappa$ given by (\ref{2vFL_n=1_s_kappa}), and $c_\kappa=(c_{1\kappa}, c_{2\kappa})^T$. 
Then the above constraints boil down to 
\bez
	\sum_{\kappa=1}^3 c_{1\kappa} \, c_{2\kappa}^\ast \, \big(1 + d_\kappa^2+\imag \, \gamma  h_{\kappa}^2  \big)=0\, ,
\eez
where
\bez
	d_\kappa &=& \frac{1}{(\alpha_2-\alpha_1) A_1 A_2 \big( \imag \, (1-A^\dagger \boldsymbol{\alpha}  A) \, A^\dagger \boldsymbol{\alpha}  A - \gamma \, \alpha_1 \alpha_2 \|A\|^2 \big)}\Big(\gamma\|A\|^2A^\dagger \boldsymbol{\alpha} A\, r_\kappa^2-\imag \, \gamma\|A\|^2A^\dagger \boldsymbol{\alpha}^2 A \, r_\kappa \\
	&&-\frac{1}{4}\gamma^{-1}\|A\|^2A^\dagger \boldsymbol{\alpha} A +\frac{1}{2} \imag \, (A^\dagger \boldsymbol{\alpha} A)^2(2 A^\dagger\boldsymbol{\alpha} A-1)+\frac{1}{2}\imag \, (\alpha_2-\alpha_1)^2|A_1|^2|A_2|^2\Big) \, , \\
	h_{\kappa}&=&\frac{1}{2 \|A\|A^\dagger \boldsymbol{\alpha} A}\Big(2\,(\|A\|^2r_\kappa-\imag \,  k)-\gamma^{-1}\|A\|^2-2\imag \, d_\kappa A_1 A_2\,(\alpha_2-\alpha_1) \Big) \, .
\eez
The differential equations for $\Omega$ read
\bez
	\Omega_{12x} &=&  \frac{\imag}{(A^\dagger \boldsymbol{\alpha} A)^2}\Xi_{1x}\Xi_2^\ast-\frac{1}{2}\gamma^{-2}\chi_{11}\chi_{12}^\ast +\gamma^{-1}\Big(\frac{A_1^\ast A_2^\ast}{A^\dagger \boldsymbol{\alpha} A}(\alpha_2-\alpha_1)\,(\chi_{11x}-\frac{1}{2}\gamma^{-1}\chi_{11})-\chi_{21x}\\
	&&+\frac{\imag \, \|A\|^2}{A^\dagger \boldsymbol{\alpha} A} \alpha_1 \alpha_2 \chi_{21}\Big)\,\chi_{22}^\ast\, ,\\
	\Omega_{12t} &=& \frac{\imag}{(A^\dagger \boldsymbol{\alpha} A)^2}\big(\gamma\,\Xi_2^\ast-  \|A\| A^\dagger \boldsymbol{\alpha} A\,\chi_{12}^\ast\big) \, \Xi_1 \, , \\
	\Omega_{21x} &=& \frac{\imag}{(A^\dagger \boldsymbol{\alpha} A)^2}\Xi_1^\ast \Xi_{2x}-\frac{1}{2}\gamma^{\ast -2}\chi_{11}^\ast \chi_{12}-\gamma^{\ast -1}\Big(\frac{A_1^\ast A_2^\ast}{A^\dagger \boldsymbol{\alpha} A}(\alpha_2-\alpha_1)\,(\chi_{12x}+\frac{1}{2}\gamma^{\ast -1}\chi_{12})-\chi_{22x} \\
	&&+\frac{\imag \, \|A\|^2}{A^\dagger \boldsymbol{\alpha} A} \alpha_1 \alpha_2 \chi_{22}\Big)\,\chi_{21}^\ast\, ,\\
	\Omega_{21t} &=& -\frac{\imag}{(A^\dagger \boldsymbol{\alpha} A)^2}\big(\gamma^\ast \Xi_1^\ast+\|A\| A^\dagger \boldsymbol{\alpha} A \chi_{11}^\ast\big) \, \Xi_2 \, ,
\eez
which leads to
\bez
\Omega_{12} &=& - \frac{\imag}{s_1-s_2} \big( c_{11} c_{22}^\ast h_1 (\gamma h_2+\|A\|) \, e^{(r_1-r_2)x+(s_1-s_2)t} 
 -c_{12} c_{21}^\ast h_2 (\gamma h_1+\|A\|) \, e^{-(r_1-r_2)x-(s_1-s_2)t} \big) \\
&& - \frac{\imag}{s_1-s_3} \big( c_{11} c_{23}^\ast h_1 (\gamma h_3+\|A\|) \,
 e^{(r_1-r_3)x+(s_1-s_3)t} - c_{13} c_{21}^\ast h_3 (\gamma h_1+\|A\|) \, e^{-(r_1-r_3)x-(s_1-s_3)t} \big) \\
 && - \frac{\imag}{s_2-s_3} 
 \big(c_{12} c_{23}^\ast h_2 (\gamma h_3+\|A\|) \, e^{(r_2-r_3)x+(s_2-s_3)t} - c_{13} c_{22}^\ast h_3 (\gamma h_2+\|A\|) \, e^{-(r_2-r_3)x-(s_2-s_3)t} \big)  \\
&&-\imag \, \sum_{\kappa=1}^3 c_{1\kappa} c_{2\kappa}^\ast h_\kappa (\gamma h_\kappa + \|A\|) \,t + \frac{1}{2\gamma^2 A^\dagger \boldsymbol{\alpha} A} \Big( (\alpha_2-\alpha_1) A_1^\ast A_2^\ast \sum_{\kappa=1}^3 c_{1\kappa} c_{2\kappa}^\ast d_{\kappa} (2\gamma r_\kappa -1) \\
&& - \sum_{\kappa=1}^3 \big( A^\dagger \boldsymbol{\alpha} A \, c_{1\kappa} c_{2\kappa}^\ast (2\imag \, \gamma^2 r_\kappa h_\kappa^2 + 2\gamma r_\kappa d_\kappa^2+1) - 2\imag \, \gamma \alpha_1 \alpha_2 \|A\|^2 c_{1\kappa} c_{2\kappa}^\ast d_{\kappa}^2 \big) \Big) \, x + \tilde{c}_{12} 
                    \, , \\
\Omega_{21} &=& 
 \frac{\imag}{s_1^\ast-s_2^\ast} \big( c_{11}^\ast c_{22} h_2^\ast (\gamma^\ast h_1^\ast + \|A\|) \, e^{(r_1^\ast-r_2^\ast)x+(s_1^\ast-s_2^\ast)t} 
 -c_{12}^\ast c_{21} h_1^\ast (\gamma^\ast h_2^\ast + \|A\|) \, e^{-(r_1^\ast-r_2^\ast) x - (s_1^\ast-s_2^\ast) t} \big) \\ 
 && +\frac{\imag}{s_1^\ast-s_3^\ast} \big( c_{11}^\ast c_{23} h_3^\ast (\gamma^\ast h_1^\ast+\|A\|) 
 e^{(r_1^\ast-r_3^\ast)x+(s_1^\ast-s_3^\ast)t} - c_{13}^\ast c_{21} h_1^\ast (\gamma^\ast h_3^\ast+\|A\|) \, e^{-(r_1^\ast-r_3^\ast)x-(s_1^\ast-s_3^\ast)t} \big) \\
 && + \frac{\imag}{s_2^\ast-s_3^\ast} 
 \big( c_{12}^\ast c_{23} h_3^\ast (\gamma^\ast h_2^\ast + \|A\|) \, e^{(r_2^\ast-r_3^\ast)x+(s_2^\ast-s_3^\ast)t}-c_{13}^\ast c_{22} h_2^\ast (\gamma^\ast h_3^\ast + \|A\|) \, e^{-(r_2^\ast-r_3^\ast)x-(s_2^\ast-s_3^\ast)t} \big) \\
&& +\imag \, \sum_{\kappa=1}^3 c_{1\kappa}^\ast c_{2\kappa} h_\kappa^\ast(\gamma^\ast h_\kappa^\ast+\|A\|)\,t+\frac{1}{2\gamma^{\ast 2}A^\dagger \boldsymbol{\alpha} A}\Big((\alpha_2-\alpha_1)A_1^\ast A_2^\ast \sum_{\kappa=1}^3 c_{1\kappa}^\ast c_{2\kappa}d_\kappa^\ast (2\gamma^\ast r_\kappa^\ast-1)\\
&&+\sum_{\kappa=1}^3 \big(A^\dagger \boldsymbol{\alpha} A \, c_{1\kappa}^\ast c_{2\kappa}(2\imag \, \gamma^{\ast 2}r_\kappa^\ast h_\kappa^{\ast 2}-2\gamma^\ast r_\kappa^\ast d_\kappa^{\ast 2}-1)-2\imag \, \gamma^\ast \alpha_1 \alpha_2 \|A\|^2 c_{1\kappa}^\ast c_{2\kappa}d_{\kappa}^{\ast 2}\big) \Big) \,x +\tilde{c}_{21} \, ,
\eez
with complex constants $\tilde{c}_{12}$ and $\tilde{c}_{21}$. The remaining condition (\ref{2vFLeq_pw}) becomes
\bez
	\gamma \, (\tilde{c}_{12}-\tilde{c}_{21}^\ast) = \sum_{\kappa=1}^3 \big( c_{1\kappa} \, c_{2\kappa}^\ast + d_{1\kappa} \, d_{2\kappa}^\ast \big) \, .
\eez 
A class of solutions of the two-component FL equation is now obtained by inserting the expressions for $\chi_1$, $\chi_2$, $\Xi$ and $\Omega$ in (\ref{special_pw_u'}). In addition to the exponential dependence on the independent variables via
$e^{\pm(r_i-r_j)x \pm (s_i-s_j) t}$ and their complex conjugates, the solution also exhibits a rational dependence, a typical feature in 
those cases, where the spectrum condition for the Lyapunov equation does not hold. 
\hfill $\Box$
\end{example}

\begin{example} 
Let $\Gamma$ be the $2 \times 2$ lower triangular Jordan block with eigenvalue $\gamma$. Then we have
\bez
	R_\kappa =  \left( \begin{array}{cc} r_\kappa & 0 \\
		\tilde{r}_\kappa & r_\kappa \end{array} \right) \, ,  \qquad 	S_\kappa =  \left( \begin{array}{cc} s_\kappa & 0 \\
		\tilde{s}_\kappa & s_\kappa \end{array} \right) \, ,
\eez	
with 
\bez
	\tilde{r}_\kappa &=& \frac{8\imag \, \gamma A^\dagger \boldsymbol{\alpha}^2 A \,(\gamma r_\kappa-1) 
		+4\gamma^2 \alpha_1 \alpha_2 \, (2 A^\dagger \boldsymbol{\alpha} A-1)+4 \imag \, \gamma \, (\alpha_1+\alpha_2)-4\gamma r_\kappa\, (1+\gamma r_\kappa)+3}{2\gamma^2 \big(4 \imag \, \gamma A^\dagger \boldsymbol{\alpha}^2 A- 8 \imag \, \gamma^2 r_\kappa\, (\alpha_1+\alpha_2)- 4 \gamma^2 \alpha_1 \alpha_2+12\gamma^2 r_\kappa^2-4\gamma r_\kappa-1\big)} \, , \\
	\tilde{s}_\kappa &=& -\frac{1}{4 \alpha_1 \alpha_2 \gamma^2}\Big(4 \imag \, \gamma^2(r_\kappa+\gamma \tilde{r}_\kappa)\,(\alpha_1+\alpha_2)+ 2 \gamma^2 \alpha_1 \alpha_2-8\gamma^3 r_\kappa \tilde{r}_\kappa-4\gamma^2 r_\kappa^2-1 \Big) \, .
\eez
Assuming that the three roots $r_\kappa$, $\kappa=1,2,3$, of (\ref{2vFL_n=1_cubic_eq}) are all different, the linear system is completely solved by (\ref{2vFL_gen_sol_chi1}) and (\ref{2vFL_gen_sol_chi2}). Then 
$\Xi$ is easily computed and it remains to determine $\Omega$, in order to be able to elaborate the resulting solution of the two-component vector FL equation, which is a second order version of that in Section~\ref{subsec:2vFL_generic_pw_n=1} with the same $\gamma$. If $\mathrm{Re}(\gamma) \neq 0$, then $\Omega$ is fully determined by the Lyapunov equation, see Example~\ref{ex:n=2Jordan_Lyapunov_sol}.
If $\mathrm{Re}(\gamma) = 0$, the differential equations for $\Omega$ have to be solved, which is more tedious. We omit the details. 
\hfill $\Box$
\end{example}

\subsubsection{The triple root case}
\label{subsec:2vFL_triple_root}
The solution of the linear equation for $\chi_1$, obtained in Section~\ref{subsec:alpha1_neq_alpha2} in terms of only exponential functions, is not complete in case of a multiple root of the 
characteristic equation (\ref{2vFL_n=1_cubic_eq}), because there are then additional solutions involving polynomials of the independent variables. 
In this section, we address the special case of a triple root (also see \cite{YZCBG19,Ling+Su24}). This leads to genuine\footnote{In contrast to solutions that are equivalent to solutions of the scalar FL equation.} quasi-rational solutions of the two-component vector FL equation, which turn out to describe in space and time (formally regarding $t$ as time and $x$ as space coordinate) localized perturbations of the plane wave background. 

\begin{example}
	\label{ex:2vFL_Peregrine}
Let $n=1$. If (\ref{2vFL_n=1_cubic_eq}) admits a triple root, then it is given by 
\be
   r =  \frac{\imag}{3} (\alpha_1+\alpha_2) + \frac{1}{6}\gamma^{-1} 
   \, .      \label{2vFL_triple_root}
\ee
The triple root condition further requires
\be
	\gamma &=& \frac{2 \,  \big( (\alpha_1 - \alpha_2)^2 + \alpha_1 \alpha_2 \big) }
	{\imag \, (\alpha_1 + \alpha_2) (2\alpha_1-\alpha_2)(\alpha_1-2\alpha_2) \pm 3 \sqrt{3} \, \alpha_1 \alpha_2 (\alpha_1 - \alpha_2) } \, , \label{gamma_rw} \\
	|A_\mu|^2 &=& 
   \frac{\alpha_\mu\,(\alpha_1-\alpha_2)^2}{\big( (\alpha_1-\alpha_2)^2 + \alpha_1 \alpha_2 \big)^2}
	\, , \qquad \mu=1,2 \,  .  \label{2vFL_triple_root_case_A_mu} 
\ee
Introducing 
\bez
    \tilde{\chi}_{1} = \chi_{1} \, e^{-\theta} \, ,
\eez
with 
\be
	\theta = r\,x+\frac{1}{\alpha_1 \alpha_2} \Big(r^2\gamma-\imag \, r \gamma (\alpha_1+\alpha_2)-\big(\frac{1}{2} \alpha_1 \alpha_2 \gamma + \frac{1}{4}\gamma^{-1}-\imag \, (\alpha_1+\alpha_2) (A^\dagger \boldsymbol{\alpha} A-\frac{1}{2}) \big) \Big) \, t \, ,  \label{2vFL_triple_root_case_theta}
\ee
the linear equations for $\chi_1$ boil down to
\bez
\tilde{\chi}_{1xxx} = 0 \, ,  \qquad
\alpha_1 \alpha_2 \, \tilde{\chi}_{1t} = \gamma \, \tilde{\chi}_{1xx} + \frac{1}{3}\tilde{\gamma} \, \tilde{\chi}_{1x} \, , 
\eez
where 
\be
   \tilde{\gamma} = 1-\imag \, (\alpha_1+\alpha_2)\gamma \, .
        \label{2vFL_triple_root_case_tgamma}
\ee
The linear system is now completely solved by
\bez
	\chi_1 &=& \big(f_1 \,x^2 + f_2 \, x + f_3\big)\,e^{\theta} \, , \\
	\chi_2 &=& \frac{\|A\|^2}{(\alpha_2-\alpha_1) \, A_1 A_2} 
	\Big( \imag \, (1 - A^\dagger \boldsymbol{\alpha} A) \, \gamma^{-1} - \frac{\alpha_1 \alpha_2 \|A\|^2}{A^\dagger \boldsymbol{\alpha} A}  \Big)^{-1} \Big( \chi_{1xx} - \imag \, \frac{A^\dagger \boldsymbol{\alpha}^2 A}{A^\dagger \boldsymbol{\alpha} A} \, \chi_{1x} - \frac{1}{4} \gamma^{-2} \chi_1  \\
	&& + \frac{\imag}{2 \|A\|^2 \, A^\dagger \boldsymbol{\alpha} A} 
	\big( (\alpha_2-\alpha_1)^2 |A_1|^2 |A_2|^2 
	- (A^\dagger \boldsymbol{\alpha} A)^2 (1-2 A^\dagger \boldsymbol{\alpha} A) \big) \, \gamma^{-1} \chi_1 \Big) \\
    &=& \frac{\|A\|^2}{(\alpha_2-\alpha_1) \, A_1 A_2} 
	\Big( \imag \, (1 - A^\dagger \boldsymbol{\alpha} A) \, \gamma^{-1} - \frac{\alpha_1 \alpha_2 \|A\|^2}{A^\dagger \boldsymbol{\alpha} A}  \Big)^{-1} \Big( \Big[ \frac{\imag}{2 \|A\|^2 \, A^\dagger \boldsymbol{\alpha} A} 
	\big( (\alpha_2-\alpha_1)^2 |A_1|^2 |A_2|^2  \\
	&& - (A^\dagger \boldsymbol{\alpha} A)^2 (1-2 A^\dagger \boldsymbol{\alpha} A) \big) \, \gamma^{-1} + r^2 - \imag \, \frac{A^\dagger \boldsymbol{\alpha}^2 A}{A^\dagger \boldsymbol{\alpha} A} r - \frac{1}{4} \gamma^{-2} \Big] \, (f_1 \, x^2 + f_2 \, x + f_3) \\
	&& - \big( \imag \,  \frac{A^\dagger \boldsymbol{\alpha}^2 A}{A^\dagger \boldsymbol{\alpha} A}-2r) \, (2 f_1 \, x + f_2) + 2 f_1 \Big) \, e^{\theta} \, ,   
\eez
where the functions $f_1,f_2,f_3$ satisfy 
\bez
     \left( \begin{array}{c} f_1 \\ f_2 \\ f_3 \end{array} \right)_t 
    = C  \left( \begin{array}{c} f_1 \\ f_2 \\ f_3 \end{array} \right) \, , \qquad
   C = \frac{1}{3\,\alpha_1 \alpha_2} \, \left( \begin{array}{ccc} 
	0 & 0 & 0 \\
	2 \tilde{\gamma} & 0 & 0 \\
	6\gamma & \tilde{\gamma} & 0 \end{array} \right)   \, .
\eez
Hence
\be
   \left( \begin{array}{c} f_1 \\ f_2 \\ f_3 \end{array} \right) =e^{C t}   \left( \begin{array}{c} c_1 \\ c_2 \\ c_3  \end{array} \right)
   = \left( \begin{array}{ccc} 
	1 & 0 & 0 \\ 2 \tilde{\gamma} t/(3\, \alpha_1 \alpha_2) & 1 & 0 \\
    \tilde{\gamma}^2 t^2/(3\,\alpha_1 \alpha_2)^2 + 2 \gamma t/(\alpha_1 \alpha_2) &
           \tilde{\gamma} t/(3 \, \alpha_1 \alpha_2) &  1 \end{array} \right)  
          \left( \begin{array}{c} c_1 \\ c_2 \\ c_3  \end{array} \right)  \, ,   \label{f_1,f_2,f_3}
\ee
with complex constants $c_1,c_2,c_3$. 
Since we assume $\alpha_1 \neq \alpha_2$, we have
\bez
\mathrm{Re}(\gamma) = \pm \frac{3\sqrt{3} \, \alpha_1 \alpha_2 \, (\alpha_1-\alpha_2)}{2 \, \big( (\alpha_1-\alpha_2)^2 +\alpha_1 \alpha_2 \big)^2} \neq 0 \, ,
\eez
so that the Lyapunov equation has the unique solution
\bez
  	\Omega = \frac{\imag \, \gamma^\ast |\Xi|^2 + (A^\dagger \boldsymbol{\alpha} A)^2 (|\chi_1|^2 + |\chi_2|^2)}{2 \mathrm{Re}(\gamma) \, (A^\dagger \boldsymbol{\alpha} A)^2} \, .
\eez
The generated solution (\ref{alphas_different_n=1_solution}) of the two-component vector FL equation is now
\bez
u_1' &=& e^{\imag \, \varphi_1} \Big( A_1 + \frac{2\mathrm{Re}(\gamma) A^\dagger \boldsymbol{\alpha} A \, (A_1 \chi_1^\ast - A_2^\ast \chi_2^\ast) \, \Xi}{\|A\| \big( \imag \, \gamma^\ast |\Xi|^2 + (A^\dagger \boldsymbol{\alpha}  A)^2(|\chi_1|^2 + |\chi_2|^2) \big)} \Big) \, , \\
u_2' &=&	 e^{\imag \, \varphi_2} \Big( A_2 + \frac{2\mathrm{Re}(\gamma)A^\dagger \boldsymbol{\alpha} A \, (A_2 \chi_1^\ast + A_1^\ast \chi_2^\ast)\, \Xi}{\|A\| \big( \imag \, \gamma^\ast |\Xi|^2 + (A^\dagger \boldsymbol{\alpha} A)^2 (|\chi_1|^2 + |\chi_2|^2) \big)} \Big)  \, ,
\eez
where
\bez
\Xi &=& \|A\| \, \Big( \big(r-(\frac{1}{2} \gamma^{-1} + \imag \, \frac{A^\dagger \boldsymbol{\alpha} A}{\|A\|^2}) \big) \, (f_1 x^2 + f_2 x + f_3) + 2 f_1 x + f_2 \\
&& - \Big( (1 - A^\dagger \boldsymbol{\alpha} A) \, \gamma^{-1} + \imag \, \frac{\alpha_1 \alpha_2 \|A\|^2}{A^\dagger \boldsymbol{\alpha} A} \Big)^{-1} 
 \Big[ \Big( \frac{\imag}{2 \gamma \, \|A\|^2 \, A^\dagger \boldsymbol{\alpha} A} 
\big( (\alpha_2-\alpha_1)^2 |A_1|^2 |A_2|^2 \\
&& - (A^\dagger \boldsymbol{\alpha} A)^2 (1-2 A^\dagger \boldsymbol{\alpha} A) \big) 
 + r^2 - \imag \, \frac{A^\dagger \boldsymbol{\alpha}^2 A}{A^\dagger \boldsymbol{\alpha} A} r - \frac{1}{4} \gamma^{-2} \Big) (f_1 x^2 + f_2 x + f_3) \\
&& - \big( \imag \, \frac{A^\dagger \boldsymbol{\alpha}^2 A}{A^\dagger \boldsymbol{\alpha}  A} - 2 r) \, (2 f_1 x + f_2) +2 f_1 \Big] \Big) \, e^{\theta} \, . 
\eez
We note that $e^\theta$ drops out in the expression for $u'$, so that the solution $u'$ is quasi-rational.

If $c_1=c_2=0$, then $u_1'$ and $u_2'$ reduce to the plane wave background solutions. 

If $c_1=0$ and $c_2 \neq 0$, so that $f_1=0$ and $f_2=c_2 \neq 0$, then $\chi_1$, $\chi_2$ and $\Xi$ only involve polynomials  (in the independent variables) that are linear. In this case we obtain the single Peregrine-type soliton solution
\bez
u_\mu'=\frac{1}{2}A_\mu\, e^{\imag \, \varphi_\mu}\Big(-1 \pm (-1)^\mu \, \imag \, \sqrt{3}-\frac{4\,\mathrm{Re}(\gamma)A^\dagger \boldsymbol{\alpha} A \, F_\mu}{D \, A_\mu} \Big) \qquad \quad \mu =1,2 \, ,
\eez
where
\bez
F_\mu&=&\varrho_3\,\Big(A_\mu \big(\imag \, \gamma^\ast \varrho_2^\ast\varrho_3\|A\|^2+\varrho_1^\ast\varrho_4 (A^\dagger \boldsymbol{\alpha} A)^2\big)+\tilde{A}_\mu \big(\imag \, \gamma^\ast\varrho_2^\ast\varrho_3\varrho_4^\ast\|A\|^2-\varrho_1^\ast(\imag \, \gamma^\ast|\varrho_3|^2\|A\|^2+(A^\dagger \boldsymbol{\alpha} A)^2)\big) \Big) \,\rho \\
&&+(A^\dagger \boldsymbol{\alpha} A)^2 (A_\mu+\varrho_4^\ast \tilde{A}_\mu)\,(\varrho_1\varrho_3\varrho_4^\ast-\varrho_2|\varrho_4|^2-\varrho_2)\,\rho^\ast+\varrho_3A_\mu\big(\imag \, \gamma^\ast|\varrho_2|^2\|A\|^2+|\varrho_1|^2(A^\dagger \boldsymbol{\alpha} A)^2 \big) \\
&&-\tilde{A}_\mu\big(\varrho_1^\ast\varrho_2(\imag \, \gamma^\ast|\varrho_3|^2\|A\|^2+(A^\dagger \boldsymbol{\alpha} A)^2 (1+|\varrho_4|^2))-\varrho_3\varrho_4^\ast(\imag \, \gamma^\ast|\varrho_2|^2\|A\|^2+|\varrho_1|^2(A^\dagger \boldsymbol{\alpha} A)^2) \big) \, , \\
D&=&\big(\imag \, \gamma^\ast |\varrho_3|^2\|A\|^2+(A^\dagger \boldsymbol{\alpha} A)^2 (1+|\varrho_4|^2) \big) \, \Big(\big((A^\dagger \boldsymbol{\alpha} A)^2 (1+|\varrho_4|^2)+\imag \, \gamma^\ast |\varrho_3|^2\|A\|^2\big)\,|\rho|^2\\
&&+\big((A^\dagger \boldsymbol{\alpha} A)^2\varrho_1^\ast\varrho_4+\imag \, \gamma^\ast \varrho_2^\ast\varrho_3 \|A\|^2 \big) \, \rho + \big( (A^\dagger \boldsymbol{\alpha} A)^2 \varrho_1 \varrho_4^\ast + \imag \, \gamma^\ast \varrho_2 \varrho_3^\ast \|A\|^2 \big) \, \rho^\ast \\
&&+(A^\dagger \boldsymbol{\alpha} A)^2 |\varrho_1|^2 + \imag \, \gamma^\ast |\varrho_2|^2 \|A\|^2 \Big) \, ,
\eez
with $\tilde{A}_\mu=(-1)^\mu  A_{3-\mu}^\ast$,  $\rho=c_2\,(x+(3\alpha_1 \alpha_2)^{-1}\tilde{\gamma}\,t)+c_3$,  and 
\begin{alignat*}{2}
	\varrho_1 &=\pm  \frac{4\sqrt{3}\,c_2\,(\nu_1^2-\nu_2^2)}{A_1A_2\nu_2^2\,(2\nu_1-\nu_2)^2} \, , 
	&\qquad 
	\varrho_2 &= \frac{c_2\,(\nu_1^2-4\nu_1\nu_2+\nu_2^2)}{(2\nu_1-\nu_2)^2} \, , \\
	\varrho_3 &= \mp\frac{(\nu_1+\nu_2)\,(\nu_1^2-4\nu_1\nu_2+\nu_2^2)}{6\nu_1(2\nu_1-\nu_2)} \, , 
	&\qquad 
	\varrho_4 &= \frac{2\sqrt{3}(\nu_1+\nu_2)^2(\nu_1^2-\nu_1\nu_2+\nu_2^2)}{9A_1A_2\nu_1^2\nu_2^2(2\nu_1-\nu_2)} \, .
\end{alignat*}
Here  $\nu_1=\sqrt{3}(\alpha_1-\alpha_2)\pm \imag \, (\alpha_1+\alpha_2),\, \nu_2=\sqrt{3}(\alpha_1-\alpha_2)\mp \imag \, (\alpha_1+\alpha_2)$. The $\pm$ corresponds to that in the denominator of (\ref{gamma_rw}). 
This solution has been explored in particular in \cite{CYSGB18}, where it has been shown that it exhibits surprisingly high amplitudes for particular parameter values, in particular as compared with the Manakov (two-component vector NLS) system (reached in the limit $\epsilon \to 0$, see Remark~\ref{rem:other_vFL} for this parameter).
Indeed, choosing for example $c_2=c_3 = 1$, we find that the absolute values of the components of $u'$ at the point
\bez
(x_0, t_0) =
\Big(-1\mp\frac{\sqrt{3}\,  \big((\alpha_1-\alpha_2)^4-\alpha_1^2 \alpha_2^2 \big)}{6\alpha_1\alpha_2(\alpha_1^3-\alpha_2^3)} \, , \mp\frac{\sqrt{3}\,\big(\alpha_1^2-\alpha_1 \alpha_2 +\alpha_2^2 \big)^3}{6\alpha_1 \alpha_2 (\alpha_1^3-\alpha_2^3)} \Big) 
\eez
are given by
\bez 
|u_1'| = |A_1| \, \frac{|\alpha_1-5 \alpha_2|}{|\alpha_1+\alpha_2|}\, ,\qquad |u_2'| = |A_2| \, \frac{|5\alpha_1- \alpha_2|}{|\alpha_1+\alpha_2|} \, .
\eez 
Note that (\ref{2vFL_triple_root_case_A_mu}) requires $\alpha_\mu>0$. The last equations show that the peak amplitude can approach nearly five times the background level, which is consistent with the results reported in \cite{CYSGB18}. 
An example of a rogue wave from this class of solutions is displayed in Fig.~\ref{fig:2vFL_rogue wave}. 

\begin{figure}[h]
	\begin{center}
		\includegraphics[scale=.5]{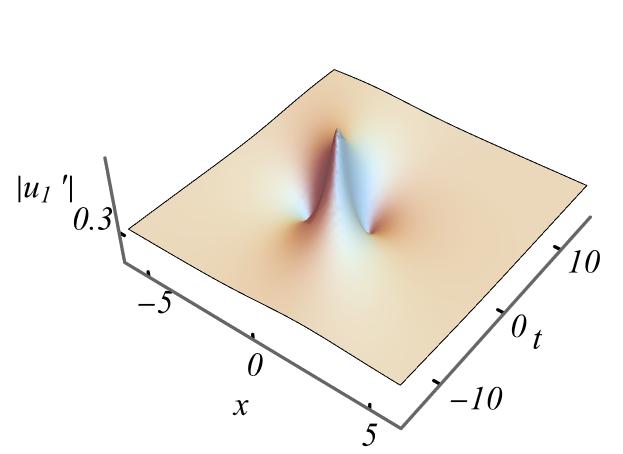} 
		\hspace{1cm}
		\includegraphics[scale=.5]{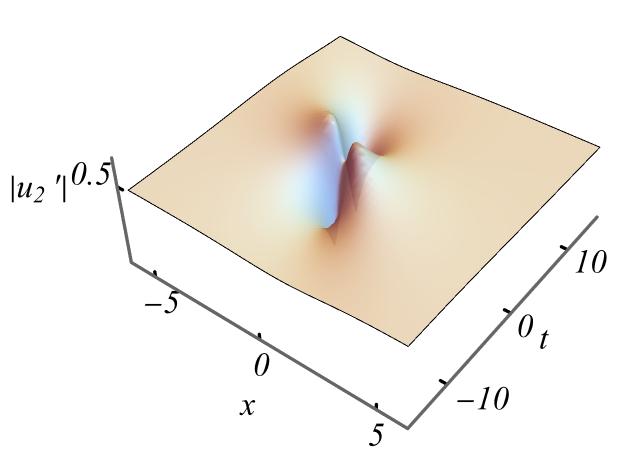} 
		\parbox{15cm}{
		\caption{Plots of the absolute values of the two components of a solution from the class in Example~\ref{ex:2vFL_Peregrine}, both showing rogue waves. Here we chose $\alpha_1=1$, $\alpha_2=2$, $A_1=1/3$, $A_2=\sqrt{2}/3$, $c_1=0$, $c_2=c_3=1$. 
				\label{fig:2vFL_rogue wave} } 
		}
	\end{center} 
\end{figure} 

If $c_1 \neq 0$, the asymptotically leading polynomial 
in $\chi_1$, $\chi_2$ and $\Xi$ is $x^2 + \tilde{\gamma}^2 (3 \alpha_1 \alpha_2)^{-2} t^2$. Excluding $\mathrm{Re}(\tilde{\gamma})=0$, which is $(\alpha_1+\alpha_2) \mathrm{Im}(\gamma) 
= -1$, this polynomial is different from zero for all $x,t$. 
Then we obtain
\bez
u'_\mu \sim \frac{1}{2} A_\mu \, e^{\imag \, \varphi_\mu} \Big( -1 \pm (-1)^\mu \, \imag \, \sqrt{3} \Big)
\qquad \quad \mu =1,2 \, ,
\eez
as $x \to \infty$ or $t \to \infty$. Here the $\pm$ corresponds to that in the denominator of (\ref{gamma_rw}). As a consequence, asymptotically we have $|u'_\mu| = |A_\mu|$. Hence 
the components of the solution $u'$ are both localized on the respective plane wave background, which is the characteristic property of rogue waves (interpreting $x$ and $t$ as ``space" and ``time" coordinates, or vice versa). 
The components $u_1'$ and $u_2'$ form pairs of (elementary) rogue waves. An example is shown in Fig.~\ref{fig:2vFL_rw_doublet}. 

\begin{figure}[h]
	\begin{center}
		\includegraphics[scale=.5]{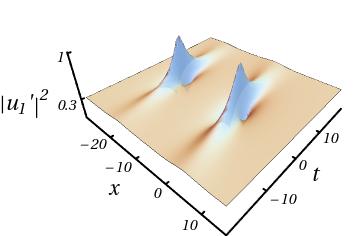} 
		\hspace{1cm}
		\includegraphics[scale=.5]{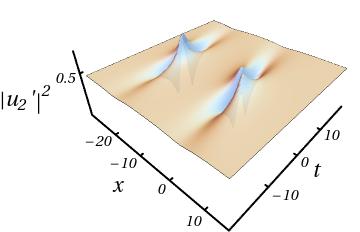} 
		\parbox{15cm}{
			\caption{Plots of the absolute squares of the two components of a solution from the class in Example~\ref{ex:2vFL_Peregrine}, exhibiting rogue wave pairs. Here we chose the same data as in Fig.~\ref{fig:2vFL_rogue wave}, but now $c_1=1$ and  $c_3=0$.  
				\label{fig:2vFL_rw_doublet} } 
		}
	\end{center} 
\end{figure} 
\hfill $\Box$
\end{example}

\begin{example}
\label{ex:triple_root_n=2}	
Let $n=2$ and 
\bez
	\Gamma= \left( \begin{array}{cc} 
		\gamma
		&  0  \\  1  & \gamma \end{array} \right)\,  , \qquad
		\chi_i = \left( \begin{array}{c} \chi_{i1} \\ \chi_{i2} 
		\end{array} \right) \qquad i=1,2 \, .
\eez 
Again, we assume that $r$ is a triple root of the cubic equation (\ref{2vFL_n=1_cubic_eq}), hence given by (\ref{2vFL_triple_root}). Then $\gamma$ and $A_\mu$, $\mu=1,2$, are given by (\ref{gamma_rw}) and (\ref{2vFL_triple_root_case_A_mu}), respectively. 
Now the linear system takes the form
\bez
 \tilde{\chi}_{11xxx} &=& 0\, ,  \\
 \tilde{\chi}_{12xxx} &=& -\frac{1}{2} \gamma^{-2} \big( \tilde{\chi}_{11xx} + 2 \gamma^{-1} (w_{r} \, \tilde{\chi}_{11x} + w \, \tilde{\chi}_{11}) \big) \, , \\
\alpha_1 \alpha_2 \, \tilde{\chi}_{11t} &=& \gamma \, \tilde{\chi}_{11xx} + \frac{1}{3} \tilde{\gamma} \, \tilde{\chi}_{11x} \, , \\
\alpha_1 \alpha_2 \, \tilde{\chi}_{12t} &=& \gamma \, \tilde{\chi}_{12xx} + \frac{1}{3} \tilde{\gamma} \, \tilde{\chi}_{12x} + \gamma^{-1} \alpha_1 \alpha_2 \, \tilde{\chi}_{11t} + B \, \tilde{\chi}_{11} \, ,
\eez
 where 
\bez
 B &=& \frac{1}{18} \big( 5 \gamma^{-2} -\imag \, (\alpha_1+\alpha_2) \gamma^{-1} + ( 4 \alpha_1^2 - \alpha_1 \alpha_2 + 4 \alpha_2^2 ) \big)\, , \\
w &=& 3\gamma^2 r^3-\big(1+3\,\imag \, \gamma (\alpha_1+\alpha_2) \big) \gamma \, r^2 + \big( 2 \imag \, A^\dagger \boldsymbol{\alpha}^2 A \gamma - 3 \alpha_1 \alpha_2 \gamma^2 - \frac{1}{4} \big) r \\
&& - \frac{1}{2} \imag \, A^\dagger \boldsymbol{\alpha}^2 A + (2A^\dagger \boldsymbol{\alpha} A - 1) \, \alpha_1 \alpha_2 \gamma + \frac{1}{4} \imag \, (\alpha_1 + \alpha_2) \, .
\eez
It is solved by
\bez
\chi_{11}&=&\big(f_1 \, x^2 + f_2 \,x+f_3 \big) \, e^{\theta} \, ,\\
\chi_{12}&=&\Big(f_4 \, x^2+f_5 \,x+ f_6 - \frac{(w f_3+w_r f_2 + \gamma f_1   )\, x^3}{6\, \gamma^3}- \frac{(   w f_2+2\, w_r f_1 ) \, x^4}{24 \,\gamma^3}   -\frac{w f_1\, x^5}{60\, \gamma^3}   \Big)\, e^{\theta}\, ,\\
\chi_2 &=& \frac{\|A\|^2}{(\alpha_2-\alpha_1) \, A_1 A_2} 
\Big( \imag \, (1 - A^\dagger \boldsymbol{\alpha} A) \, \Gamma^{-1} - \frac{\alpha_1 \alpha_2 \|A\|^2}{A^\dagger \boldsymbol{\alpha} A} I_n \Big)^{-1} \Big( \chi_{1xx} - \imag \, \frac{A^\dagger \boldsymbol{\alpha}^2 A}{A^\dagger \boldsymbol{\alpha} A} \, \chi_{1x} - \frac{1}{4} \Gamma^{-2} \chi_1  \\
&& + \frac{\imag}{2 \|A\|^2 \, A^\dagger \boldsymbol{\alpha} A} 
\big( (\alpha_2-\alpha_1)^2 |A_1|^2 |A_2|^2 
- (A^\dagger \boldsymbol{\alpha} A)^2 (1-2 A^\dagger \boldsymbol{\alpha} A) \big) \, \Gamma^{-1} \chi_1 \Big) 
 \, ,  
\eez
where $\theta$ is given by (\ref{2vFL_triple_root_case_theta})  and $f = (f_1,\ldots,f_6)^T$ satisfies $f_t = C f$ with the constant nilpotent matrix
\bez
C = \frac{1}{3 \, \alpha_1 \alpha_2} \,\left( \begin{array}{cccccc}
	0 & 0 & 0 & 0 & 0 & 0 \\
	2 \tilde{\gamma} & 0 & 0 & 0 & 0 & 0 \\
	6\gamma & \tilde{\gamma} & 0 & 0 & 0 & 0 \\
	3(B - \gamma^{-2}w_r) - \frac{1}{2}\gamma^{-2}\tilde{\gamma} & -\frac{1}{2}\gamma^{-3}(3w\gamma + \tilde{\gamma}w_r)  & -\frac{1}{2}w\gamma^{-3}\tilde{\gamma} & 0 & 0 & 0 \\
	\gamma^{-1}(2\tilde{\gamma} - 3) & 3(B - \gamma^{-2}w_r) & -3\gamma^{-2} w & 2 \tilde{\gamma} & 0 & 0 \\
	6 &\gamma^{-1}\tilde{\gamma}  & 3B  & 6\gamma & \tilde{\gamma} & 0
\end{array} \right)\, .
\eez
Consequently, $f_1, f_2,f_3$ are given by (\ref{f_1,f_2,f_3}) and, in addition, we obtain 
\bez
f_4 &=& -\frac{c_1w\tilde{\gamma}^3} {6 \gamma^3 (3 \alpha_1 \alpha_2)^3} \, t^3 -\frac{\tilde{\gamma}}{4\gamma^3(3 \alpha_1 \alpha_2)^2} \, \big(2 c_1 \, (w_r \tilde{\gamma} +6 w\gamma)+c_2 w \tilde{\gamma} \big) \, t^2 \\
&& + \frac{1}{6 \gamma^3 \alpha_1 \alpha_2}\Big(c_1 \gamma\,(6\,(B \gamma^2-w_r) -\tilde{\gamma}) -c_2 \, (3w \gamma +w_r \tilde{\gamma}) -c_3 w \tilde{\gamma} \Big) \, t + c_4 \, , \\
f_5 &=& -\frac{c_1 w \tilde{\gamma}^4}{12 \gamma^3 (3 \alpha_1 \alpha_2)^4} \, t^4 -  \frac{\tilde{\gamma}^2}{6 \gamma^3(3 \alpha_1 \alpha_2)^3} \big(2 c_1 \, (w_r \tilde{\gamma} +9 w \gamma) + c_2 w \tilde{\gamma} \big) \, t^3 \\
&& + \frac{1}{2 \gamma^3 (3 \alpha_1 \alpha_2)^2} \Big( c_1 \big( 12 \gamma \tilde{\gamma} \, (B \gamma^2 - w_r) - \gamma \, (18 w \gamma + \tilde{\gamma}^2) \big) - c_2 \tilde{\gamma} \, (w_r \tilde{\gamma} +6 w \gamma)-c_3 w \tilde{\gamma}^2 \Big) \, t^2 \\
&& + \frac{1}{3 \alpha_1 \alpha_2 \gamma^2} \Big( c_1 \gamma \, (2 \tilde{\gamma}-3) + 3 c_2 \, (B \gamma^2 -w_r) - 3 c_3 w + 2 c_4 \gamma^2 \tilde{\gamma} \Big) \,t + c_5 \, , \\
f_6 &=& -\frac{c_1 w \tilde{\gamma}^5}{60 \gamma^3(3 \alpha_1 \alpha_2)^5} \, t^5 - \frac{\tilde{\gamma}^3}{24 \gamma^3 (3 \alpha_1 \alpha_2)^4} \big(2 c_1 \, (w_r \tilde{\gamma} + 12 w \gamma) + c_2 w \tilde{\gamma} \big) \, t^4 \\
&& + \frac{\tilde{\gamma}}{6 \gamma^3 (3 \alpha_1 \alpha_2)^3} \Big( c_1 \gamma \, \big(18 \gamma \, (B \gamma \tilde{\gamma} - 3 w) - \tilde{\gamma} \, (18 w_r + \tilde{\gamma}) \big) - c_2 \tilde{\gamma} (9w \gamma + w_r \tilde{\gamma}) - c_3 w \tilde{\gamma}^2 \Big) \, t^3 \\
&& + \frac{1}{2 \gamma^2 (3 \alpha_1 \alpha_2)^2} \Big( 2 c_1 \gamma \, \big( 9 \, (2 B \gamma^2 - w_r) + \tilde{\gamma} \, (2 \tilde{\gamma}-3) \big) + c_2 \, \big(6 \tilde{\gamma} \, (B \gamma^2-w_r) - 9 w \gamma \big) - 6 c_3 w \tilde{\gamma} \\
&& + 2 c_4 \gamma^2 \tilde{\gamma}^2 \Big) \, t^2 + \frac{1}{3 \gamma \, \alpha_1 \alpha_2} \big( 6 c_1 \gamma + c_2 \tilde{\gamma} + 3 c_3 B \gamma + 6 c_4 \gamma^2 + c_5 \gamma \tilde{\gamma} \big) \, t + c_6 \, ,
\eez
with complex constants $c_1,\ldots,c_6$, and $\tilde{\gamma}$ given by (\ref{2vFL_triple_root_case_tgamma}). 
The solution of the Lyapunov equation is
\bez
 	\Omega = \left( \begin{array}{cc} \kappa^{-1} M_{11} &
 		- \kappa^{-2} M_{11} + \kappa^{-1} M_{12} \\[4pt] 
 		- \kappa^{-2} M_{11} + \kappa^{-1} M_{21} &
 		2 \kappa^{-3} M_{11} - \kappa^{-2} (M_{12}+M_{21})	
 		+ \kappa^{-1} M_{22} \end{array} \right) \, ,
\eez
with $\kappa = 2 \, \mathrm{Re}(\gamma)$ and
\bez
 	M &=& (M_{ij}) =  \frac{\imag}{(A^\dagger \boldsymbol{\alpha} A)^2} \Xi \, \Xi^\dagger \Gamma^\dagger + \chi \, \chi^\dagger \\ 	
 	&=&   \frac{\imag}{(A^\dagger \boldsymbol{\alpha} A)^2} 
        \left( \begin{array}{cc} \gamma^\ast  |\Xi_1|^2 & |\Xi_1|^2 + \gamma^\ast \Xi_1 \Xi_2^\ast \\[2pt] 
 	 \gamma^\ast \Xi_1^\ast \Xi_2 &
 	\Xi_1^\ast \Xi_2 + \gamma^\ast |\Xi_2|^2 \end{array} \right)
     +  \left( \begin{array}{cc}   |\chi_{11}|^2+|\chi_{21}|^2 &
 	\chi_{11}\chi_{12}^\ast+\chi_{21}\chi_{22}^\ast \\[2pt] 
 	\chi_{11}^\ast\chi_{12}+\chi_{21}^\ast\chi_{22}	 &
 	|\chi_{12}|^2+|\chi_{22}|^2
 	\end{array} \right) \, ,
\eez
where
\bez
 	\Xi = (\Xi_1,\Xi_2)^T = \|A\| \Big( \chi_{1x} - (\frac{1}{2} \Gamma^{-1} +\imag \, \frac{ A^\dagger \boldsymbol{\alpha} A}{\|A\|^2} I_2) \, \chi_1 
 	- \imag \, (\alpha_2-\alpha_1)  \frac{A_1 A_2}{\|A\|^2} \, \chi_2 \Big) \, .
\eez
Hence
\bez
 	\Omega^{-1} = \frac{1}{\det(\Omega)} 
 	\left( \begin{array}{cc} 2 \kappa^{-3} M_{11} - \kappa^{-2} (M_{12}+M_{21}) + \kappa^{-1} M_{22} &
 		\kappa^{-2} M_{11} - \kappa^{-1} M_{12} \\[4pt] 
 		\kappa^{-2} M_{11} - \kappa^{-1} M_{21} &
 		\kappa^{-1} M_{11} \end{array} \right) \, ,
\eez
with
\bez
 	\det(\Omega) = \kappa^{-4} M_{11}^2 + \kappa^{-2} \det(M) \, .
\eez
Inserting the expressions for $\chi_1,\chi_2$ and $\Omega^{-1}$ 
in (\ref{2vFL_sol}), determines a family of quasi-rational solutions of the two-component vector FL equation, which are localized on the plane wave background. 
These are thus rogue waves. They are of second order in the sense that here $\Gamma$ 
is a $2 \times 2$ Jordan block, which causes the appearance of higher (than in the $n=1$ case) order polynomials of the dependent variables. 
\hfill $\Box$
\end{example}

Using an $n \times n$ Jordan block for $\Gamma$, and choosing 
$\gamma$, $A_1$, $A_2$ and $r$ as in the preceding examples, we obtain $n$-th order rogue waves (also see \cite{CMH17} for the NLS case).  
Results about rogue wave solutions of the two-component vector FL equation on the general plane wave background have been obtained earlier in \cite{YZCBG19}. A deeper analysis of the structure of rogue waves (location of peaks) in the triple root case appeared recently in \cite{Ling+Su24}.
\vspace{.2cm}

A different class of rogue waves is obtained in the next example.

\begin{example}
	\label{ex:2vFL_double_root_anticonjugate_pair}
Let $\Gamma = \mathrm{diag}(\gamma,-\gamma^\ast)$ with $\mathrm{Re}(\gamma) \neq 0$, and $r$ given by (\ref{2vFL_triple_root}). Then we have 
$R = \mathrm{diag}(r,-r^\ast)$. Using Example~\ref{ex:2vFL_Peregrine}, the general solution of the linear system is then
\bez
 \chi_{1\kappa} &=& (f_{1\kappa} x^2 + f_{2\kappa} x + f_{3\kappa}) \, e^{\theta_\kappa} \, , \\
 \chi_{2\kappa} &=& \frac{\|A\|^2}{(\alpha_2-\alpha_1) \, A_1 A_2} 
 \Big( \imag \, (1 - A^\dagger \boldsymbol{\alpha} A) \, \gamma_\kappa^{-1} - \frac{\alpha_1 \alpha_2 \|A\|^2}{A^\dagger \boldsymbol{\alpha} A} \Big)^{-1} \Big( \Big[ \frac{\imag}{2 \|A\|^2 \, A^\dagger \boldsymbol{\alpha} A} 
 \big( (\alpha_2-\alpha_1)^2 |A_1|^2 |A_2|^2  \\
 && - (A^\dagger \boldsymbol{\alpha} A)^2 (1-2 A^\dagger \boldsymbol{\alpha} A) \big) \, \gamma_\kappa^{-1} + r_\kappa^2 - \imag \, \frac{A^\dagger \boldsymbol{\alpha}^2 A}{A^\dagger \boldsymbol{\alpha} A} r_\kappa - \frac{1}{4} \gamma_\kappa^{-2} \Big] \, (f_{1\kappa} \, x^2 + f_{2\kappa} \, x + f_{3\kappa}) \\
 && - \big( \imag \,  \frac{A^\dagger \boldsymbol{\alpha}^2 A}{A^\dagger \boldsymbol{\alpha} A}-2r_\kappa) \, (2 f_{1\kappa} \, x + f_{2\kappa}) + 2 f_{1\kappa} \Big) \, e^{\theta_\kappa} \, ,   
\eez
where
\bez
	\left( \begin{array}{c} f_{1\kappa} \\ f_{2\kappa} \\ f_{3\kappa} \end{array} \right)= \left( \begin{array}{ccc} 
		1 & 0 & 0 \\ 2 \tilde{\gamma}_\kappa t/(3\,\alpha_1 \alpha_2) & 1 & 0 \\
		\tilde{\gamma}_\kappa^2 t^2/(3\,\alpha_1 \alpha_2)^2 + 2 \gamma_\kappa t/(\alpha_1 \alpha_2) &
		\tilde{\gamma}_\kappa t/(3 \, \alpha_1 \alpha_2) &  1 \end{array} \right)  
	\left( \begin{array}{c} c_{1\kappa} \\c_{2\kappa} \\ c_{3\kappa}  \end{array} \right)  \, ,  
\eez
with $\kappa=1,2$, 	and  $\gamma_1=\gamma$, $\gamma_2=-\gamma^\ast$, $r_1=r$, $r_2=-r^\ast$, $\theta_1=\theta$, $\theta_2=-\theta^\ast$. Here $r$ is again given by (\ref{2vFL_triple_root}) and $\theta$ by (\ref{2vFL_triple_root_case_theta}).
The Lyapunov equation determines the diagonal entries of $\Omega$,
\bez
	\Omega_{11} = \frac{1}{2\mathrm{Re}(\gamma)}\Big(\frac{\imag \, \gamma^\ast |\Xi_1|^2}{(A^\dagger \boldsymbol{\alpha} A)^2}+|\chi_{11}|^2+|\chi_{21}|^2\Big)\, , \qquad
	\Omega_{22} = \frac{1}{2\mathrm{Re}(\gamma)}\Big( \frac{\imag \, \gamma |\Xi_2|^2}{(A^\dagger \boldsymbol{\alpha} A)^2}-|\chi_{12}|^2-|\chi_{22}|^2\Big)\, ,
\eez
where
\bez
	\Xi_\kappa &=& \|A\| \, \Big( \big(r_\kappa-(\frac{1}{2} \gamma_\kappa^{-1} + \imag \, \frac{A^\dagger \boldsymbol{\alpha} A}{\|A\|^2}) \big) \, (f_{1\kappa} x^2 + f_{2\kappa} x + f_{3\kappa}) + 2 f_{1\kappa} x + f_{2\kappa} \\
	&& - \Big( (1 - A^\dagger \boldsymbol{\alpha} A) \, \gamma_\kappa^{-1} + \imag \, \frac{\alpha_1 \alpha_2 \|A\|^2}{A^\dagger \boldsymbol{\alpha} A} \Big)^{-1} 
	\Big[ \Big( \frac{\imag}{2 \gamma \, \|A\|^2 \, A^\dagger \boldsymbol{\alpha} A} 
	\big( (\alpha_2-\alpha_1)^2 |A_1|^2 |A_2|^2 \\
	&& - (A^\dagger \boldsymbol{\alpha} A)^2 (1-2 A^\dagger \boldsymbol{\alpha} A) \big) 
	+ r_\kappa^2 - \imag \, \frac{A^\dagger \boldsymbol{\alpha}^2 A}{A^\dagger \boldsymbol{\alpha} A} r_\kappa - \frac{1}{4} \gamma_\kappa^{-2} \Big) (f_{1\kappa} x^2 + f_{2\kappa} x + f_{3\kappa}) \\
	&& - \big( \imag \, \frac{A^\dagger \boldsymbol{\alpha}^2 A}{A^\dagger \boldsymbol{\alpha}  A} - 2 r_\kappa) \, (2 f_{1\kappa} x + f_{2\kappa}) +2 f_{1\kappa} \Big] \Big) \, e^{\theta_\kappa} \, ,
\eez
and the constraints
\bez
	\Big( \frac{\imag}{(A^\dagger \boldsymbol{\alpha} A)^2} \Xi \Xi^\dagger \, \Gamma^\dagger
	+ \chi_1 \chi_1^\dagger + \chi_2 \chi_2^\dagger \Big)_{ij} = 0 \qquad \mbox{for} \quad i \neq j \, ,
\eez
which evaluate to
\bez
c_{31}c_{32}^\ast + \frac{\|A\|^4 D_1 D_2}{(\alpha_2-\alpha_1)^2 |A_1|^2 |A_2|^2} + \frac{\imag \, \gamma \, \|A\|^2 \, H_1 H_2}{(A^\dagger \boldsymbol{\alpha} A)^2} = 0 \, .
\eez
Here we introduced
\bez
&& D_1 = d(c_{11},c_{21},c_{31})\, , \qquad D_2 = d(c_{12}^\ast,-c_{22}^\ast,c_{32}^\ast)\, , \\
&& H_1 = h(c_{11},c_{21},c_{31})\, , \qquad H_2 = h(c_{12}^\ast,-c_{22}^\ast,c_{32}^\ast) \, ,
\eez
and
\bez
d(a,b,c) &=& \varrho_3 \, \big(2a-\varrho_1 b+\varrho_4 c\big) \, , \\
h(a,b,c) &=& 2 \, \imag \, \varrho_3 a - (\imag \, \varrho_1 \varrho_3+1) \, b + (\imag \, \varrho_3 \varrho_4 - \varrho_2) \, c \, ,
\eez
with constants
\bez
\varrho_1 &=& \frac{\imag \, A^\dagger \boldsymbol{\alpha}^2 A}{A^\dagger \boldsymbol{\alpha} A}-2r \, , \\
\varrho_2 &=& r-\frac{1}{2}\gamma^{-1}-\frac{\imag \,  A^\dagger \boldsymbol{\alpha} A}{\|A\|^2} \, , \\
\varrho_3 &=& \frac{\gamma A^\dagger \boldsymbol{\alpha} A}{\imag \, A^\dagger \boldsymbol{\alpha} A \, (1 - A^\dagger \boldsymbol{\alpha} A) - \gamma \alpha_1 \alpha_2 \|A\|^2} \, , \\
\varrho_4 &=& \frac{\imag}{2 \|A\|^2 \, A^\dagger \boldsymbol{\alpha} A} 
\big( (\alpha_2-\alpha_1)^2 |A_1|^2 |A_2|^2 - (A^\dagger \boldsymbol{\alpha} A)^2 (1-2 A^\dagger \boldsymbol{\alpha} A) \big) \, \gamma^{-1} + r^2 - \frac{\imag \, A^\dagger \boldsymbol{\alpha}^2 A}{A^\dagger \boldsymbol{\alpha} A} r - \frac{1}{4} \gamma^{-2} \, .
\eez
Integrating the differential equations for $\Omega$, we find
\bez
\Omega_{12} &=& -\frac{\imag \, c_{11}c_{12}^\ast \tilde{\gamma}^4 \|A\|^2}{405 \, \alpha_1^4 \alpha_2^4 (A^\dagger \boldsymbol{\alpha} A)^2}(\imag \, \varrho_3 \varrho_4 - \varrho_2  ) \, \big( \gamma \, (\imag \, \varrho_3 \varrho_4 - \varrho_2) - A^\dagger \boldsymbol{\alpha}  A \big) \, t^5 \\
&& + \frac{\imag \, \tilde{\gamma}^2 \|A\|}{108 \, \alpha_1^3 \alpha_2^3 (A^\dagger \boldsymbol{\alpha} A)^2} 
\Big((\imag \, \varrho_3 \varrho_4 -\varrho_2) \, \big(c_{12}^\ast (2c_{11} \|A\| A^\dagger \boldsymbol{\alpha} A \, (\tilde{\gamma} \, x - 3 \gamma) + 3 \gamma  \alpha_1 \alpha_2 \varrho_6) \\
&& + c_{11} (c_{22}^\ast  \tilde{\gamma} \|A\| A^\dagger \boldsymbol{\alpha} A 
- 3 \gamma \alpha_1 \alpha_2 \varrho_5) \big) - 3 c_{12}^\ast \alpha_1 \alpha_2 \varrho_6  A^\dagger \boldsymbol{\alpha} A \Big) \, t^4 \\
&& + \frac{\imag}{27 \, \alpha_1^2 \alpha_2^2 (A^\dagger \boldsymbol{\alpha} A)^2} \Big( c_{11} c_{12}^\ast \tilde{\gamma}^2 \|A\|^2 A^\dagger \boldsymbol{\alpha} A 
(\imag \, \varrho_3 \varrho_4 - \varrho_2) \, x^2 \\
&& + \tilde{\gamma} \|A\| A^\dagger \boldsymbol{\alpha} A \, \big(   c_{11} c_{22}^\ast \tilde{\gamma} \|A\| \, (\imag \, \varrho_3 \varrho_4 - \varrho_2) - 6   c_{12}^\ast \alpha_1 \alpha_2 \varrho_6 \big) \, x \\
&& - \tilde{\gamma}^2 \|A\|^2 (\imag \, \varrho_3 \varrho_4 - \varrho_2) \, \big( c_{12}^\ast \gamma  \varrho_7 - c_{11} (c_{32}^\ast A^\dagger \boldsymbol{\alpha} A - \gamma \, \varrho_8) \big) +  c_{12}^\ast \|A\| A^\dagger \boldsymbol{\alpha}  A \, (\tilde{\gamma}^2 \varrho_7 \|A\| \\
&& +18 \gamma \alpha_1 \alpha_2 \varrho_6) - 3 \alpha_1 \alpha_2 \varrho_6 (c_{22}^\ast \tilde{\gamma} \|A\| A^\dagger \boldsymbol{\alpha}  A -3 \gamma \alpha_1 \alpha_2 \varrho_5) \Big) \, t^3 \\
&& - \frac{\imag \, \|A\|}{6 \alpha_1 \alpha_2 (A^\dagger \boldsymbol{\alpha} A)^2} 
\Big(3 c_{12}^\ast \alpha_1 \alpha_2 \varrho_6 A^\dagger \boldsymbol{\alpha} A \, x^2 - A^\dagger \boldsymbol{\alpha}  A \, (2 c_{12}^\ast \tilde{\gamma} \varrho_7 \|A\| - 3 c_{22}^\ast  \alpha_1 \alpha_2 \varrho_6 ) \, x \\
&& + \varrho_7 \|A\| A^\dagger \boldsymbol{\alpha} A \, (6c_{12}^\ast \gamma - c_{22}^\ast \tilde{\gamma}) + 3 \alpha_1 \alpha_2 \big(c_{32}^\ast \varrho_6 A^\dagger \boldsymbol{\alpha}  A + \gamma (\varrho_5 \varrho_7 - \varrho_6 \varrho_8) \big) \Big) \, t^2 \\	
&& + \frac{\imag \, \varrho_7 \|A\|^2}{(A^\dagger \boldsymbol{\alpha} A)^2} \big( A^\dagger \boldsymbol{\alpha} A \, (c_{12}^\ast x^2 + c_{22}^\ast x + c_{32}^\ast) - \gamma \varrho_8 \big) \, t \\	
&& + \frac{c_{11} c_{12}^\ast}{10 \gamma^2(\alpha_2-\alpha_1)^2 |A_1|^2 |A_2|^2 A^\dagger \boldsymbol{\alpha} A} 
\Big( (\alpha_2-\alpha_1)^2 |A_1|^2 |A_2|^2 (2 \gamma r-1) \, (\varrho_3 \varrho_4 \|A\|^2 + A^\dagger \boldsymbol{\alpha} A) \\
&& + 2 \imag \, \gamma \alpha_1 \alpha_2 \varrho_3^2 \varrho_4^2\|A\|^6 \Big) \, x^5  \\
&& +\frac{1}{8\gamma^2(\alpha_2-\alpha_1)^2|A_1|^2|A_2|^2A^\dagger \boldsymbol{\alpha} A} \Big( (\alpha_2-\alpha_1)^2 |A_1|^2 |A_2|^2 \big(2c_{11} c_{12}^\ast (|A|^2 \varrho_3 (2\gamma \, (r \varrho_1 + \varrho_4) - \varrho_1) \\
&& + 2 \gamma A^\dagger \boldsymbol{\alpha} A) + (2\gamma r-1) \, (c_{11} c_{22}^\ast + c_{12}^\ast c_{21}) \, (\varrho_3 \varrho_4 \|A\|^2 + A^\dagger \boldsymbol{\alpha} A) \big) \\
&& + 2 \imag \, \gamma \alpha_1 \alpha_2 \varrho_3^2 \varrho_4^2 \|A\|^6  (c_{11} c_{22}^\ast + c_{12}^\ast c_{21}) \Big) \, x^4 \\
&& + \frac{1}{6 \gamma^2 (\alpha_2-\alpha_1)^2 |A_1|^2 |A_2|^2 (A^\dagger \boldsymbol{\alpha}  A)^2} \Big( \\ && (\alpha_2-\alpha_1)^2|A_1|^2 |A_2|^2  \big(c_{12}^\ast \varrho_{10} + c_{22}^\ast ((A^\dagger \boldsymbol{\alpha} A)^2 (c_{21} (2\gamma r-1) + 2 c_{11} \gamma ) \\
&& + \varrho_3 \|A\|^2 A^\dagger \boldsymbol{\alpha} A \, (2 \gamma r-1) \, (c_{11} \varrho_1 + c_{21} \varrho_4)  + 4 c_{11} \gamma \varrho_3 \varrho_4 \|A\|^2 A^\dagger \boldsymbol{\alpha} A - 2 \imag \,  c_{11} \gamma^2 \|A\|^2 (\imag \, \varrho_3 \varrho_4 - \varrho_2)^2) \\
&&+ c_{11} c_{32}^\ast A^\dagger \boldsymbol{\alpha} A \, (2\gamma r-1) (\varrho_3 \varrho_4 \|A\|^2 + A^\dagger \boldsymbol{\alpha}  A) \big) \\
&& + 2 \gamma \varrho_3^2 \|A\|^4 A^\dagger \boldsymbol{\alpha}  A \, \big( \imag \, \alpha_1 \alpha_2 \varrho_9 \|A\|^2 - c_{11} \varrho_4 A^\dagger \boldsymbol{\alpha}  A \, (2c_{12}^\ast \varrho_1 + c_{22}^\ast \varrho_4) \big) \Big) \, x^3 \\
&& - \frac{1}{4\gamma^2(\alpha_2-\alpha_1)^2 |A_1|^2 |A_2|^2 (A^\dagger \boldsymbol{\alpha}  A)^2} \Big((\alpha_2-\alpha_1)^2 |A_1|^2 |A_2|^2\big(4\imag \,  c_{11}\gamma^2H_2\|A\|^2(\imag \, \varrho_3\varrho_4-\varrho_2) \\
&&-D_2\|A\|^2 A^\dagger \boldsymbol{\alpha} A \, (2\gamma\,(2c_{11}+c_{21}r)-c_{21})-2\gamma\varrho_3 \|A\|^2 A^\dagger \boldsymbol{\alpha} A \, (c_{21}+c_{31}r)(2c_{12}^\ast \varrho_1 + c_{22}^\ast \varrho_4) \\
&&+c_{31}\varrho_3 \|A\|^2 A^\dagger \boldsymbol{\alpha} A \, (2c_{12}^\ast \varrho_1+c_{22}^\ast\varrho_4)\big)-\varrho^{(1)}+4c_{11}\gamma \varrho_3\varrho_4D_2\|A\|^4(A^\dagger \boldsymbol{\alpha} A)^2 \Big) \, x^2 \\
&& + \frac{1}{2\gamma^2 (\alpha_2-\alpha_1)^2 |A_1|^2 |A_2|^2 (A^\dagger \boldsymbol{\alpha} A)^2} \Big( (\alpha_2-\alpha_1)^2|A_1|^2|A_2|^2\big(c_{31}c_{32}^\ast(A^\dagger \boldsymbol{\alpha} A)^2 (2\gamma r-1) \\
&&-\|A\|^2 (2\gamma^2 H_2(\imag \,  c_{21}(\imag \, \varrho_3 \varrho_4-\varrho_2)	-2c_{11}(\imag-\varrho_1\varrho_3))-2\gamma D_2 A^\dagger \boldsymbol{\alpha} A \, (c_{21}+c_{31}r)+c_{31}D_2 A^\dagger \boldsymbol{\alpha} A)\big) \\
&&+2\gamma D_2\|A\|^4A^\dagger \boldsymbol{\alpha} A \, \big(\imag \, \alpha_1 \alpha_2 D_1 \|A\|^2 + \varrho_3 A^\dagger \boldsymbol{\alpha} A \, (2c_{11}\varrho_1-c_{21} \varrho_4)\big)\Big) \, x + \tilde{c}_{12} \, , \\
\Omega_{21} &=& \frac{\imag \, c_{11}^\ast c_{12} \tilde{\gamma}^{\ast 4} \|A\|^2}{405 \, \alpha_1^4 \alpha_2^4 (A^\dagger \boldsymbol{\alpha} A)^2}(\imag \, \varrho_3^\ast \varrho_4^\ast + \varrho_2^\ast) \, \big( \gamma^\ast (\imag \, \varrho_3^\ast \varrho_4^\ast + \varrho_2^\ast) + A^\dagger \boldsymbol{\alpha} A \big) \, t^5 \\
&& + \frac{\imag \, \tilde{\gamma}^{\ast 2} \|A\|}{108 \, \alpha_1^3 \alpha_2^3 (A^\dagger \boldsymbol{\alpha} A)^2} \Big( (\imag \, \varrho_3^\ast \varrho_4^\ast + \varrho_2^\ast) \, \big(c_{11}^\ast (2c_{12} \|A\| \, A^\dagger \boldsymbol{\alpha} A \, (\tilde{\gamma}^\ast x + 3\gamma^\ast)-3 \gamma^\ast \alpha_1 \alpha_2 \varrho_5^\ast) \\
&& +c_{12} (c_{21}^\ast \tilde{\gamma}^\ast \|A\| \, A^\dagger \boldsymbol{\alpha} A 
+ 3 \gamma^\ast \alpha_1 \alpha_2 \varrho_6^\ast) \big) - 3c_{11}^\ast \alpha_1 \alpha_2 \varrho_5^\ast A^\dagger \boldsymbol{\alpha} A \Big) \, t^4 \\
&& + \frac{\imag}{27 \alpha_1^2 \alpha_2^2 (A^\dagger \boldsymbol{\alpha} A)^2}
\Big( c_{11}^\ast c_{12} \tilde{\gamma}^{\ast 2} \|A\|^2 A^\dagger \boldsymbol{\alpha} A (\imag \, \varrho_3^\ast \varrho_4^\ast + \varrho_2^\ast) \, x^2 \\
&& + \tilde{\gamma}^\ast \|A\| \, A^\dagger \boldsymbol{\alpha} A \, \big(   c_{12} c_{21}^\ast \tilde{\gamma}^\ast \|A\| \, (\imag \, \varrho_3^\ast \varrho_4^\ast + \varrho_2^\ast)
- 6 c_{11}^\ast \alpha_1 \alpha_2 \varrho_5^\ast \big) \, x \\
&& - \tilde{\gamma}^{\ast 2} \|A\|^2 (\imag \, \varrho_3^\ast \varrho_4^\ast + \varrho_2^\ast) \, \big(c_{11}^\ast \gamma^\ast  \varrho_8^\ast - c_{12} \, (c_{31}^\ast A^\dagger \boldsymbol{\alpha} A - \gamma^\ast \varrho_7^\ast) \big)
- c_{11}^\ast \|A\| \, A^\dagger \boldsymbol{\alpha} A \, (\tilde{\gamma}^{\ast 2} \varrho_8^\ast \|A\| \\
&& + 18 \gamma^\ast \alpha_1 \alpha_2 \varrho_5^\ast) -3 \alpha_1 \alpha_2 \varrho_5^\ast \, (c_{21}^\ast \tilde{\gamma}^\ast \|A\| \, A^\dagger \boldsymbol{\alpha} A + 3 \gamma^\ast \alpha_1 \alpha_2 \varrho_6^\ast) \Big) \, t^3 \\	
&& -\frac{\imag \, \|A\|}{6 \alpha_1 \alpha_2 (A^\dagger \boldsymbol{\alpha} A)^2}  \Big(3 c_{11}^\ast \alpha_1 \alpha_2 \varrho_5^\ast A^\dagger \boldsymbol{\alpha} A \, x^2 + A^\dagger \boldsymbol{\alpha} A \, (2c_{11}^\ast \tilde{\gamma}^\ast \varrho_8^\ast \|A\| + 3c_{21}^\ast \alpha_1 \alpha_2 \varrho_5^\ast ) \, x \\
&& + \varrho_8^\ast \|A\| \, A^\dagger \boldsymbol{\alpha} A \, (6c_{11}^\ast \gamma^\ast + c_{21}^\ast \tilde{\gamma}^\ast) + 3 \alpha_1 \alpha_2 \, \big( c_{31}^\ast \varrho_5^\ast A^\dagger \boldsymbol{\alpha} A - \gamma^\ast (\varrho_5^\ast \varrho_7^\ast -\varrho_6^\ast \varrho_8^\ast) \big) \Big) \, t^2 \\	
&& - \frac{\imag \, \varrho_8^\ast \|A\|^2}{(A^\dagger \boldsymbol{\alpha} A)^2}\big( A^\dagger \boldsymbol{\alpha} A \, (c_{11}^\ast x^2 + c_{21}^\ast x + c_{31}^\ast)-\gamma^\ast \varrho_7^\ast \big) \, t \\
&& + \frac{c_{11}^\ast  c_{12}}{10 \gamma^{\ast 2} (\alpha_2-\alpha_1)^2 |A_1|^2 |A_2|^2 A^\dagger \boldsymbol{\alpha} A} \Big( (\alpha_2-\alpha_1)^2 |A_1|^2 |A_2|^2 (2\gamma^\ast r^\ast-1) \, (\varrho_3^\ast \varrho_4^\ast \|A\|^2 + A^\dagger \boldsymbol{\alpha}  A) \\
&& - 2 \imag \, \gamma^\ast \alpha_1 \alpha_2 \varrho_3^{\ast 2} \varrho_4^{\ast 2} \|A\|^6 \Big) \, x^5 \\
&& - \frac{1}{8 \gamma^{\ast 2} (\alpha_2-\alpha_1)^2 |A_1|^2 |A_2|^2 A^\dagger \boldsymbol{\alpha} A} \Big( (\alpha_2-\alpha_1)^2 |A_1|^2 |A_2|^2 \big(2c_{11}^\ast c_{12} (\|A\|^2 \varrho_3^\ast \, (2\gamma^\ast (r^\ast \varrho_1^\ast + \varrho_4^\ast) - \varrho_1^\ast) \\
&& +2\gamma^\ast A^\dagger \boldsymbol{\alpha} A) - (2\gamma^\ast r^\ast-1) \, (\varrho_3^\ast \varrho_4^\ast \|A\|^2 + A^\dagger \boldsymbol{\alpha} A) \, (c_{11}^\ast c_{22} + c_{12} c_{21}^\ast) \big) \\ 
&& + 2\imag \, \gamma^\ast \alpha_1 \alpha_2 \varrho_3^{\ast 2} \varrho_4^{\ast 2} \|A\|^6  (c_{11}^\ast c_{22} + c_{12} c_{21}^\ast) \Big) \, x^4 \\
&& + \frac{1}{6\gamma^{\ast 2}(\alpha_2-\alpha_1)^2 |A_1|^2 |A_2|^2 (A^\dagger \boldsymbol{\alpha} A)^2} \Big( \\
&& (\alpha_2-\alpha_1)^2 |A_1|^2 |A_2|^2 \, \big( c_{12} \,( \varrho_{10}^\ast -3 c_{21}^\ast A^\dagger \boldsymbol{\alpha} A \, (\varrho_3^\ast \|A\|^2  (2\gamma^\ast (r^\ast \varrho_1^\ast + \varrho_4^\ast) - \varrho_1^\ast) + 2\gamma^\ast A^\dagger \boldsymbol{\alpha} A)) \\
&& + c_{22} ((A^\dagger \boldsymbol{\alpha} A)^2  (c_{21}^\ast (2\gamma^\ast r^\ast-1) -4 c_{11}^\ast \gamma^\ast) + 2\imag \,  c_{11}^\ast \gamma^{\ast 2} \|A\|^2 (\imag \, \varrho_3^\ast \varrho_4^\ast + \varrho_2^\ast)^2 \\
&& - \varrho_3^\ast \|A\|^2 A^\dagger \boldsymbol{\alpha} A \, (2\gamma^\ast r^\ast-1)  (2c_{11}^\ast \varrho_1^\ast - c_{21}^\ast  \varrho_4^\ast) - 2\gamma^\ast c_{11}^\ast  \varrho_3^\ast \varrho_4^\ast \|A\|^2 A^\dagger \boldsymbol{\alpha} A) \\
&& + c_{11}^\ast  c_{32} A^\dagger \boldsymbol{\alpha} A \, (2\gamma^\ast r^\ast-1) ( \varrho_3^\ast \varrho_4^\ast \|A\|^2 + A^\dagger \boldsymbol{\alpha} A) \big) \\	
&& -2 \gamma^\ast \varrho_3^{\ast 2} \|A\|^4 A^\dagger \boldsymbol{\alpha}  A \, \big( \imag \, \alpha_1 \alpha_2 \varrho_9^\ast \|A\|^2 + c_{11}^\ast \varrho_4^\ast A^\dagger \boldsymbol{\alpha} A \, (2c_{12} \varrho_1^\ast + c_{22} \varrho_4^\ast) \big) \Big) \, x^3 \\ 
&& + \frac{1}{4\gamma^{\ast 2}(\alpha_2-\alpha_1)^2 |A_1|^2 |A_2|^2 (A^\dagger \boldsymbol{\alpha} A)^2} \Big((\alpha_2-\alpha_1)^2 |A_1|^2 |A_2|^2 \big(4\imag \, c_{12}\gamma^{\ast 2} H_1^\ast \|A\|^2 (\imag \, \varrho_3^\ast \varrho_4^\ast + \varrho_2^\ast ) \\
&& -D_1^\ast \|A\|^2 A^\dagger \boldsymbol{\alpha} A \, (2\gamma^\ast\,(2c_{12}-c_{22}r^\ast)+c_{22})+2\gamma^\ast \varrho_3^\ast \|A\|^2 A^\dagger \boldsymbol{\alpha} A \, (c_{22}-c_{32}r^\ast)(2c_{11}^\ast \varrho_1^\ast-c_{21}^\ast\varrho_4^\ast) \\
&&+c_{32}\varrho_3^\ast \|A\|^2  A^\dagger \boldsymbol{\alpha} A \, (2c_{11}^\ast \varrho_1^\ast -c_{21}^\ast \varrho_4^\ast) \big) - \varrho^{(2)^\ast} + 4 c_{12}\gamma^\ast \varrho_3^\ast \varrho_4^\ast D_1^\ast \|A\|^4 (A^\dagger \boldsymbol{\alpha} A)^2 \Big) \, x^2 \\
&& + \frac{1}{2\gamma^{\ast 2} (\alpha_2-\alpha_1)^2 |A_1|^2 |A_2|^2 (A^\dagger \boldsymbol{\alpha} A)^2} \Big( (\alpha_2-\alpha_1)^2|A_1|^2|A_2|^2\big(c_{31}^\ast c_{32}(A^\dagger \boldsymbol{\alpha} A)^2 (2\gamma^\ast r^\ast-1) \\
&& + \|A\|^2 (2\gamma^{\ast 2} H_1^\ast(\imag \, c_{22}(\imag \, \varrho_3^\ast \varrho_4^\ast+\varrho_2^\ast)-2c_{12}(\imag+\varrho_1^\ast\varrho_3^\ast))-2\gamma ^\ast D_1^\ast A^\dagger \boldsymbol{\alpha} A \, (c_{22}-c_{32}r^\ast)-c_{32} D_1^\ast A^\dagger \boldsymbol{\alpha} A) \big) \\
&&-2\gamma^\ast  D_1^\ast \|A\|^4 A^\dagger \boldsymbol{\alpha} A \, \big( \imag \, \alpha_1 \alpha_2 D_2^\ast \|A\|^2 -\varrho_3^\ast A^\dagger \boldsymbol{\alpha} A \, (2c_{12} \varrho_1^\ast + c_{22} \varrho_4^\ast) \big) \Big) \, x + \tilde{c}_{21} \, , 
\eez
with further constants
\bez
\varrho_5 &=& \frac{\|A\|}{3 \alpha_1 \alpha_2} \Big( \big(\imag \, \varrho_3 \varrho_4 - \varrho_2 \big) \big(2c_{12}^\ast (\tilde{\gamma} \,x -3 \gamma) + c_{22}^\ast \tilde{\gamma} \big) + 2c_{12}^\ast \tilde{\gamma} \, (\imag \, \varrho_1 \varrho_3 + 1) \Big) \, , \\
\varrho_6 &=& -\frac{\|A\|}{3 \alpha_1 \alpha_2} \Big(2c_{11} \tilde{\gamma} \, \big( ( \imag \, \varrho_3 \varrho_4-\varrho_2) \,x -\imag \, \varrho_1 \varrho_3 -1 \big) + (6 c_{11} \gamma + c_{21} \tilde{\gamma}) \, \big(\imag \, \varrho_3 \varrho_4 - \varrho_2 \big) \Big) \, , \\
\varrho_7 &=& c_{11} \big(\imag \, \varrho_3 \varrho_4 - \varrho_2 \big) \, x^2 + \big( c_{21} (\imag \, \varrho_3 \varrho_4 - \varrho_2) - 2c_{11} (\imag \, \varrho_1 \varrho_3 + 1) \big) \, x + \imag \, D_1  -c_{31} \varrho_2 - c_{21} \, , \\
\varrho_8 &=& c_{12}^\ast (\imag \, \varrho_3 \varrho_4 - \varrho_2) \, x^2 + \big( c_{22}^\ast (\imag \, \varrho_3 \varrho_4 - \varrho_2) + 2 c_{12}^\ast (\imag \, \varrho_1 \varrho_3+1) \big) \, x + \imag \, D_2  -c_{32}^\ast \varrho_2 + c_{22}^\ast \, , \\
\varrho_9 &=& \varrho_1 \varrho_4 (c_{12}^\ast  c_{21} - c_{11} c_{22}^\ast) + \varrho_4^2 (c_{11} c_{32}^\ast + c_{12}^\ast c_{31} + c_{21} c_{22}^\ast) - 4c_{11} c_{12}^\ast (\varrho_1^2-\varrho_4) \, , \\
\varrho_{10} &=& (A^\dagger \boldsymbol{\alpha} A)^2 (c_{31} (2\gamma r-1) + 2 \gamma  c_{21}) + \varrho_3 \|A\|^2 A^\dagger \boldsymbol{\alpha} A \, ((2\gamma r-1) \, (2c_{11} + 2c_{21} \varrho_1 + c_{31} \varrho_4) \\ 
&& + 2 \gamma \, (4c_{11} \varrho_1 + c_{21}\varrho_4)) - 4 \imag \,  c_{11} \gamma^2 \|A\|^2 (\imag \, \varrho_3 \varrho_4 - \varrho_2) \, (1+\imag \, \varrho_1 \varrho_3) \, , \\
\varrho^{(j)} &=& -2\gamma^2 \, (\alpha_2-\alpha_1)^2 |A_1|^2 |A_2|^2 \|A\|^2  \big(\imag \, \varrho_3^2 (2 c_{11} \varrho_1-c_{21} \varrho_4) \, (2 c_{12}^\ast \varrho_1 + c_{22}^\ast \varrho_4) \\
&& + 2 c_{22}^\ast \varrho_3 (c_{21} \varrho_2 \varrho_4 - c_{11} (\varrho_1 \varrho_2 - \varrho_4)) + 2 c_{12}^\ast \varrho_3 (4 c_{11} \varrho_1 + c_{21}(\varrho_1 \varrho_2-\varrho_4)) \\
&&-\imag \, (2c_{11} + c_{21} \varrho_2) \, (2 c_{12}^\ast - c_{22}^\ast \varrho_2) \big) + 2 \gamma \varrho_3^2 \|A\|^4 (A^\dagger \boldsymbol{\alpha} A)^2 (2 c_{11} \varrho_1 - c_{21} \varrho_4) \, (2 c_{12}^\ast \varrho_1 + c_{22}^\ast \varrho_4) \\
&& -(-1)^j \big( (\alpha_2-\alpha_1)^2 |A_1|^2 |A_2|^2 (A^\dagger \boldsymbol{\alpha} A)^2 (2 \gamma r-1) (c_{21} c_{32}^\ast + c_{22}^\ast c_{31}) \\
&& + 2 \imag \, \gamma \alpha_1 \alpha_2 \varrho_3 \|A\|^6  A^\dagger \boldsymbol{\alpha} A \, (D_1 (2c_{12}^\ast \varrho_1 + c_{22}^\ast \varrho_4) - D_2 (2 c_{11} \varrho_1 - c_{21} \varrho_4)) \big) \, ,
\eez
with complex constants $\tilde{c}_{12}$ and $\tilde{c}_{21}$. The remaining condition (\ref{2vFLeq_pw}) results in
\bez
c_{31} c_{32}^\ast + \frac{\|A\|^4\,D_1D_2}{(\alpha_2-\alpha_1)^2|A_1|^2|A_2|^2} = \gamma \, (\tilde{c}_{12} - \tilde{c}_{21}^\ast) \, .
\eez 
Now (\ref{2vFL_sol}) determines a rather complicated class of quasi-rational solutions of the two-component vector FL equation. We emphasize that, in this case, a rational dependence on the two independent variables $x$ and $t$ enters in two different ways. First, because of the ``degeneracy" of the linear system. Secondly, via integration of (\ref{vFLeq_Om_deriv}), since also the Lypunov equation is ``degenerate", i.e., the spectrum condition is violated. 

We note that $\Omega_{11} \, e^{-2 \mathrm{Re}(\theta)}$ and 
$\Omega_{22} \, e^{2 \mathrm{Re}(\theta)}$ are polynomials of 
degree four in $x$ and $t$, whereas $\Omega_{12}$ and $\Omega_{21}$ are of degree five. It follows that $u_\mu' \sim A_\mu \, e^{\imag \, \varphi_\mu}$ as $x \to \infty$ or $t \to \infty$.	
The resulting solutions generically model rogue waves. 
\hfill $\Box$
\end{example}

\subsubsection{The double root case}
\label{subsec:2vFL_double_root}
First, we consider the case $n=1$ and then the case $n=2$ with $\Gamma$ a $2 \times 2$ Jordan block.
 
\begin{example}
	\label{ex:double_root_n=1}
Let $n=1$. If the cubic characteristic equation (\ref{2vFL_n=1_cubic_eq}) admits a single root $r_1$ and a double root $r_2$, then it is equivalent to
\be
      r_1 = \imag \, (\alpha_1+\alpha_2) + \frac{1}{2 \gamma} - 2 \, r_2  \label{2vFL_double_root_case_r_1}
\ee
and 
\be
|A_1|^2 = \frac{(2 \, r_2 - \imag \, \alpha_2) \, (2 \gamma \, r_2 -1 - 2 \imag \, \alpha_1 \gamma)^2}{4 \alpha_1^2 \gamma \, (\alpha_2 - \alpha_1)} \, , \quad
|A_2|^2 = - \frac{(2   r_2 - \imag \, \alpha_1) \, (2 \gamma \, r_2 -1 - 2 \imag \, \alpha_2 \gamma)^2}{4 \alpha_2^2 \gamma \, (\alpha_2 - \alpha_1)} \, .   \label{2vFL_double_root_case_A_mu}
\ee
The right hand sides of the last two equations have to be real and positive, which imposes constraints on the parameters, which we were unable to cast into a convenient and simple form. The linear system is now solved by
\bez
\chi_1 &=& c_1\, e^{\theta_1} + \Big( c_2 \, x - \frac{c_2\,\gamma}{\alpha_1 \alpha_2} \big( \imag \, (\alpha_1+\alpha_2)-2r_2 \big) \, t + c_3 \Big) \, e^{\theta_2} \, , \\
\chi_2 &=& \frac{\|A\|^2}{(\alpha_2-\alpha_1) \, A_1 A_2} 
\Big( \imag \, (1 - A^\dagger \boldsymbol{\alpha} A) \, \gamma^{-1} - \frac{\alpha_1 \alpha_2 \|A\|^2}{A^\dagger \boldsymbol{\alpha} A}  \Big)^{-1} \Big( \chi_{1xx} - \imag \, \frac{A^\dagger \boldsymbol{\alpha}^2 A}{A^\dagger \boldsymbol{\alpha} A} \, \chi_{1x} - \frac{1}{4} \gamma^{-2} \chi_1  \\
&& + \frac{\imag}{2 \|A\|^2 \, A^\dagger \boldsymbol{\alpha} A} 
\big( (\alpha_2-\alpha_1)^2 |A_1|^2 |A_2|^2 
- (A^\dagger \boldsymbol{\alpha} A)^2 (1-2 A^\dagger \boldsymbol{\alpha} A) \big) \, \gamma^{-1} \chi_1 \Big) \, ,   
\eez
with complex constants $c_1,c_2,c_3$ and 
\bez
\theta_\kappa &=& r_\kappa x+\frac{1}{\alpha_1 \alpha_2 \gamma} \Big( \big( r_\kappa^2-\imag \, (\alpha_1+\alpha_2) \,r_\kappa - \frac{1}{2} \alpha_1 \alpha_2 \big) \, \gamma^2 + \imag \, (\alpha_1+\alpha_2) \, (A^\dagger \boldsymbol{\alpha} A-\frac{1}{2}) \, \gamma-\frac{1}{4} \Big) \, t \, .
\eez
\textbf{(1)} Assuming $\mathrm{Re}(\gamma) \neq 0$, so that the spectrum condition for the Lyapunov equation is satisfied, the resulting solution  of the two-component vector FL equation is
\bez
u_1' = e^{\imag \, \varphi_1}\big(A_1+\frac{ (A_1 \chi_1^\ast - A_2^\ast \chi_2^\ast) \,   \Xi}{\|A\| \, A^\dagger \boldsymbol{\alpha} A \, \Omega} \big) \, , \qquad
u_2' = e^{\imag \, \varphi_2}\big(A_2+\frac{ (A_2 \chi_1^\ast + A_1^\ast \chi_2^\ast)\, \Xi}{\|A\| \, A^\dagger \boldsymbol{\alpha} A \, \Omega} \big) \, ,
\eez
where
\bez
\Omega &=& \frac{\imag \, \gamma^\ast |\Xi|^2 + (A^\dagger \boldsymbol{\alpha} A)^2(|\chi_1|^2+|\chi_2|^2)}{2 \mathrm{Re}(\gamma) \,|(A^\dagger \boldsymbol{\alpha} A)^2} \, , \\
\Xi &=& \|A\| \, \Big( \chi_{1x} - \big( \frac{1}{2} \gamma^{-1} +\imag \, \frac{ A^\dagger \boldsymbol{\alpha} A}{\|A\|^2} \big) \, \chi_1 
- \imag \, (\alpha_2-\alpha_1)  \frac{A_1 A_2}{\|A\|^2} \, \chi_2 \Big) \, .
\eez
If $c_1=0$, this solution is quasi-rational. If also $c_2 \neq 0$ and $r_2 \neq \frac{1}{2} \imag \, (\alpha_1+\alpha_2)$, the latter is localized in space and time on the plane wave background. It thus represents a rogue wave. 
   \\
\textbf{(2)}	If $\gamma$ is imaginary, i.e., $\gamma=-\imag \, k$ with $k \in \mathbb{R} \setminus \{0\}$, we have
\bez
	|A_1|^2 &=& \frac{(2 \imag \, r_2 + \alpha_2) \, (2 \imag \,  k \, r_2 + 1 + 2 \alpha_1 k)^2}{4 \alpha_1^2 \, k \, (\alpha_2 - \alpha_1)} \, , \qquad 	
	|A_2|^2 = - \frac{(2 \imag \,  r_2 + \alpha_1) \, (2 \imag \, k \, r_2 + 1 + 2 \alpha_2 k)^2}{4 \alpha_2^2 \, k \, (\alpha_2-\alpha_1)} \, .
\eez
The imaginary parts vanish if and only if $\mathrm{Re}(r_2) =0$. Hence  
\bez 
    r_2 = \imag \, s \, , \qquad s \in \mathbb{R} \, ,
\eez           
and thus
\bez
	|A_1|^2 &=& \frac{(\alpha_2-2 s) \, (2 k s -1 - 2 \alpha_1 k)^2}{4 \alpha_1^2 \, k \, (\alpha_2-\alpha_1)} \,  , \qquad 	
	|A_2|^2 = - \frac{(\alpha_1-2 s) \, (2 k s - 1 - 2 \alpha_2 k)^2}{4 \alpha_2^2 \, k \, (\alpha_2-\alpha_1)} \, ,
\eez
which requires
\bez
	\frac{\alpha_2 - 2 s}{k \, (\alpha_2-\alpha_1)} > 0 \, , \qquad \frac{\alpha_1 - 2 s}{k \, (\alpha_2-\alpha_1)} < 0 \, ,
\eez
and implies $(\alpha_1 - 2 s)(\alpha_2 - 2 s) < 0$.
The Lyapunov equation can only be satisfied if either $c_1=0$ or $c_2=c_3=0$.

\noindent
\textbf{(i)} $c_1=0$. The Lyapunov equation reduces to an equation, which  does not dependent on the independent variables,    
\bez
	(A^\dagger \boldsymbol{\alpha} A)^2 \, \Big( |c_3|^2 
	+ \frac{\|A\|^4 \, |\tilde{c}|^2}{|A_1|^2 \, |A_2|^2 \, (\alpha_2-\alpha_1)^2}
	 \Big) = k \, \|A\|^2 \, \Big| \imag \, c_2 - c_3 \, s + c_3 \, \big( \frac{1}{2 k} + \frac{A^\dagger \boldsymbol{\alpha} A}{\|A\|^2} \big) - \tilde{c} \, \Big|^2 \, ,
\eez
where
\bez
	\tilde{c} &=& \Big( (1 - A^\dagger \boldsymbol{\alpha} A) \, k^{-1} + \frac{\alpha_1 \alpha_2 \|A\|^2}{A^\dagger \boldsymbol{\alpha} A}  \Big)^{-1} \Big( 2 \imag \, c_2 \, s - c_3 \, s^2 - \imag \, \frac{A^\dagger \boldsymbol{\alpha}^2 A}{A^\dagger \boldsymbol{\alpha} A}(c_2 + \imag \, c_3 \, s) + \frac{1}{4} k^{-2} c_3 \\
	&& - \frac{1}{2 A^\dagger \boldsymbol{\alpha} A \, \|A\|^2} \big( (\alpha_2-\alpha_1)^2 |A_1|^2 |A_2|^2 - (A^\dagger \boldsymbol{\alpha} A)^2 (1 - 2 A^\dagger \boldsymbol{\alpha} A) \big) \, k^{-1} c_3 \Big) \, .
\eez	
Solving the differential equations for $\Omega$, we obtain
\bez
\Omega &=& \frac{\|A\|^2}{ k (A^\dagger \boldsymbol{\alpha} A)^2 \, \varrho_2^2} \Big( \frac{ \imag \, k \,\tilde{s} \, |c_2|^2}{3 \alpha_1 \alpha_2} \varrho_3 \, (\imag \, A^\dagger \boldsymbol{\alpha} A \, \varrho_2 - \varrho_3) \, (3 x t^2 - \frac{3 \imag \, \alpha_1 \alpha_2}{k \tilde{s}} x^2 t + \frac{\imag \, k \tilde{s}}{\alpha_1 \alpha_2} t^3)\\
&& - \imag \, \big( k |c_2|^2 A^\dagger \boldsymbol{\alpha} A \, \varrho_2 ( \varrho_2 - k \varrho_1) + 2 \imag \, \mathrm{Re}(c_2 c_3^\ast) \, \varrho_3 (\imag \, A^\dagger \boldsymbol{\alpha} A \, \varrho_2 - \varrho_3) \big) \, ( x t + \frac{\imag \, k \tilde{s}}{2 \alpha_1 \alpha_2} t^2) \\
&& - \big( c_3^\ast \, (\imag \, A^\dagger \boldsymbol{\alpha} A \, \varrho_2 - \varrho_3) - k c_2^\ast \, (\varrho_2 - k \varrho_1) \big) \, (k c_2\, (\varrho_2 - k \varrho_1) - c_3 \varrho_3) \, t \\
&& - \frac{|c_2|^2 \varrho_5}{6 k A^\dagger\boldsymbol{\alpha} A \, \|A\|^2} \, x^3 - \frac{1}{4 k A^\dagger \boldsymbol{\alpha} A \, \|A\|^2} \big(2 \mathrm{Re}(c_2 c_3^\ast) \, \varrho_5 - \imag \, k |c_2|^2 (A^\dagger \boldsymbol{\alpha} A)^2 \varrho_2 \varrho_6 \big) \, x^2 \\
&& + \frac{1}{k A^\dagger \boldsymbol{\alpha} A \, \|A\|^2 \varrho_4^2} \big( \|A\|^4 |c_2|^2 \varrho_1^2 \varrho_7 + k c_2 (A^\dagger \boldsymbol{\alpha} A)^2 \varrho_2 \varrho_4 \, (A^\dagger \boldsymbol{\alpha} A \, \varrho_2 - k \varrho_4) \, (\imag \, c_3^\ast \varrho_4 + c_2^\ast \|A\|^2 \varrho_1) \\
&& + \imag \, k |c_2|^2 A^\dagger \boldsymbol{\alpha} A \, \|A\|^2 \varrho_4^2 \, (k \varrho_1 - \varrho_2) \, \big( \imag \, k s \, (k \varrho_1-\varrho_2) + \varrho_3 \big) + \varrho_8 \big) \, x + c \Big) \, ,
\eez
with
\bez
\tilde{s} &=& \imag \, (\alpha_1 + \alpha_2 - 2 s) \, , \\
\varrho_1 &=& - 2 \, s\, A^\dagger \boldsymbol{\alpha} A  + A^\dagger \boldsymbol{\alpha}^2 A \, , \\
\varrho_2 &=& A^\dagger \boldsymbol{\alpha} A \, (A^\dagger \boldsymbol{\alpha} A - 1) - k \, \alpha_1 \alpha_2 \|A\|^2 \, , \\
\varrho_3 &=&-\imag \, k^2 \, \big(   A^\dagger \boldsymbol{\alpha} A \, s^2 -  (\alpha_1 \alpha_2 \, \|A\|^2 + A^\dagger \boldsymbol{\alpha}^2 A)\,s + \alpha_1 \alpha_2 \, A^\dagger \boldsymbol{\alpha} A \big)  \\
&& - \imag \, k \, \big( s \, (A^\dagger \boldsymbol{\alpha} A)^2 + \frac{1}{2} \alpha_1 \alpha_2 \|A\|^2 + \frac{1}{2} \varrho_1 \big)  + \frac{1}{4} \imag \, A^\dagger \boldsymbol{\alpha} A \, (2 A^\dagger \boldsymbol{\alpha} A - 1) \, , \\
\varrho_4 &=& \frac{1}{2 k^2} \Big( \|A\|^2 \, \big( 2 \imag \, (\imag \, k s \, \varrho_2 + \varrho_3) + \varrho_2 \big) + 2 k A^\dagger \boldsymbol{\alpha} A \, \varrho_2 \Big) \, , \\
\varrho_5 &=& (A^\dagger \boldsymbol{\alpha} A)^2 \varrho_2 \, (1 
- 2 k s) \, (k \varrho_4 - A^\dagger \boldsymbol{\alpha} A \, \varrho_2) - 2 \alpha_1 \alpha_2 \|A\|^2 (k (A^\dagger \boldsymbol{\alpha} A)^2 \varrho_2^2 + \|A\|^2 \varrho_3^2) \, , \\
\varrho_6 &=& \|A\|^2 \varrho_1 \, (1 - 2 k s) - 2 (k \varrho_4 - A^\dagger \boldsymbol{\alpha} A \, \varrho_2) \, , \\
\varrho_7 &=& ( \alpha_1 \alpha_2 \|A\|^2 - s \, A^\dagger \boldsymbol{\alpha} A ) \, (k \, (A^\dagger \boldsymbol{\alpha} A)^2 \varrho_2^2 + \|A\|^2 \varrho_3^2) \, , \\
\varrho_8 &=& |c_2|^2 A^\dagger \boldsymbol{\alpha} A \, \|A\|^4 \varrho_1 \varrho_3^2 \varrho_4 + \frac{1}{2} \varrho_4 \, \Big( 4 \imag \, \mathrm{Im}(c_2 c_3^\ast) \, \big( k \, s \, A^\dagger \boldsymbol{\alpha} A \, \|A\|^2 \varrho_3 \varrho_4 \, (k \varrho_1 - \varrho_2) + \imag \, \|A\|^2 \varrho_1 \varrho_7 \big) \\
&& + \imag \, k \, c_2^\ast c_3 \, (A^\dagger \boldsymbol{\alpha} A)^2 \|A\|^2 \varrho_1 \varrho_2 \varrho_4 \, (1 - 2 k s) - |c_3|^2 \varrho_4 \varrho_5 \Big) \, ,
\eez
and a constant $c$ for which (\ref{2vFLeq_pw}) requires
\bez
\mathrm{Im}(c) = \frac{1}{2 k \varrho_4^2} \Big( \varrho_1 \, \big( k (A^\dagger \boldsymbol{\alpha} A)^2 \varrho_2^2 + \|A\|^2 \varrho_3^2 \big) \, \big( 2 \mathrm{Im}(c_2 c_3^\ast) \varrho_4 - |c_2|^2 \|A\|^2 \varrho_1 \big) - |c_3|^2 \varrho_3^2 \varrho_4^2 \Big) \, .
\eez
An example from the resulting class of solutions of the two-component vector FL equation is presented in Fig.~\ref{fig:2vFL_double_root_imag_gamma}.
\begin{figure}[h]
	\begin{center}
	\includegraphics[scale=.4]{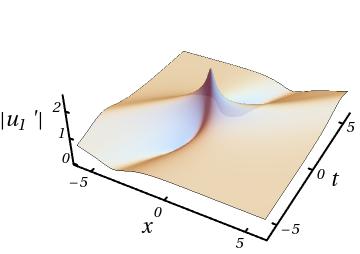}
		\hspace{1cm}
	\includegraphics[scale=.4]{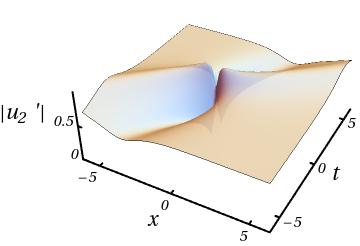}
		\parbox{15cm}{
	\caption{Plots of the absolute values of the two components of a ``double pole  solution" from the class in Example~\ref{ex:double_root_n=1} (2). 
	Here we set $\alpha_1 = 1, \alpha_2 =2, k=1, s = 3/4, c_1=0,c_2=1+\imag$ and $c_3=1+(353/1365) \, \imag$, so that the constraint from the Lyapunov equation is satisfied. 
				\label{fig:2vFL_double_root_imag_gamma} } 
		}
	\end{center}
\end{figure}

\noindent
\textbf{(ii)} $c_2=c_3=0$. The Lyapunov equation requires $c_1=0$, so that $u_1'$ and $u_2'$ reduce to the plane wave background solutions. 
\hfill $\Box$
\end{example}

\begin{remark}
	\label{rem:double_root_Re(r_2)=0}
Let $c_1=0$ and $r_2 = \frac{1}{2} \imag \, (\alpha_1+\alpha_2)$ in Example~\ref{ex:double_root_n=1}. 
Then $\chi_1$, $\chi_2$, and thus also $u'$, show a rational dependence only on $x$. 
This particular choice for $r_2$ actually follows from (\ref{2vFL_double_root_case_A_mu})  
by just assuming $\mathrm{Re}(r_2) = 0$ and $\mathrm{Re}(\gamma) \neq 0$.
Rogue waves thus require $\mathrm{Re}(r_2) \neq 0$.

In the case $\mathrm{Re}(r_2) = 0$,  (\ref{2vFL_double_root_case_A_mu}) further amounts to 
$\mathrm{Im}(\gamma) = (1-2 \alpha_2 |A_2|^2)/(\alpha_2-\alpha_1)$,
$\mathrm{Re}(\gamma)^2 = 4 \left(-\alpha_1 |A_1|^2 + \alpha_2  |A_2|^2 \left( \alpha_2 |A_2|^2 \left( 3 - 2 \alpha_2  |A_2|^2 \right) - 2 \right) + 1 \right) / ((\alpha_2-\alpha_1)^2 (2 \alpha_2 |A_2|^2 -1))$,
and $\alpha_1 |A_1|^2 + \alpha_2 |A_2|^2 = 1$. 
\hfill $\Box$
\end{remark}

Next we present a second-order version of the above class of solutions in Example~\ref{ex:double_root_n=1}. 

\begin{example}
	\label{ex:double_root_n=2Jordan}
Let $n=2$  and 
\bez
\Gamma= \left( \begin{array}{cc} 
	\gamma
	&  0  \\  1  & \gamma \end{array} \right)\,  , \qquad
\chi_i = \left( \begin{array}{c} \chi_{i1} \\ \chi_{i2} 
\end{array} \right) \qquad i=1,2 \, .
\eez 
Again, we assume that $r_1$ is a single root and $r_2$ is a double root of the cubic equation (\ref{2vFL_n=1_cubic_eq}). Then $r_1$ and $A_\mu$, $\mu=1,2$, are given by (\ref{2vFL_double_root_case_r_1}) and (\ref{2vFL_double_root_case_A_mu}), respectively. Now the linear system is solved by
\bez
\chi_{11} &=& c_1 e^{\theta_1} + (f_1 x + f_2) \, e^{\theta_2} \, , \\
\chi_{12} &=& \Big( \frac{c_1 (v_1-v_2)}{2v_2 \gamma^2} x + \frac{c_1 v_3}{2 \alpha_1 \alpha_2 v_2 \gamma^2} t + c_6 \Big) \, e^{\theta_1} \\
&& + \Big( \frac{f_1 v_5}{24 v_4 \gamma^3} x^3 + \frac{2 f_1 \gamma \, (v_4 v_6 +v_5) +f_2 v_4 v_5}{8 v_4^2 \gamma^3} x^2 + f_3 x + f_4 \Big) \, e^{\theta_2} \, , \\
\chi_2 &=& \frac{\|A\|^2}{(\alpha_2-\alpha_1) \, A_1 A_2} 
\Big( \imag \, (1 - A^\dagger \boldsymbol{\alpha} A) \, \Gamma^{-1} - \frac{\alpha_1 \alpha_2 \|A\|^2}{A^\dagger \boldsymbol{\alpha} A} I_n \Big)^{-1} \Big( \chi_{1xx} - \imag \, \frac{A^\dagger \boldsymbol{\alpha}^2 A}{A^\dagger \boldsymbol{\alpha} A} \, \chi_{1x} - \frac{1}{4} \Gamma^{-2} \chi_1  \\
&& + \frac{\imag}{2 \|A\|^2 \, A^\dagger \boldsymbol{\alpha} A} 
\big( (\alpha_2-\alpha_1)^2 |A_1|^2 |A_2|^2 
- (A^\dagger \boldsymbol{\alpha} A)^2 (1-2 A^\dagger \boldsymbol{\alpha} A) \big) \, \Gamma^{-1} \chi_1 \Big)  \, ,   
\eez
where
\bez
v_1 &=& 8 \big( 8 r_2^3 - 8 \imag \, (\alpha_1+\alpha_2) \, r_2^2 - 2 \big( \alpha_1^2 + \alpha_2^2 + 3 \alpha_1 \alpha_2 \big) \,r_2 + \imag \, \alpha_1 \alpha_2 \, (\alpha_1 + \alpha_2) \big) \, \gamma^3 \, , \\
v_2 &=& 4 \big( 9 r_2^2 - 6 \imag \, (\alpha_1+\alpha_2) \,  r_2 - (\alpha_1 + \alpha_2)^2 \big) \, \gamma^2 -4 (3r_2-\imag \, (\alpha_1+\alpha_2)) \, \gamma + 1 \, , \\
v_3 &=& \big( 8 r_2^2 - 4 \imag \, (\alpha_1+\alpha_2) \,r_2 - \alpha_1 \alpha_2 \big) \, v_2 \gamma^2 - \big( 4 r_2 - \imag \, (\alpha_1+\alpha_2) \big) \, v_1 \gamma + v_1 \, , \\
v_4 &=& 2(\imag \, (\alpha_1+\alpha_2)-3 r_2) \, \gamma+1) \, , \\
v_5 &=&  (2 r_2 \gamma-1) \big( 4 (r_2^2-\imag \, (\alpha_1+\alpha_2) \, r_2 - \alpha_1 \alpha_2) \, \gamma^2 + 2 (\imag \, (\alpha_1+\alpha_2) -2 r_2) \, \gamma + 1 \big) \, , \\
v_6 &=& 4 \big( 3 r_2^2 - 2 \imag \, (\alpha_1+\alpha_2) \, r_2 - \alpha_1 \alpha_2 \big) \gamma^2 + 1 \, ,
\eez
and  $f = (f_1,\ldots,f_4)^T$ satisfies $f_t = C f$ with the constant nilpotent matrix
\bez
C = -\frac{1}{ \alpha_1\alpha_2} \,\left( \begin{array}{cccc}
	0 & 0 & 0 & 0  \\
	\tilde{s} \, \gamma & 0 & 0 & 0 \\
	\frac{1}{4} \big(B_1 v_4 + 2 (v_4 v_6 + v_5) \, \tilde{s} \gamma\big) v_4^{-2} \gamma^{-2} & \frac{1}{4}  v_4^{-1} v_5 \tilde{s} \gamma^{-2} & 0 & 0  \\
	\frac{1}{2} B_2 v_4^{-2} \gamma^{-1} & \frac{1}{4} B_1 v_4^{-1} \gamma^{-2} & \tilde{s} \, \gamma & 0 
\end{array} \right)\, ,
\eez
where
\bez
\tilde{s} &=& \imag \, (\alpha_1+\alpha_2)- 2 r_2 \, , \\
B_1 &=& -2 (2 r_2^2 - 2 \imag \, (\alpha_1+\alpha_2) \, r_2 - \alpha_1 \alpha_2) \, v_4 \gamma^2 - (v_4+v_5) \, , \\
B_2 &=& 2 v_4^2 \tilde{s} \gamma - v_4 v_6 - v_5 \, .
\eez
Consequently, we have
\bez
f_1 &=& c_2 \, ,\\
f_2 &=& -\frac{c_2 \tilde{s} \gamma }{\alpha_1 \alpha_2} \, t + c_3 \, , \\
f_3 &=& \frac{1}{8 (\alpha_1 \alpha_2 v_4 \gamma)^2} \Big(c_2 v_4 v_5 \tilde{s}^2 \gamma \, t^2 - 2 \alpha_1 \alpha_2 \big( ( (2 c_2 v_6 \gamma + c_3 v_5) \,v_4 + 2 c_2 v_5 \gamma ) \, \tilde{s} + c_2 B_1 v_4 \big) \, t \\
&& + 8 c_4 (\alpha_1 \alpha_2 v_4 \gamma)^2 \Big) \, , \\
f_4 &=& \frac{1}{24 (\alpha_1 \alpha_2)^3 (v_4 \gamma)^2} \Big(-c_2 v_4 v_5 \tilde{s}^3 \gamma^2 \, t^3 + 3 \big( (2 (v_4 v_6 + v_5) \, c_2 \gamma +c_3 v_4 v_5 ) \, \tilde{s} + 2 c_2 B_1 v_4 \big) \, \alpha_1 \alpha_2 \tilde{s} \gamma \, t^2 \\
&& -6 \alpha_1^2 \alpha_2^2 \, (2 c_2 B_2 \gamma + c_3 B_1 v_4 + 4 c_4 v_4^2 \tilde{s} \gamma^3) \, t + 24 c_5 (\alpha_1 \alpha_2)^3 (v_4 \gamma)^2 \Big) \, ,
\eez
with complex constants $c_1,\ldots,c_6$. Assuming  $\mathrm{Re}(\gamma) \neq 0$, the expressions for $\Omega$ are obtained via those in Example~\ref{ex:triple_root_n=2}. Inserting the results in (\ref{2vFL_sol}), then yields a class of solutions of the two-component vector FL equation. We disregard the case 
$\mathrm{Re}(\gamma) = 0$ because of its complexity.
\hfill $\Box$
\end{example}

\begin{example}
Choosing the special data of Remark~\ref{rem:double_root_Re(r_2)=0} in the class of solutions in Example~\ref{ex:double_root_n=2Jordan}, a corresponding example of a solution of the two-component vector FL equation is displayed 
in Fig.~\ref{fig:double_root_Re(r_2)=0_n=2}. 
\begin{figure}[h]
	\begin{center}
		\includegraphics[scale=.4]{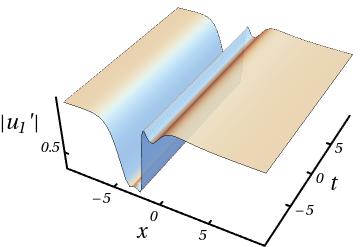} 
		\hspace{1cm}
		\includegraphics[scale=.4]{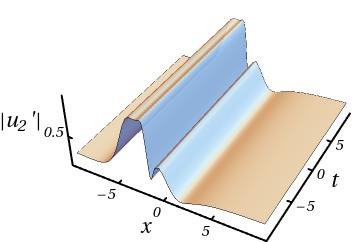} 
		\parbox{15cm}{
			\caption{Plots of the absolute values of the two components of a solution from the class in Example~\ref{ex:double_root_n=2Jordan}, with data according to the special case in Remark~\ref{rem:double_root_Re(r_2)=0}. We chose $\alpha_2=2$, $r_2 = 3 \imag/2$ (so that $\alpha_1=1$), $A_1=\sqrt{3}/2$, $A_2=\sqrt{2}/4$, $\mathrm{Im}(\gamma)=\sqrt{3}/2$, $c_1=c_6=0$ and $c_2 = \cdots = c_5 =1$.  
				\label{fig:double_root_Re(r_2)=0_n=2} } 
		}
	\end{center} 
\end{figure} 	
\hfill $\Box$	
\end{example}

Also see \cite{YZCBG19} for rogue wave solutions in the double 
root case.

\section{Conclusions}
\label{sec:concl}
We derived a vectorial binary Darboux transformation (bDT) for matrix versions of the first negative flow of the Kaup-Newell (KN) hierarchy. 
For a reduction of the latter system to matrix generalizations (see (\ref{mFLeq})) of the Fokas-Lenells (FL) equation, we found a corresponding reduction of the bDT to a vectorial Darboux transformation of the latter and generated various classes of matrix soliton solutions of (\ref{mFLeq}) from the trivial seed solution $u=0$.  

Furthermore, we explored in detail the vectorial Darboux transformation with a plane wave seed solution in the case of the two-component vector FL equation, recovering in a different and more systematic way classes of solutions obtained previously in \cite{LFZ18,YZCBG19,Ling+Su24}. But we also presented solutions that have not been obtained before to the best of our knowledge. 
This concerns in particular solutions corresponding to cases, where the matrix $\Gamma$ in the Lyapunov equation does not satisfy the spectrum condition, so that the Lyapunov equation does not have a unique solution. 

An interesting feature, known to show up in vector NLS equations, is ``beating solitons" (see \cite{Park+Shin00,HCHE11,Gela+Rask23,LCA24}, and references cited there). We presented corresponding solutions 
for the two-component vector FL equation in Examples~\ref{ex:special_pw_breather} and \ref{ex:special_pw_beating_solitons}, apparently for the first time. As in the cited work 
on beating soliton solutions of vector NLS equations, here the relative wave number has been restricted to zero. 
Similar results have been obtained in \cite{Park+Shin00} for the Manakov system (i.e., the two-component vector NLS equation), also see \cite{Gela+Rask23,LCA24}.  

The stage is also set to generate solutions of vector FL equations with more components (see \cite{Wang+Chen19} for some results for the three-component vector FL equation), and of matrix versions beyond the vector case, from a non-vanishing seed solution. 
In the matrix (not vector) case, we then have to deal with a linear system having left as well as right matrix coefficients, so that a vectorization map is required in order to solve it. We shall return to this in a separate work.

We would like to highlight again the nonlinear superposition 
property of a \emph{vectorial} Darboux transformation, for solutions generated from the same seed solution, 
as formulated in Remarks \ref{rem:FL_superposition} and \ref{rem:vFL_superposition} for the specific integrable equations studied in this work (also see Appendix~\ref{app:special_superpos} for a simple example). 
In our opinion, this is a major advantage over the classical iterative Darboux transformation method.

Finally, we have found further classes of integrable equations that fit into the bidifferential calculus framework.

For generating the plots in this work and verifying examples of solutions, we used \textit{Mathematica} \cite{Mathematica}.

\makeatletter
\newcommand\appendix@section[1]{
	\refstepcounter{section}
	\orig@section*{Appendix \@Alph\c@section: #1}
	\addcontentsline{toc}{section}{Appendix \@Alph\c@section: #1}
}
\let\orig@section\section
\g@addto@macro\appendix{\let\section\appendix@section}
\makeatother

\begin{appendix}
\numberwithin{equation}{section}	

\section{Some facts about the Lyapunov equation}
\label{app:Lyapunov}
We consider the Lyapunov equation 
\be
   \Gamma \, \Omega + \Omega \, \Gamma^\dagger = K \, , 
\label{Lyap_K}
\ee	
with $n \times n$ matrices $\Gamma, \Omega, K$. Here we think 
of $\Gamma$ and $K$ as given matrices and look for a solution 
$\Omega$.

\paragraph{(1)} Let $\Gamma = \mathrm{diag}(\gamma_1,\ldots,\gamma_n)$, with $\gamma_i \neq - \gamma_j^\ast$, 
$i,j=1,\ldots,n$. Then (\ref{Lyap_K}) reads
\bez
    (\gamma_i-\gamma_j^\ast) \, \Omega_{ij} = K_{ij} \qquad i,j=1,\ldots,n \, .
\eez
If $\Gamma$ satisfies the spectrum condition, the Lyapunov equation possesses a unique solution, 
given by the Cauchy-like matrix
\bez
   \Omega = \left( \frac{K_{ij}}{\gamma_i + \gamma_j^\ast} \right) \, .
\eez
If, for some values of $i$ and $j$, $\gamma_i = \gamma_j^\ast$, then a solution can only exist if the 
corresponding component $K_{ij}$ of $K$ vanishes, and $\Omega_{ij}$ is left undetermined. 

\paragraph{(2)} 
Let $\Gamma$ be the lower triangular $n \times n$ Jordan block
\be
\Gamma = \left( \begin{array}{ccccc} \gamma  & 0      & \cdots & \cdots & 0      \\ 
	1  & \gamma  & \ddots  & \ddots & 0      \\ 
	0  & \ddots & \ddots & \ddots & \vdots \\
	\vdots & \ddots & \ddots & \ddots & 0      \\
	0  & \cdots &  0  &   1    & \gamma 
\end{array} \right)  \, ,   \label{Gamma_Jordan_block} 
\ee
with $\gamma \in \mathbb{C}$. 
If the spectrum condition holds, we have the following result \cite{CMH17}. 

\begin{proposition}
Let $\mathrm{Re}(\gamma) \neq 0$. For $1 \leq k < n$, let $K_{(k)}$ and $\Gamma_{(k)}$ be the $k \times k$ upper left part of $K$, respectively $\Gamma$, 
and $\Omega_{(k)}$ the solution of 
\bez
	\Gamma_{(k)} \, \Omega_{(k)} + \Omega_{(k)} \, \Gamma_{(k)}^\dagger = K_{(k)} \, . 
\eez
For $k=1,\ldots,n-1$, we have
\bez
	\Omega_{(k+1)} = \left( \begin{array}{cc}
		\Omega_{(k)} & B_{(k)}  \\
		B_{(k)}^\dagger & \omega_{(k+1)} \end{array} \right) \, ,                    
\eez
where
\bez
	&& B_{(k)} := \hat{\Gamma}_{(k)}^{-1} \, \Big( 
	(K_{1,k+1}, K_{2,k+1}, \ldots, K_{k,k+1})^T
	- \Omega_{(k)} (0,\ldots,0,1)^T \Big) \, , \\
	&& \omega_{(k+1)} := \frac{1}{\kappa} \Big( K_{k+1,k+1}
	- 2 \, \mathrm{Re}[(0,\ldots,0,1) B_{(k)}] \Big) \, ,  
\eez
and $\hat{\Gamma}_{(k)}$ is the Jordan block $\Gamma_{(k)}$ with $\gamma$ replaced by $\kappa = 2 \, \mathrm{Re}(\gamma)$.  
\hfill $\Box$
\end{proposition}

This proposition allows to recursively compute the solution of the Lyapunov equation (\ref{Lyap_K}) with a Jordan 
matrix $\Gamma_{(n)}$. The inverse of $\Omega_{(n)}$ can also be recursively computed via 
\be
\Omega_{(k+1)}^{-1} = \left( \begin{array}{cc}
	\Omega_{(k)}^{-1} + S_{\Omega_{(k)}}^{-1} \Omega_{(k)}^{-1} B_{(k)} B_{(k)}^\dagger \Omega_{(k)}^{-1}
	& - S_{\Omega_{(k)}}^{-1} \Omega_{(k)}^{-1} B_{(k)}  \\
	- S_{\Omega_{(k)}}^{-1} B_{(k)}^\dagger \Omega_{(k)}^{-1} &  S_{\Omega_{(k)}}^{-1} 
\end{array} \right) \, ,   \label{Omega_inverse}
\ee
with the scalar Schur complement (a quasi-determinant) 
\bez
S_{\Omega_{(k)}} = \omega_{(k+1)} - B_{(k)}^\dagger \Omega_{(k)}^{-1} B_{(k)} \, .
\eez

\begin{example}
	\label{ex:n=2Jordan_Lyapunov_sol}
For $n=2$, we obtain
\bez
	\Omega_{(2)} = \frac{1}{ \kappa } \left( \begin{array}{cc}
		K_{11} & K_{12} - \kappa^{-1} K_{11}  \\
		K_{21} - \kappa^{-1} K_{11} &
		K_{22} - \kappa^{-1} (K_{12} + K_{21}) + 2 \kappa^{-2} K_{11} 
	\end{array} \right)  \, .
\eez
Its determinant is
\be
	\det(\Omega_{(2)}) = \kappa^{-4} \, K_{11}^2  
	+\kappa^{-2} \, \det(K)  \, . \label{detOm(2)}
\ee
\hfill $\Box$
\end{example}

\begin{proposition}
Let $\Gamma$ be the $n \times n$ lower triangular Jordan block 
with imaginary eigenvalue $\gamma$.	Then the Lyapunov equation
(\ref{Lyap_K}) is equivalent to
\bez
&& K_{11} = 0 \, , \\
&& \Omega_{i-1,j} = K_{i1} \qquad i=2,3,\ldots,n \, , \\ 
&& \Omega_{1,j-1} = K_{1j} \qquad i=2,3,\ldots,n \, , \\ 
&& \Omega_{i-1,j} + \Omega_{i,j-1} = K_{ij} \qquad i,j = 2,3,\ldots,n \, . 
\eez
We note that $\Omega_{nn}$ drops out of the Lyapunov equation.	
\end{proposition}
\begin{proof}	
Expressed in components, the Lyapunov equation reads
\bez
   \sum_{k=1}^n \Big( \Gamma_{ik} \Omega_{kj} + \Omega_{ik} \Gamma^\dagger_{kj} \Big) 
   = \Gamma_{ii} \Omega_{ij} + \Gamma_{i,i-1} \Omega_{i-1,j} 
   + \Omega_{ij} \Gamma^\dagger_{jj} + \Omega_{i,j-1} \Gamma^\dagger_{j-1,j} = K_{ij} \, ,
\eez
where $i,j=1,\ldots,n$. This is equivalent to
\bez
  && (\gamma+\gamma^\ast) \, \Omega_{11} = K_{11} \, , \\
  && (\gamma+\gamma^\ast) \, \Omega_{i1} + \Omega_{i-1,j} = K_{i1} \qquad i=2,3,\ldots,n \, , \\ 
  && (\gamma+\gamma^\ast) \, \Omega_{1j} + \Omega_{1,j-1} = K_{1j} \qquad i=2,3,\ldots,n \, , \\ 
  && (\gamma+\gamma^\ast) \, \Omega_{ij} + \Omega_{i-1,j} + \Omega_{i,j-1} = K_{ij} \qquad i,j = 2,3,\ldots,n \, . 
\eez
If $\gamma$ is imaginary, this reduces to the equations in the 
proposition. 
\end{proof}

\begin{example}
	\label{ex:Lyap_n=2_Jordan_imag_gamma}
For the $2 \times 2$ Jordan block with imaginary eigenvalue, the Lyapunov equation imposes the constraints 
$K_{11} = 0$, $K_{21} = K_{12}$, 
and only determines part of $\Omega$ via
$\Omega_{11} = K_{12}$ and $\Omega_{12} + \Omega_{21} = K_{22}$.
\hfill $\Box$
\end{example}

\section{Special examples from the class of solutions in 
	Example~\ref{ex:special_pw_breather} }
\label{app:Ex6.2_special}	
Let us choose $\alpha = c_0 = \gamma = \|A\| = 1$ in the class of solutions of the two-component vector FL equation in Example~\ref{ex:special_pw_breather}. Then we have $w=1-\imag$. If $c_1 = c_2 =1$, the solution is
\be
u_1' &=& \frac{(-1 + \imag) \, e^{(2 + \imag) \, t + \imag \, x} - 2 \imag \, e^{t + x} + 
	e^{\imag \, t + (2 + \imag \,) x} \, (1 - \imag + e^{2 t})}{(1 + \imag) \, e^{2 t} + 
	2 \, e^{(1 + \imag) (t + x)} + e^{2 x} \, (1 + \imag + e^{2 t})} \, , 
	\nonumber \\
u_2' &=& - \frac{2 \, e^{t + x} \,  (e^{(1 + \imag) \, t + \imag \, x} + \imag \, e^x)}{(1 + \imag) \, e^{2 t} + (1 + \imag) \, e^{2 x} + 2 \, e^{(1 + \imag) (t + x)} + e^{
		2 (t + x)}} \, .  \label{app_sol_c1=c2=1}
\ee
We note that the dark soliton condition in Example~\ref{ex:special_pw_breather}) is satisfied for the corresponding solution with $c_1=0$ and $c_2=1$ (which coincides with the $c_1=1,c_2=0$ solution with $w$ replaced by $-w$). It is given by
\be
u_1' = e^{\imag \, (t + x)} \, \Big( 1 -  \frac{ 2 \, e^{-x} }{e^x + (1 + \imag) \, e^{-x}} \Big) \, , \qquad
u_2' = - \frac{2 \, e^{\imag \, (t+x)}}{e^x + (1 + \imag) \, e^{-x}} \, .
   \label{app_sol_c1=0}
\ee
Plots of the absolute squares of the components of $u'$ are presented in Fig.~\ref{fig:2vFL_special_breather}, respectively Fig.~\ref{fig:2vFL_special_breather_c1=0}.

If $c_1=1$ and $c_2=0$, we obtain instead
\be
u_1' = e^{\imag \, (t+x)} \, \frac{1+e^{2t}-\imag}{1+e^{2t}+\imag}
\, , \qquad
u_2' = - \frac{2 \, \imag \, e^t}{1+e^{2t}+\imag} \, , \label{app_sol_c2=0} 
\ee
so that $|u_1'|=1$ and $u_2'$ represents a bright soliton. This 
solution has a special structure, see Appendix~\ref{app_very_simple_class}.

In both cases, there is no ``beating". In fact, with the chosen data, we have $\mathrm{Re}\big(\gamma \, (w^\ast-1)\big) = 0$, so that the velocity 
$v$, defined in Example~\ref{ex:special_pw_breather}, is infinite and the beating period $T'$ should then be expected to be zero. 

\begin{figure}[h]
	\begin{center}
		\includegraphics[scale=.39]{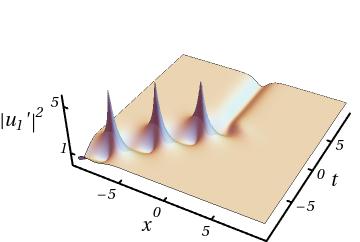} 
		\hspace{.1cm}
		\includegraphics[scale=.39]{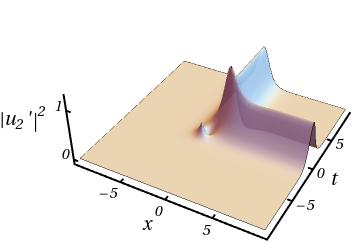}
		\hspace{.1cm}
		\includegraphics[scale=.42]{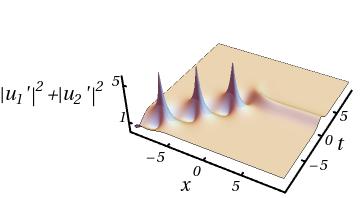}
		\parbox{15cm}{
			\caption{Plots of the absolute squares of the two components, and of $\|u'\|^2$, for the solution (\ref{app_sol_c1=c2=1}) of the two-component vector FL equation.
		Comparison with Fig.~\ref{fig:2vFL_special_breather_c1=0} shows that 
		the dark, respectively bright soliton part for $t>0$ appears since $c_2 \neq 0$. Correspondingly, the bright soliton part of $u_2'$ for $x>0$ shows up because $c_1 \neq 0$.  
				\label{fig:2vFL_special_breather} } 
		}
	\end{center}
\end{figure}
\begin{figure}[h]
	\begin{center}
		\includegraphics[scale=.3]{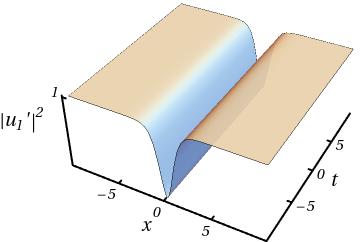} 
		\hspace{1cm}
		\includegraphics[scale=.3]{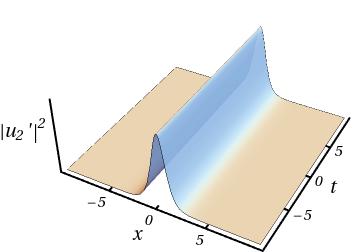}
		\parbox{15cm}{
			\caption{Plots of the absolute squares of the two components of the solution (\ref{app_sol_c1=0}). The data are the same as 
        used in Fig.~\ref{fig:2vFL_special_breather},        				
				except that now $c_1=0$. 
				\label{fig:2vFL_special_breather_c1=0} } 
		}
	\end{center}
\end{figure}

\vspace{.1cm}
	
The class of solutions in Example~\ref{ex:special_pw_breather}, with $c_2=0$, 
also includes cases where $u'$ is localized in $t$ (instead of $x$). An example is presented in Fig.~\ref{fig:2vFL_special_Akhmediev_dark-bright}.

\begin{figure}[h]
	\begin{center}
		\includegraphics[scale=.3]{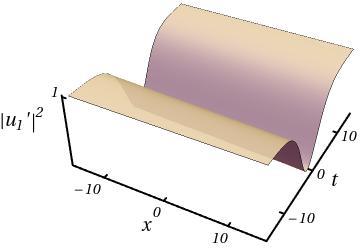} 
		\hspace{1cm}
		\includegraphics[scale=.3]{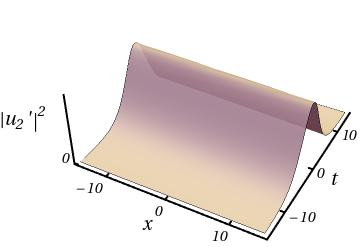}
		\parbox{15cm}{
	\caption{Plots of the absolute squares of the two components of a solution from the class in Example~\ref{ex:special_pw_breather}, with $c_2=0$. Here we chose $\alpha =1/4$, $c_0=c_1= \|A\| = \gamma = 1$, so that
		 $w = \frac{1}{4} \sqrt{15+4 \,\imag}$.  	 
				\label{fig:2vFL_special_Akhmediev_dark-bright} } 
		}
	\end{center}
\end{figure}

\section{An almost trivial class of solutions of the two-component vector FL equation}
\label{app_very_simple_class}	
Inserting the ansatz
\bez
     u_1(x,t) = e^{\imag \, \alpha_1 \, x} \, f_1(t) \, , \qquad
     u_2(x,t) = e^{\imag \, \alpha_2 \, x} \, f_2(t) \, , 
\eez	
with real constants $\alpha_1,\alpha_2$, and functions $f_\mu$, $\mu=1,2$, of $t$ only, reduces the two-component vector FL equation (with $\boldsymbol{\sigma} = \mathrm{diag}(1,1)$) to the ODE system 
\bez
 \alpha_1 \, f_{1t} = \imag \, \big( 2 \alpha_1 |f_1|^2 -1 + (\alpha_1+\alpha_2) |f_2|^2) \, f_1 \, , \qquad
 \alpha_2 f_{2t} = \imag \, \big( 2 \alpha_2 |f_2|^2 -1 + (\alpha_1+\alpha_2) |f_1|^2) \, f_2 \, . 
\eez 	
If $\alpha_1 \alpha_2 \neq 0$, this implies $|f_1| = a_1$ and $|f_2| = a_2$ with non-negative real constants $a_1,a_2$, so that the above system becomes equivalent to the two linear homogeneous ODEs
\bez
\alpha_1 \, f_{1t} = \imag \, \big( 2 \alpha_1 a_1^2 -1 + (\alpha_1+\alpha_2) a_2^2 ) \, f_1 \, , \qquad
\alpha_2 \, f_{2t} = \imag \, \big( 2 \alpha_2 a_2^2 -1 + (\alpha_1+\alpha_2) a_1^2 ) \, f_2 \, .
\eez  
If the brackets on the right hand sides are different from zero, we obtain
\bez
    f_1 = A_1 \, e^{\imag \, \big( 2 a_1^2 -\alpha_1^{-1} + (1+\alpha_1^{-1} \alpha_2) a_2^2 \big) \, t} \, , \qquad
    f_2 = A_2 \, e^{\imag \, \big( 2 a_2^2 -\alpha_2^{-1} + (1+\alpha_1 \alpha_2^{-1} ) a_1^2 \big) \, t} \, ,
\eez	
with complex constants $A_1,A_2$. (\ref{app_sol_c2=0}) has the form of 
our ansatz, but $|u_2'|$ is not constant. This apparent contradiction is resolved by noting that, in this example, we have $\alpha_2=0$ and $|f_1|^2 = \alpha_1^{-1}$, so that only $|f_1|$ is conserved and left and right hand sides of the ODE for $f_2$ vanish separately, hence $f_2$ can be \emph{any} function of $t$.

\section{Superposition of two solutions from the class 
	in Example~\ref{ex:special_pw_breather} }
\label{app:special_superpos}	
Let $(\gamma_i,\chi_1^{(i)},\chi_2^{(i)},\Xi^{(i)},\Omega^{(i)})$, $i=1,2$, with 
$\gamma_i \neq - \gamma_j^\ast$, $i,j=1,2$, 
determine $n=1$ solutions of the two-component FL equation, obtained via the Darboux transformation from the plane wave seed 
$u = (A_1 \, e^{\imag \, \varphi},0)$. Setting $\Gamma = \mathrm{diag}(\gamma_1,\gamma_2)$ and  
\bez
\chi_1 = \left( \begin{array}{c} \chi_1^{(1)} \\ \chi_1^{(2)} \end{array} \right) \, , \quad
\chi_2 = \left( \begin{array}{c} \chi_2^{(1)} \\ \chi_2^{(2)} \end{array} \right) \, , \quad 
\Xi = \left( \begin{array}{c} \Xi^{(1)} \\ \Xi^{(2)} \end{array} \right) \, , \quad
\Omega = \left( \begin{array}{cc} \Omega^{(1)} & \Omega^{(12)} \\ \Omega^{(21)} & \Omega^{(2)}
\end{array} \right) \, , 
\eez  
where
\bez
\Omega^{(12)} &=& \frac{1}{\gamma_1 + \gamma_2^\ast} \Big(
\frac{\imag \, \gamma_2^\ast}{\alpha^2 \|A\|^4} \Xi^{(1)} \Xi^{(2) \ast} 
+ \chi_1^{(1)} \chi_1^{(2) \ast} + \chi_2^{(1)} \chi_2^{(2) \ast}
\Big) \, , \\
\Omega^{(21)} &=& \frac{1}{\gamma_2 + \gamma_1^\ast} \Big(\frac{\imag \, \gamma_1^\ast}{\alpha^2 \|A\|^4} \Xi^{(2)} \Xi^{(1) \ast} 
+ \chi_1^{(2)} \chi_1^{(1) \ast} + \chi_2^{(2)} \chi_2^{(1) \ast}
\Big) \, ,
\eez
then 
\bez
u' = e^{\imag \, \varphi} \, \boldsymbol{U}_0 \left( \begin{array}{r} \|A\| + (\alpha \, \|A\|^2)^{-1} \chi_1^\dagger 
	\, \Omega^{-1} \, \Xi \\
	(\alpha \, \|A\|^2)^{-1} \chi_2^\dagger \, 
	\Omega^{-1} \, \Xi \end{array} \right) 
\eez
is also a solution. Here $\boldsymbol{U}_0$ is the constant unitary matrix in (\ref{2vFL_U}). An example,  where $\boldsymbol{U}_0$ is the identity matrix, is displayed in Fig.~\ref{fig:2vFL_spw_superpos}.

\begin{figure}[h]
	\begin{center}
		\includegraphics[scale=.4]{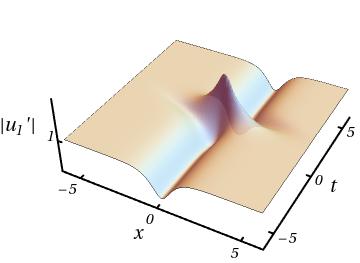} 
		\hspace{1cm}
		\includegraphics[scale=.4]{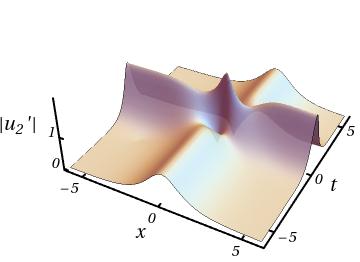} 
		\parbox{15cm}{
			\caption{Plots of the absolute values of the two components of a superposition of two solution from the class in Example~\ref{ex:special_pw_breather} with 
				$\alpha = A_1 =1$, $A_2=0$ and $c_0=1$. 
				For the first of the constituting solutions we chose $\gamma=1$, $c_1=0$, $c_2=1$, 
				for the second $\gamma=2$, $c_1=1$, $c_2=0$.  
				\label{fig:2vFL_spw_superpos} } 
		}
	\end{center} 
\end{figure} 

\end{appendix}

\end{document}